\newtheorem{Lemma}{Lemma}
\newtheorem{Theorem}{Theorem}
\newtheorem{Definition}{Definition}
\newtheorem{Corollary}{Corollary}
\newcommand{\eq}[1]{(\ref{eq:#1})}
\newcommand{\thm}[1]{\hyperref[thm:#1]{Theorem~\ref*{thm:#1}}}
\newcommand{\cor}[1]{\hyperref[cor:#1]{Corollary~\ref*{cor:#1}}}
\newcommand{\defn}[1]{\hyperref[defn:#1]{Definition~\ref*{defn:#1}}}
\newcommand{\lem}[1]{\hyperref[lem:#1]{Lemma~\ref*{lem:#1}}}
\newcommand{\prop}[1]{\hyperref[prop:#1]{Proposition~\ref*{prop:#1}}}
\newcommand{\fig}[1]{\hyperref[fig:#1]{Figure~\ref*{fig:#1}}}
\newcommand{\tab}[1]{\hyperref[tab:#1]{Table~\ref*{tab:#1}}}
\newcommand{\algo}[1]{\hyperref[algo:#1]{Algorithm~\ref*{algo:#1}}}
\newcommand{\fnote}[1]{\hyperref[fnote:#1]{Footnote~\ref*{fnote:#1}}}
\newcommand{\fac}[1]{\hyperref[fac:#1]{Fact~\ref*{fac:#1}}}
\newcommand{\lin}[1]{\hyperref[lin:#1]{Line~\ref*{lin:#1}}}
\def\trans{^{\top}}
\newcommand{\range}[1]{[#1]}
\newcommand{\A}{\mathcal{A}}
\renewcommand{\H}{\mathcal{H}}
\newcommand{\C}{\mathbb{C}}
\newcommand{\Sright}{S_{\mathrm{right}}}
\newcommand{\Smiddle}{S_{\mathrm{middle}}}
\newcommand{\Sleft}{S_{\mathrm{left}}}
\newcommand{\psucc}{p_{\mathrm{succ}}}
\newcommand{\gap}{\mathrm{gap}}
\newcommand{\norm}[1]{\|#1\|}
\newcommand{\abs}[1]{\left|#1\right|}
\DeclareMathOperator{\coin}{coin}
\DeclareMathOperator{\AEst}{\mathsf{AE}}
\DeclareMathOperator{\GAE}{\mathsf{GAE}}
\DeclareMathOperator{\Amplify}{\mathsf{Amplify}}
\DeclareMathOperator{\Estimate}{\mathsf{Estimate}}
\DeclareMathOperator{\Locate}{\mathsf{Locate}}
\DeclareMathOperator{\Shrink}{\mathsf{Shrink}}
\DeclareMathOperator{\BestArm}{\mathsf{BestArm}}
\DeclareMathOperator{\NOT}{\mathsf{NOT}}
\DeclareMathOperator{\SWAP}{\mathsf{SWAP}}
\DeclareMathOperator{\poly}{poly}
\DeclarePairedDelimiter\ceil{\lceil}{\rceil}
\title{Quantum Exploration Algorithms for Multi-Armed Bandits}
\author{
  Daochen Wang\thanks{Equal contribution.}\textsuperscript{\rm 1,2}\quad Xuchen You$^{*}$\textsuperscript{\rm 1,3}\quad Tongyang Li\thanks{Corresponding author. Email: tongyang@mit.edu}\textsuperscript{\rm 1,3,4}\quad Andrew M. Childs\textsuperscript{\rm 1,3}\\
}
\begin{document}

\maketitle

\begin{abstract}
Identifying the best arm of a multi-armed bandit is a central problem in bandit optimization. We study a quantum computational version of this problem with coherent oracle access to states encoding the reward probabilities of each arm as quantum amplitudes. Specifically, we show that we can find the best arm with fixed confidence using $\tilde{O}\bigl(\sqrt{\sum_{i=2}^n\Delta^{\smash{-2}}_i}\bigr)$ quantum queries, where $\Delta_{i}$ represents the difference between the mean reward of the best arm and the $i^\text{th}$-best arm. This algorithm, based on variable-time amplitude amplification and estimation, gives a quadratic speedup compared to the best possible classical result. We also prove a matching quantum lower bound (up to poly-logarithmic factors).
\end{abstract}

%%%%%%%%%%%%%%%%%%%%%%%%%%%%%%%%%%%%%%%%%%%%%%%%%%%%%%%%%%%%%%%%%%%%%%%%%%%%%%

\section{Introduction}
The multi-armed bandit (MAB) model is one of the most fundamental settings in reinforcement learning. This simple scenario captures crucial issues such as the tradeoff between exploration and exploitation. Furthermore, it has wide applications to areas including operations research, mechanism design, and statistics.

A basic challenge about multi-armed bandits is the problem of \emph{best-arm identification}, where the goal is to efficiently identify the arm with the largest expected reward. This problem captures a common difficulty in practical scenarios, where at unit cost, only partial information about the system of interest can be obtained. A real-world example is a recommendation system, where the goal is to find appealing items for users. For each recommendation, only feedback on the recommended item is obtained. In the context of machine learning, best-arm identification can be viewed as a high-level abstraction and core component of active learning, where the goal is to minimize the uncertainty of an underlying concept, and each step only reveals the label of the data point being queried.

Quantum computing is a promising technology with potential applications to diverse areas including cryptanalysis, optimization, and simulation of quantum physics. Quantum computing devices have recently been demonstrated to experimentally outperform classical computers on a specific sampling task~\cite{arute2019supremacy}. While noise limits the current practical usefulness of quantum computers, they can in principle be made fault tolerant and thus capable of executing a wide variety of algorithms. It is therefore of significant interest to understand quantum algorithms from a theoretical perspective to anticipate future applications. In particular, there has been increasing interest in \emph{quantum machine learning} (see for example the surveys by~\citealt{biamonte2017quantum,schuld2015introduction,arunachalam2017guest,dunjko2018machine}). In this paper, we study best-arm identification in multi-armed bandits, establishing quantum speedup.

%================================================================
\paragraph{Problem setup.}
We work in a standard multi-armed bandit setting~\cite{pac_bandits_evendar_mansour} in which the MAB has $n$ arms, where arm $i \in [n] \coloneqq \{1,\ldots,n\}$ is a Bernoulli random variable taking value $1$ with probability $p_i$ and value $0$ with probability $1-p_i$. Each arm can therefore be regarded as a coin with \emph{bias} $p_i$. As our algorithms and lower bounds are symmetric with respect to the arms, we assume without loss of generality that $p_1\geq\cdots\geq p_n$, and denote $\Delta_i \coloneqq p_1-p_i$ for all $i\in\{2,\ldots,n\}$. We further assume that $p_1>p_2$, i.e., the best arm is unique. Given a parameter $\delta\in(0,1)$, our goal is to use as few queries as possible to determine the best arm with probability $\geq1-\delta$. This is known as the \emph{fixed-confidence setting}. We primarily characterize complexity in terms of the parameter
\begin{equation}
    H \coloneqq \,{\sum_{i=2}^n\ \frac{1}{\Delta^2_i}}
\end{equation}
which arises in the analysis of classical MAB algorithms (as discussed below).

We consider a quantum version of best-arm identification in which we can access the arms \emph{coherently}. This means we have access to a quantum oracle $\mathcal{O}$ that acts as
\begin{equation}\label{eq:quantum-bandit-defn}
\begin{aligned}
\mathcal{O}&\colon \ket{i}_I\ket{0}_B\ket{0}_J\\
&\quad\mapsto \ket{i}_I(\sqrt{\vphantom{1-}p_i}\ket{1}_B\ket{v_i}_J+\sqrt{1-p_i}\ket{0}_B\ket{u_i}_J),
\end{aligned}
\end{equation}
 where $\ket{v_i}$ and $\ket{u_i}$ are arbitrary states, for all $i\in[n]$. We have used standard Dirac notation which we review in the Preliminaries section. Register $I$ is the ``index'' register with $n$ states that correspond to the $n$ arms. Register $B$ is the single-qubit ``bandit'' register with two states, $\ket{1}$ corresponding to a reward and $\ket{0}$ corresponding to no reward. Register $J$ is a multi-qubit ``junk'' register. For convenience, we omit register labels when this causes no confusion. Compared to pulling an arm classically---which can be implemented by measuring the bandit register---the quantum oracle allows access to different arms in superposition, a necessary feature for quantum speedup. In real-world applications, we usually have junk when instantiating our oracle (see below). When deriving our results however, we will assume there is no junk (i.e., we set $\ket{v_i}=\ket{u_i}=1$ for all $i\in[n]$ in \eq{quantum-bandit-defn}). This is without loss of generality as the algorithm we construct is insensitive to junk.

Previous work on quantum algorithms for clustering~\cite{kerenidis2019qmeans, wiebe2015quantum} and reinforcement learning~\cite{dunjko2016quantum,dunjko2018machine} has discussed how to instantiate $\mathcal{O}$. In clustering, $\mathcal{O}$ is created using the $\SWAP$ test where for each $i$, $p_i$ encodes the distance between some fixed vector and the $i^\text{th}$ vector in some collection. Our algorithm can be used to speed up the algorithms of~\citet{kerenidis2019qmeans} and~\citet{wiebe2015quantum}. In reinforcement learning, $\mathcal{O}$ naturally appears in stochastic agent environments; for instance, $\mathcal{O}$ can be viewed as a special case of the oracle in~\citet{dunjko2016quantum} for a Markov decision problem (MDP) of epoch length $1$ and state set $\{0,1\}$, where the goal of the agent is to reach the state $1$.

As a concrete example, consider a classical Monte Carlo strategy\footnote{This is Monte Carlo tree search without tree expansion.}: at a given position, evaluate the quality of a next move $i$ by uniformly randomly playing out games $x\in X(i)$, where $X(i)$ is the set of valid games from move $i$ onwards, and querying a computer program $f$ that computes a bit $f(i,x)\in\{0,1\}$ indicating if game $x$ is won $(1)$ or lost $(0)$.
In the classical case, we obtain one sample of win or loss using one query to $f$. In the quantum case, we can also instantiate one query to the quantum oracle in Eq.~\eqref{eq:quantum-bandit-defn} using just one query to $f$. To do this, we apply the circuit for $f$, made reversible in the usual way~\cite[Sec.~1.4.1]{nielsen2000book}, on the quantum state corresponding to uniformly random play as follows:
\begin{equation}
\begin{aligned}
&\ket{i}\ket{0}\frac{1}{\sqrt{|X(i)|}}\sum_{x\in X(i)}\ket{x} \\
\overset{f}{\mapsto}&
\ket{i}\sum_{x\in X(i)}\frac{1}{\sqrt{|X(i)|}}\ket{f(i,x)}\ket{x} \\
=&\ket{i}(\sqrt{\vphantom{1-}p_i}\ket{1}\ket{u_i} + \sqrt{1-p_i}\ket{0}\ket{v_i}),
\end{aligned}
\end{equation}
where $\ket{u_i}$ and $\ket{v_i}$ are some states, and $p_i$ is the empirical probability that move $i$ leads to a win. Our quantum algorithm then uses quadratically fewer calls to $f$ compared with classical Monte Carlo search to find the best next move.

We stress that we do not need to know the $p_i$s to instantiate the quantum oracle above. We also remark that our algorithm does not apply to every MAB situation. For example, in clinical trials to identify the best drug, we cannot instantiate the quantum oracle because human participants, unlike computer programs, cannot be queried in superposition.

Our algorithm can also be adapted to work when the reward distributions are promised to have bounded variance (for example, if they are sub-Gaussian). The adaptation essentially follows by replacing amplitude estimation (introduced in the Preliminaries section) with quantum mean estimation~\citep{montanaro2015quantum}, which works on any distribution with bounded variance. We remark that the situation is different for the other main type of bandits: adversarial bandits. Studies on adversarial bandits are mainly focused on regret minimization and a quantum analogue first requires a proper notion of regret which we are unsure how to even define.

%================================================================
\paragraph{Contributions.}
In this paper, we give a comprehensive study of best-arm identification using quantum algorithms. Specifically, we obtain the following main result:
\begin{Theorem}\label{thm:main-confidence}
Given a multi-armed bandit oracle $\mathcal{O}$ and confidence parameter $\delta\in(0,1)$, there exists a quantum algorithm that, with probability $\geq1-\delta$, outputs the best arm using
$\tilde{O}\bigl(\sqrt{H}\bigr)$
queries to $\mathcal{O}$. Moreover, this query complexity is optimal up to poly-logarithmic factors in $n$,  $\delta$, and $\Delta_2$.
\end{Theorem}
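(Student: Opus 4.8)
The plan is to prove the upper bound and the lower bound separately, since they require quite different techniques.

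\medskip
\noindent\textbf{Upper bound.} The goal is an algorithm using $\tilde O(\sqrt{H})$ queries where $H = \sum_{i=2}^n \Delta_i^{-2}$. Classically, the near-optimal strategy is the successive-elimination / median-elimination family, where the key primitive is: given two arms, decide which has the larger bias to confidence $\delta$ using $O(\Delta^{-2}\log(1/\delta))$ pulls, where $\Delta$ is the bias gap. The natural quantum approach is to replace this estimation primitive with \emph{amplitude estimation}: the oracle $\mathcal{O}$ prepares a state with amplitude $\sqrt{p_i}$ on $\ket{1}_B$, so amplitude estimation recovers $p_i$ to additive error $\varepsilon$ using $\tilde O(1/\varepsilon)$ queries, a quadratic improvement over the classical $1/\varepsilon^2$. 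Plugging this into a successive-elimination scheme would already give something like $\tilde O(\sum_i \Delta_i^{-1})$, which is \emph{not} the same as $\sqrt{\sum_i \Delta_i^{-2}}$ (indeed $\sqrt{\sum \Delta_i^{-2}} \le \sum \Delta_i^{-1}$, so this is too weak). To get the claimed bound one needs the confidence amplification and the comparison across many arms to be done coherently and simultaneously rather than arm-by-arm; this is exactly the role of \emph{variable-time amplitude amplification and estimation}, which the abstract flags as the core tool. Concretely, I would build a procedure $\Estimate$ that estimates the biases of all arms in superposition, then a $\Shrink$/$\Locate$ routine that, using variable-time amplitude amplification, identifies and eliminates suboptimal arms with a total query cost controlled by the ``variable stopping time'' $\sqrt{\sum_i \Delta_i^{-2}}$ rather than the naive sum — arms with large gaps get eliminated quickly (short branches) and contribute little, while arms with small gaps need more queries but there are correspondingly few of them near the top. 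The final $\BestArm$ algorithm then recurses or iterates $\Shrink$ a logarithmic number of times, each with confidence boosted to $\delta/\poly(n,\log)$, and a union bound gives overall success probability $\ge 1-\delta$.

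\medskip
\noindent\textbf{Lower bound.} For optimality I would prove that any quantum algorithm succeeding with probability $\ge 1-\delta$ must make $\tilde\Omega(\sqrt{H})$ queries. The standard tool is the \emph{quantum adversary method} or, more conveniently here, a hybrid / information-theoretic argument combined with the known quantum query lower bound for distinguishing biased coins (amplitude estimation is tight: distinguishing bias $p$ from $p+\varepsilon$ needs $\Omega(1/\varepsilon)$ queries, by e.g.\ the polynomial method or a reduction from $\mathrm{PARITY}$/counting lower bounds). I would construct a family of hard instances: a base instance with gaps $\Delta_2,\ldots,\Delta_n$, and for each $i$ a perturbed instance in which arm $i$'s bias is raised just past $p_1$ so that the identity of the best arm flips. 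Detecting this change requires, intuitively, $\Omega(1/\Delta_i)$ queries ``spent on arm $i$''; a weighting/adversary argument that balances the $n-1$ required detections against a single query budget yields $\Omega(\sqrt{\sum_i \Delta_i^{-2}})$ by the same Cauchy–Schwarz-type bookkeeping that makes the upper bound $\sqrt H$ rather than $\sum \Delta_i^{-1}$. One must be careful that the oracle form \eqref{eq:quantum-bandit-defn} allows arbitrary junk states $\ket{u_i},\ket{v_i}$; since the stated reductions are insensitive to junk one can fix $\ket{u_i}=\ket{v_i}$ without loss of generality, matching the convention adopted in the problem setup.

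\medskip
\noindent\textbf{Main obstacle.} The routine estimation and union-bound steps are standard; the real difficulty is obtaining the \emph{correct} exponent — getting $\sqrt{\sum_i \Delta_i^{-2}}$ instead of $\sum_i \Delta_i^{-1}$ — on both sides. On the upper-bound side this forces a careful use of variable-time amplitude amplification with a tight analysis of how per-arm stopping times aggregate (the $\ell_2$ rather than $\ell_1$ combination), including controlling the overhead from boosting confidence across $\Theta(\log n)$ rounds so it only costs poly-logarithmic factors. On the lower-bound side it requires an adversary/weight assignment delicate enough to reproduce the same $\ell_2$ aggregation, rather than the weaker $\Omega(\max_i \Delta_i^{-1})$ or $\Omega(\sum_i \Delta_i^{-1})$ that naive arguments give. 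I would therefore expect the bulk of the technical work, and the place where subtle constant- and log-factor issues hide, to be in the variable-time amplitude amplification analysis and in the matching adversary construction.
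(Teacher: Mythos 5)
Your plan follows the same route as the paper -- VTAA/VTAE for the upper bound, an adversary argument over one-arm perturbations for the lower bound -- but it leaves genuine gaps at exactly the load-bearing points, which you yourself defer to ``the bulk of the technical work.'' On the upper bound, the missing idea is the concrete variable-time algorithm whose stopping times realize the $\ell_2$ aggregation: the paper's $\A$ runs gapped amplitude estimation against a threshold $l_1$ at geometrically refined precisions $\epsilon_j=2^{-j}$, halting the branch of arm $i$ once $p_i$ is resolved to lie below the threshold, so that arm $i$ stops after $\tilde{O}(1/(l_1-p_i))$ queries; this gives $t_{\mathrm{avg}}^2 \propto \frac{1}{n}\bigl(\frac{\abs{\Sright}}{\Delta^2}+\sum_{i\in S_{\mathrm{lm}}}(l_1-p_i)^{-2}\bigr)$ and $\psucc\approx\abs{\Sright}/n$, whence $t_{\mathrm{avg}}/\sqrt{\psucc}=\tilde{O}(\sqrt{H})$ once the threshold lies strictly between $p_2$ and $p_1$. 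Equally missing is how to obtain such a threshold: you cannot draw classical samples without destroying the speedup, and your description of ``$\Estimate$ estimating the biases of all arms in superposition'' followed by iterative elimination is not what works. In the paper $\Estimate$ is only a counting primitive (it estimates the fraction of arms above a threshold, with arms in a middle band counted partially), $\Locate$/$\Shrink$ perform a binary search on $p_1$ and $p_2$ using two staggered intervals (the Lin--Tong trick) precisely to cope with the partially-counted middle band -- setting $l_2=l_1$ to kill the band would blow up the cost -- and all sub-optimal arms are then removed in a single $\Amplify$ call, not by rounds of elimination. Without these two constructions the claim that the cost is ``controlled by the variable stopping time $\sqrt{\sum_i\Delta_i^{-2}}$'' is an assertion, not a proof.

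On the lower bound your instance family (raise arm $x$'s bias just past $p_1$) and the Cauchy--Schwarz bookkeeping are the paper's, but the sketch stops short of the two steps that make it work: the choice of adversary weights $w_x=1/\Delta_x'^2$ in the progress measure $s_k=\sum_{x>1}\Delta_x'^{-2}\braket{\psi_x^{(k)}|\psi_1^{(k)}}$, and the per-query bound $\abs{s_{k+1}-s_k}=O\bigl(\sqrt{\sum_{x>1}\Delta_x'^{-2}}\bigr)$, which requires bounding the entries of $A_1^{\dagger}A_x-\mathbb{I}$ by $O(\Delta_x')$; the off-diagonal bound $\abs{v_x}\le \Delta_x'/c(p-\eta)$ is where the hypothesis $p_i\in[p,1-p]$ enters, a restriction your sketch never invokes but on which the constant (and hence the theorem as stated) depends. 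The intuition that ``detecting the change requires $\Omega(1/\Delta_i)$ queries spent on arm $i$'' is not something the quantum argument can formalize, since queries are made in superposition; it is exactly the $\ell_2$ adversary bookkeeping, not a per-arm budget, that yields $\Omega(\sqrt{H})$ rather than $\Omega(\sum_i\Delta_i^{-1})$ (which would be false).
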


This represents a quadratic quantum speedup over what is possible classically. The speedup essentially derives from Grover's search algorithm~\cite{grover1996fast}, where a marker oracle is used to approximately ``rotate'' a uniform initial state to the marked state. One way to understand the quadratic speedup is to observe that each rotation step, making one query to the oracle, increases the amplitude of the marked state by $\Omega(1/\sqrt{n})$. This is possible since quantum computation linearly manipulates amplitudes, which are square roots of probabilities.

However, to establish \thm{main-confidence} we use more sophisticated machinery that extends Grover's algorithm, namely variable-time amplitude amplification (VTAA)~\cite{ambainis2012VTAA,childs2015quantum} and estimation (VTAE)~\cite{chakraborty2018power}. We apply VTAA and VTAE on a variable-time quantum algorithm $\A$ that we construct. $\A$ outputs a state with labeled ``good'' and ``bad'' parts. Using that label, VTAA removes the bad part so that only the good part remains, and VTAE estimates the proportion of the good part. In our application, the good part is eventually the best-arm state.

We emphasize that our quantum algorithm, like classical ones~\cite{pac_bandits_evendar_mansour, gabillon2012best, jamieson2014lil, karnin2013almost, mannor2004sample}, does not require any prior knowledge about the $p_i$s.

Given knowledge of $p_1$ and $p_2$, our quantum algorithm is conceptually related to the classical successive elimination (SE) algorithm~\cite{pac_bandits_evendar_mansour}. Namely, we use that knowledge to help eliminate sub-optimal arms $i$ by checking whether $p_i < (p_1+p_2)/2$, say. The quantum quadratic speedup arises because we can check this ``in superposition'' across the different arms. For intuition only, checking in superposition can be thought of as a form of checking in parallel. We stress however that while it does not make sense to compare the parallel (classical) sample complexity of best-arm identification with its usual (classical) sample complexity, it does makes sense to compare the latter with the quantum query complexity. We also stress that the similarity of our quantum algorithm to SE, given knowledge of $p_1$ and $p_2$, ends at the conceptual level. Technically, our algorithm makes the SE concept work by first marking all sub-optimal arms and then rotating towards the unmarked best arm in quantum state space via a careful application of VTAA. This has no classical analogue.

It is classically easy to remove any assumed knowledge of $p_1$ and $p_2$ because classical samples from a multi-armed bandit contain information about their values. Quantumly however, we cannot simply ask our quantum multi-armed bandit to supply \emph{classical} samples as that would prevent interference, eliminating any quantum speedup. Therefore, we need to do something conceptually different in the quantum case. We construct another quantum algorithm whose goal is to estimate both $p_1$ and $p_2$ to precision $\Theta(\Delta_2)$ using $\tilde{O}(\sqrt{H})$ quantum queries. For a given test point $l$, VTAE (roughly) gives us the ability to \emph{count} the number of arms $i$ with $p_i>l$, and thus allows us to perform binary search to find $p_1$ and $p_2$.

%================================================================
\paragraph{Related work.}
Classically, a naive algorithm for best-arm identification is to simply sample each arm the same number of times and output the arm with the best empirical bias~\cite{pac_bandits_evendar_mansour}. This algorithm has complexity $O(\frac{n}{\Delta_2^2}\log(\frac{n}{\delta}))$ but is sub-optimal for most multi-armed bandit instances. Therefore, classical research on best-arm identification~\cite{pac_bandits_evendar_mansour, gabillon2012best, jamieson2014lil, karnin2013almost, mannor2004sample} has primarily focused on proving bounds of the form $\tilde{O}(H)$ (recall that $H \coloneqq \sum_{i=2}^n\frac{1}{\Delta^2_i}$), which can be shown to be almost tight for every instance. The first work to provide an algorithm with such complexity is~\citet{pac_bandits_evendar_mansour}, giving $O(H\log(\frac{n}{\delta}) + \sum_{i=2}^n\Delta^{-2}_i\log(\Delta^{-1}_i))$. This was further improved to $O\bigl(H\log(\frac{1}{\delta}) + \sum_{i=2}^n\Delta^{-2}_i\log\log(\Delta^{-1}_i)\bigr)$
by~\citet{gabillon2012best, jamieson2014lil, karnin2013almost}, which is almost optimal except for the additive term of $\sum_{i=2}^n\Delta^{-2}_i\log\log(\Delta^{-1}_i)$~\cite{mannor2004sample}. More recent work~\cite{chen2015optimal, chen2016towards} has focused on bringing down even this additive term by tightening both the upper and lower bounds, leaving behind a gap only of the order $\sum_{i=2}^n\Delta_i^{-2}\log\log(\min\{n,\Delta^{-1}_i\})$.

Prior work on quantum machine learning has focused primarily on supervised~\cite{lloyd2013quantum,lloyd2013supervised,rebentrost2014QSVM,li2019classification} and unsupervised learning~\cite{lloyd2013supervised,wiebe2015quantum,amin2018quantum,kerenidis2019qmeans}. \citet{dunjko2017advances,dunjko2017exponential,jerbi2019framework} gave quantum algorithms for general reinforcement learning with provable guarantees, but do not consider the best-arm identification problem. The only directly comparable previous work on quantum algorithms for best-arm identification that we are aware of are~\citet{casale2020quantum} and~\citet{wiebe2015quantum}.\footnote{\citet{wiebe2015quantum} is not framed as solving best-arm identification, but is partly concerned with this problem.} By applying Grover's algorithm,~\citet{casale2020quantum} shows that quantum computers can find the best arm with confidence $p_1/ \sum_{i=1}^n p_i$ quadratically faster than classical ones. However,~\citet{casale2020quantum} does not show how to find the best arm with a given \emph{fixed} confidence, which is the standard requirement. In fact, there is a relatively simple quantum algorithm, analogous to the naive classical algorithm, that can achieve arbitrary confidence with quadratic speedup in terms of $n/\Delta_2^2$. This algorithm, which appears in Fig.~3 of~\citet{wiebe2015quantum}, works by using the quantum minimum finding of~\citet{durr1996quantum} on top of quantum amplitude estimation~\cite{amplitude_estimation}. As in the classical case, we show that this simple quantum algorithm is suboptimal for most multi-armed bandit instances. Specifically, we show that a quantum algorithm can achieve quadratic speedup in terms of the parameter $H$.

%%%%%%%%%%%%%%%%%%%%%%%%%%%%%%%%%%%%%%%%%%%%%%%%%%%%%%%%%%%%%%%%%%%%%%%%%%%%%%

\section{Preliminaries}
\paragraph{Definitions and notations.}
    Quantum computing is naturally formulated in terms of linear algebra.
    An $n$-dimensional \emph{quantum state} is a unit vector in the complex Hilbert space $\C^{n}$, i.e., $\vec{x}=(x_{1},\ldots,x_{n})\trans$ such that $\sum_{i=1}^{n}|x_{i}|^{2}=1$. Such a column vector $\vec{x}$ is written in \emph{Dirac notation} as $\ket{x}$ and called a ``ket''. The complex conjugate transpose of $\ket{x}$ is written $\bra{x}$ and called a ``bra'', i.e., $\bra{x} \coloneqq \vec{x}^\dagger$. The reason for the names is because the combination of a bra and a ket is a inner product bracket: $\braket{x|y}\coloneqq \bra{x}\ket{y} = \vec{x}^{\dagger}\vec{y} =\langle x,y\rangle\in \mathbb{C}$.

    The \emph{computational basis} of $\C^{n}$ is the set of vectors $\{\vec{e}_{1},\ldots,\vec{e}_{n}\}$, where $\vec{e}_{i}=(0,\ldots,1,\ldots,0)\trans$ is a one-hot column vector with $1$ in the $i^{\text{th}}$ coordinate. In Dirac notation, it is common to reserve symbols $\ket{i} \coloneqq \vec{e}_{i}$ and $\bra{i} \coloneqq \vec{e}_{i}^\dagger = \vec{e}_{i}\trans$. Then, for example, $\ket{x} = \sum_{i=1}^nx_i\ket{i}$ and $\bra{x} = \sum_{i=1}^nx_i^*\bra{i}$.

    The \emph{tensor product} of quantum states is their Kronecker product: if $\ket{x}\in\C^{n_{1}}$ and $\ket{y}\in\C^{n_{2}}$, then
\begin{align}
\ket{x}\ket{y} &\coloneqq \ket{x}\otimes\ket{y} \\
&\coloneqq (x_{1}y_{1},x_{1}y_{2},\ldots,x_{n_{1}}y_{n_{2}})\trans\in\C^{n_{1}}\otimes\C^{n_{2}}.
\end{align}

A quantum algorithm is a sequence of unitary matrices, i.e., a linear transformation $U$ such that $U^{\dagger}=U^{-1}$.

For any $p \in [0,1]$, we define the  \emph{coin state} in $\C^2$ as
\begin{align}
\ket{\coin{p}} \coloneqq \sqrt{\vphantom{1-}p}\ket{1} + \sqrt{1-p}\ket{0} = (\sqrt{1-p}, \sqrt{\vphantom{1-}p})\trans.
\end{align}
Measuring $\ket{\coin{p}}$ in the computational basis gives $1$ with probability $p$, hence the name.

\paragraph{Quantum multi-arm bandit oracle.} Recall the quantum multi-armed bandit oracle defined in \eq{quantum-bandit-defn}. The arms are accessed in \emph{superposition} by applying the unitary oracle $\mathcal{O}$ on a state $\ket{x}_I\ket{0}_B$ in the joint register of $I$ and $B$. This results in the output quantum state
\begin{align}\label{eq:quantum-mab}
\mathcal{O}\ket{x}_{I}\ket{0}_{B}=\sum_{i=1}^{n}x_{i}\ket{i}_{I}\ket{\coin{p_i}}_B
\end{align}
(recall that we assume there is no junk).
A classical pull of the $i$-th arm can be simulated by choosing $\ket{x}_I = \ket{i}_I$ with $\ket{i}_I\ket{\coin{p_i}}_B$ as the output, and then measuring register $B$ to observe $1$ with probability $p_i$.

In this paper, we mainly focus on \emph{quantum query complexity}, which is defined as the total number of oracle queries. If we have an efficient quantum algorithm for an explicit computational problem in the query complexity setting, then if we are given an explicit circuit realizing the black-box transformation, we will have an efficient quantum algorithm for the problem.

\paragraph{Amplitude amplification and estimation.} Our quantum speed-up can be traced back to \emph{amplitude amplification and estimation}~\cite{amplitude_estimation}. For a classical randomized algorithm for a search problem that returns a correct solution $y$ with probability $p_\mathrm{succ}$, the success probability can be amplified to a constant by $O(1/p_\mathrm{succ})$ repetitions. Let $\A$ be a quantum procedure that outputs a quantum state $\sqrt{p_\mathrm{succ}}\ket{1}\ket{y} + \sqrt{1-p_\mathrm{succ}}\ket{0}\ket{y^\prime}$ for some arbitrary quantum state $\ket{y^\prime}$. Measuring the output state yields the solution $y$ with probability $p_\mathrm{succ}$ just like a classical randomized algorithm. \citet{amplitude_estimation} provided an amplitude amplification procedure that amplifies the amplitude of $\ket{1}\ket{y}$ to a constant with $O(1/\sqrt{p_\mathrm{succ}})$ queries to the quantum procedure $\A$. This effectively provides a randomized algorithm with constant success probability with query complexity $O(t/\sqrt{p_\mathrm{succ}})$ if $\A$ makes $t$ queries to the oracle. The same speed-up can be achieved for the closely related task of estimating $p_\mathrm{succ}$ with \emph{amplitude estimation}.

Amplitude amplification and estimation originates from \emph{Grover's search algorithm.}~\cite{grover1996fast}. The formal statements of Grover's algorithm and amplitude amplification and estimation are postponed to the start of the appendix. We refer the interested reader to the book~\citet{nielsen2000book} on quantum computing for a detailed introduction to basic definitions (Section 3), Grover's algorithm and amplitude amplification (Section 6), and related topics.

\paragraph{Variable-time amplitude amplification and estimation.} \emph{Variable-time amplitude amplification} (VTAA) and \emph{estimation} (VTAE) are procedures that apply on top of so-called variable-time quantum algorithms that may stop at different (variable) time steps with certain probabilities. More precisely, for $t = (t_1, t_2, \cdots, t_m)\in\mathbb{R}^m$ and $w = (w_1, w_2, \cdots, w_m)\in\mathbb{R}^m$, a $(t,w)$-variable-time algorithm $\A$ is one that can be divided into $m$ steps (i.e., $\A = \A_m\cdots\A_1$) where $t_j$ is the query complexity of $\A_j\cdots\A_1$ and $w_j$ is the probability of stopping at step $j$. We have:
\begin{Theorem}[Informal: Variable-time amplitude amplification and estimation--\citealp{ambainis2012VTAA,childs2015quantum, chakraborty2018power}]
\label{thm:vtaa_vtae}
Given a $(t,w)$-variable-time quantum algorithm $\A = \A_m \cdots \A_1$ with success probability $p_\mathrm{succ}$, there exists a quantum algorithm $\A^\prime$ that uses $O(Q)$ queries to output the solution with probability $\geq \frac{1}{2}$, where
\begin{equation}~\label{eq:vtaa_complexity_main}
Q \coloneqq t_m \log(t_m) + \frac{t_{\mathrm{avg}}}{\sqrt{\psucc}}\log(t_m).
\end{equation}
with $t_\mathrm{avg} \coloneqq \sqrt{\textstyle\sum_{j=1}^m w_j t_j^2}$ being the root-mean-square average query complexity of $\A$.

There also exists a quantum algorithm that uses $O(\frac{Q}{\epsilon}\log^2 (t_m)\log\log(\frac{t_m}{\delta}))$ queries to estimate $\psucc$ with multiplicative error $\epsilon$ with probability $\geq1-\delta$.
\end{Theorem}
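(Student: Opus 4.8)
The plan is to recall and adapt the analysis of variable-time amplitude amplification from \citet{ambainis2012VTAA} (see also \citealp{childs2015quantum}) for the amplification half of the statement, and the variable-time amplitude estimation of \citet{chakraborty2018power} for the estimation half, organized in three stages. \textbf{Stage 1 (normalizing the time profile).} By grouping consecutive steps $\A_j$ into blocks, I may assume without loss of generality that the stopping times are geometric, $t_j \asymp 2^j$, so that $m = O(\log t_m)$; this regrouping changes $t_{\mathrm{avg}}$ and $\psucc$ by only constant factors and inflates the query count by at most an $O(\log t_m)$ factor, absorbed into $Q$. Concretely, $\A$ acts on a work register together with clock qubits $C_1,\dots,C_m$ and a one-qubit flag, where $\A_j$, controlled on all previous clock qubits reading ``active,'' runs its computation and may flip $C_j$ to ``halted''; after running $\A$ the state decomposes as $\sum_{j=1}^m \sqrt{w_j}\,\ket{\text{halt at }j}\bigl(\sqrt{p_j}\ket{1}\ket{\phi_j} + \sqrt{1-p_j}\ket{0}\ket{\phi_j'}\bigr)$ with $\psucc = \sum_j w_j p_j$.

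\textbf{Stage 2 (recursive amplification).} I would build amplified algorithms $\A^{(1)},\dots,\A^{(m)}$ recursively: $\A^{(1)} = \A_1$, and $\A^{(j+1)}$ is obtained by applying (fixed-point) amplitude amplification to the composition ``$\A_{j+1}$ after $\A^{(j)}$, restricted to the branch still active after step $j$,'' amplifying the subspace flagged by ``good, \emph{or} still active.'' The number of rounds at level $j$ is taken to be $\Theta(1/\sqrt{a_j})$, where $a_j$ is the probability---conditioned on reaching level $j$---that the branch is good-or-active; the crucial point is that these conditional probabilities telescope, $\prod_j a_j = \Omega(\psucc)$, so after the last level the \emph{unconditional} good amplitude is $\Omega(1)$ and one measurement extracts the solution. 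For the cost I would set up the recursion $T^{(j+1)} \le \Theta(1/\sqrt{a_{j+1}})\,(T^{(j)} + t_{j+1} - t_j)$, unroll it, and bound the resulting sum by Cauchy--Schwarz against the weighting $\sum_j w_j t_j^2 = t_{\mathrm{avg}}^2$ together with $\prod_j a_j = \Omega(\psucc)$, collapsing it to $O\bigl(t_m\log t_m + (t_{\mathrm{avg}}/\sqrt{\psucc})\log t_m\bigr) = O(Q)$, where the $t_m\log t_m$ term is the cost of running the longest branch once at the top level and the $\log t_m$ factors come from the $m = O(\log t_m)$ levels.

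\textbf{Stage 3 (estimation).} For the second statement I would follow \citet{chakraborty2018power}: replace amplitude amplification at each level by amplitude estimation, carried out non-destructively so that the level-$j$ measurement does not collapse the superposition needed at level $j{+}1$. Running amplitude estimation on the good-or-active flag at level $j$ with relative precision $\Theta(\epsilon/\log t_m)$, and boosting confidence to $1 - \delta/\poly(\log t_m)$ by taking medians, costs $\tilde{O}(Q/\epsilon)$ queries per level; multiplying the $O(\log t_m)$ per-level conditional estimates and propagating relative errors (which add across levels, hence the extra $\log^2 t_m$ and the $\log\log(t_m/\delta)$ from the median boosting) yields an estimate of $\psucc$ with multiplicative error $\epsilon$ and probability $\ge 1-\delta$.

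The main obstacle I expect is Stage 2: choosing the recursion so that the conditional amplitudes telescope \emph{exactly} to $\Theta(\psucc)$ while treating the ``active'' branch as good for amplification but not double-counting it in the final success probability, and then showing the unrolled cost recursion accrues only an $O(\log t_m)$ overhead. The latter hinges on applying Cauchy--Schwarz against the $\ell_2$ weighting $\sum_j w_j t_j^2$ rather than settling for a crude $\ell_1$ bound like $t_{\max}/\sqrt{\psucc}$, which would be too weak for the downstream application. For Stage 3, the delicate point is making amplitude estimation non-destructive and controlling how the relative errors compound over the levels.
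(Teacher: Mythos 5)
There is nothing in the paper for your argument to be checked against: \thm{vtaa_vtae} is an \emph{informal} restatement of results imported from \citet{ambainis2012VTAA}, \citet{childs2015quantum}, and \citet{chakraborty2018power}, and the appendix only records the formal version (\thm{vtaa_vtae_append}, with the definitions of $w_j$, $t_{\mathrm{avg}}$, and $\psucc$) without proof---the citations carry the entire burden. Judged against those references, your outline is essentially the standard proof structure and is sound at the level of a sketch: the geometric regrouping so that $m=O(\log t_m)$, the recursive amplification of the ``good or still active'' subspace with telescoping conditional amplitudes multiplying to $\Omega(\psucc)$, the unrolled cost recursion bounded by Cauchy--Schwarz against the $\ell_2$-weighted stopping times $\sum_j w_j t_j^2=t_{\mathrm{avg}}^2$ (which is exactly how the $t_{\mathrm{avg}}/\sqrt{\psucc}$ term arises in \citealp{childs2015quantum}), and, for VTAE, per-level non-destructive amplitude estimation with multiplicative error propagation and median boosting, which is where the extra $\log^2(t_m)\log\log(t_m/\delta)$ comes from in \citet{chakraborty2018power}. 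The caveats you flag yourself are the genuinely delicate points and are where the cited proofs do real work: the number of rounds at level $j$ cannot literally be $\Theta(1/\sqrt{a_j})$ since $a_j$ is unknown (the references amplify only to constant conditional amplitude per level, or estimate the relevant amplitudes), and intermediate amplification reweights the halting distribution, so the Cauchy--Schwarz step must be carried out for the reweighted weights rather than the original $w_j$. So there is no gap relative to the paper---which proves nothing here---and your proposal is a faithful, if schematic, reconstruction of the cited proofs; for this paper's purposes the correct ``proof'' is simply the citation, as the authors intend.
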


For comparison, recall that applying amplitude amplification and estimation procedures on general quantum algorithms requires $O(t_m/\sqrt{p_\mathrm{succ}})$ queries. See the first section of the appendix for a rigorous definition of variable-time algorithms and formal statements of the query complexities of variable-time amplitude amplification and estimation.

\section{Fast Quantum Algorithm For Best-arm Identification}\label{sec:BAI}

In this section, we construct a quantum algorithm for best-arm identification and analyze its performance. Specifically:

\begin{Theorem}\label{thm:quantum-BAI-confidence}
Given a multi-armed bandit oracle $\mathcal{O}$ and confidence parameter $\delta\in(0,1)$, there exists a quantum algorithm that outputs the best arm with probability $\geq 1-\delta$ using $\tilde{O}(\sqrt{H})$ queries to $\mathcal{O}$.
\end{Theorem}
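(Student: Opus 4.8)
The plan is to split the algorithm into two phases: a \emph{preprocessing phase} that computes estimates $\hat p_1,\hat p_2$ of $p_1$ and $p_2$ to additive precision $\Theta(\Delta_2)$, and a \emph{search phase} that, armed with a threshold $\tau$ lying strictly between $p_2$ and $p_1$, marks every sub-optimal arm and rotates the state toward the unique unmarked arm. Both phases rest on the same gadget: a \emph{multi-resolution} procedure that, for each index $i$, runs amplitude estimation on $\ket{\coin{p_i}}$ at geometrically refined precisions $2^{-1},2^{-2},\ldots$, halting and flagging $i$ as ``bad'' as soon as the current estimate certifies $p_i<\tau$. The key point is that, once $\tau$ is a constant-factor-balanced threshold (so that $|p_1-\tau|$ and every $|p_i-\tau|$ is $\Omega(\Delta_i)$), arm $i$ is flagged after $\tilde O(1/\Delta_i)$ queries, while the best arm is never flagged. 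Running this gadget on the uniform superposition $\frac{1}{\sqrt n}\sum_i\ket{i}$ therefore yields a variable-time algorithm $\A$ whose ``arm-$i$'' branch stops at time $t_i=\tilde O(1/\Delta_i)$ with weight $1/n$; hence $t_{\mathrm{avg}}=\sqrt{\sum_i\frac1n\tilde O(\Delta_i^{-2})}=\tilde O(\sqrt{H/n})$, $t_m=\tilde O(1/\Delta_2)=\tilde O(\sqrt H)$ (using $\Delta_2^{-2}\le H$), and the good (best-arm) part has proportion $\psucc=\Theta(1/n)$.

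Given such an $\A$, the search phase is immediate: applying VTAA (\thm{vtaa_vtae}) to $\A$ produces a quantum algorithm that outputs the best arm with probability $\ge\tfrac12$ using $Q=t_m\log t_m+(t_{\mathrm{avg}}/\sqrt{\psucc})\log t_m=\tilde O\bigl(\sqrt H+\sqrt{H/n}\big/\sqrt{1/n}\bigr)=\tilde O(\sqrt H)$ queries, cf.\ \eq{vtaa_complexity_main}. Repeating $O(\log(1/\delta))$ times and taking a majority vote boosts the success probability to $\ge 1-\delta$ at the cost of another logarithmic factor, which is absorbed into $\tilde O(\cdot)$.

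The preprocessing phase reuses the idea with VTAE in place of VTAA. For a fixed test level $l$, classifying each arm as ``$p_i>l$'' is again a variable-time computation whose per-arm cost is $\tilde O(1/|p_i-l|)$, so VTAE estimates the proportion $|\{i:p_i>l\}|/n$ — equivalently, counts the arms above $l$ — in $\tilde O\bigl(\sqrt{\sum_i|p_i-l|^{-2}}\bigr)$ queries. A binary search on $l$ over $O(\log(1/\Delta_2))$ levels then locates $p_1$ and $p_2$ to precision $\Theta(\Delta_2)$. Since $\Delta_2$ is unknown a priori, we wrap the binary search in an exponential-search (doubling) loop that tries working precisions $\varepsilon=2^{-1},2^{-2},\ldots$ and halts once the two largest estimated biases are separated by more than a constant times $\varepsilon$; a geometric-sum argument keeps the total cost $\tilde O(\sqrt H)$.

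The main obstacle is making the multi-resolution gadget a \emph{bona fide} $(t,w)$-variable-time algorithm in the sense demanded by \thm{vtaa_vtae} — equipping it with the clock/flag register structure, guaranteeing that a branch which has already stopped is untouched by later stages, and verifying rigorously that the per-arm stopping time is genuinely $\tilde O(1/\Delta_i)$ — while simultaneously controlling error propagation. Each amplitude-estimation stage has a small failure probability, and because the arms are processed in superposition a single misclassified branch corrupts the global amplitude; so the stage-wise failure probabilities must be driven down by a median-of-estimates trick over $O(\log(n/\delta))$ repetitions and union-bounded over all $n$ arms and $O(\log(1/\Delta_2))$ stages (this is the source of the poly-logarithmic overhead). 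Pinning down $\tau$ so that $|p_1-\tau|$ and every $|p_i-\tau|$ are $\Omega(\Delta_i)$, and confirming that the resulting $t_m$, $t_{\mathrm{avg}}$, $\psucc$ substitute into \eq{vtaa_complexity_main} to give $\tilde O(\sqrt H)$, is then routine. Junk-insensitivity is automatic, since every subroutine interacts with the oracle only through amplitude estimation on the bandit register $B$.
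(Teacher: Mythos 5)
Your overall architecture is the same as the paper's: a variable-time gadget that runs amplitude estimation at precisions $2^{-1},2^{-2},\ldots$ and flags/stops arms certified below a threshold (the paper's \algo{vtalgo} built from gapped amplitude estimation), VTAA on this gadget for the search phase, and VTAE plus a threshold search for the preprocessing phase; your accounting $t_m=\tilde O(1/\Delta_2)$, $t_{\mathrm{avg}}=\tilde O(\sqrt{H/n})$, $\psucc=\Theta(1/n)$ for the final amplification matches \lem{vtalgo_complexity} and \lem{vtaa_vtae_on_vtalgo}, and the balanced-threshold argument giving $|p_i-\tau|=\Omega(\Delta_i)$ is sound.

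The genuine gap is in the preprocessing phase, and it is exactly the point the paper singles out as the main technical difficulty. First, your cost claim that counting at level $l$ takes $\tilde O\bigl(\sqrt{\sum_i|p_i-l|^{-2}}\bigr)$ queries is not correct as stated: a test level that lands on or very near some sub-optimal bias $p_i$ makes that term diverge, and nothing in an unconstrained binary/exponential search prevents this. The correct bound (cf.\ \eq{coarse_vtaa_query_general}) caps arms near or above the threshold at cost $1/\Delta^2$ where $\Delta$ is the width of the uncertainty band, divides the left-side contribution by $|\Sright|$, and is only $\tilde O(\sqrt H)$ because the procedure maintains the invariants that the band has width $\Omega(\Delta_2)$ and is anchored within $O(\Delta)$ of $p_1$ or $p_2$, so that $1/(l_1-p_i)^2=O(1/\Delta_i^2)$ (this is the argument in the proof of \lem{shrink}, and is why \lem{locate} stops shrinking once $|I_k|\geq\Delta_2/8$ suffices). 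Second, VTAE does not cleanly count $\{i:p_i>l\}$: by \lem{vtalgo_correctness} and \eq{vtae_estimate_quality} it returns $|\Sright|$ plus an \emph{unknown} fraction of the arms in the middle band $\Smiddle$, and you cannot shrink that band to zero without blowing up the cost. So a naive binary-search decision rule based on a single estimate can be inconsistent whenever $p_1$ or $p_2$ sits inside the band. The paper resolves this with the $\Shrink$ subroutine: two $\Estimate$ calls at staggered band positions and a four-case update (the Lin--Tong technique) that always shrinks the confidence interval by the fixed factor $3/5$ while keeping $p_k$ inside it. Your proposal needs either this device or an equivalent robust decision rule; without it, both the correctness of the threshold search and its $\tilde O(\sqrt H)$ cost are unjustified.
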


Throughout this section, the oracle $\mathcal{O}$ is fixed, so we may omit explicit reference to it. All $\log$s have base $2$.

There are essentially two steps in our construction. In the first step, we construct two subroutines $\Amplify$ and $\Estimate$ using VTAA and VTAE, respectively, on a variable-time quantum algorithm $\mathcal{A}$. Roughly speaking, given $l\in [0,1]$, $\Amplify$ outputs an arm index $i$ randomly chosen from those $i$ with $p_i>l$ while $\Estimate$ counts the number of such $i$s. This means that if we knew the values of $p_1$ and $p_2$, we could take $l$ to be $(p_1+p_2)/2$, then $\Amplify$ would output the best arm. But we can use $\Estimate$ in a binary search procedure to estimate $p_1$ and $p_2$. This is exactly what we do in the second step and so we are done.

We now discuss the construction more precisely. $\Amplify$ and $\Estimate$ actually use two thresholds $l_2$, $l_1\in [0,1]$ with $l_2<l_1$ instead of a single threshold $l$. In the first step, we construct a variable-time quantum algorithm denoted $\A$ (\algo{vtalgo}) that is initialized in a uniform superposition state $\ket{u} \coloneqq \frac{1}{\sqrt{n}}\sum_{i \in [n]}\ket{i}$ (since initially we have no information about which arm is the best). Given an input interval $I=[l_2,l_1]$, $\A$ ``flags'' arm indices in $\Sright' \coloneqq \{i \in [n]: p_i \geq l_1\}$ with a bit $f=1$ and those in
$\Sleft' \coloneqq \{i \in [n]: p_i \leq l_2\}$ with a bit $f=0$. The flag bit $f$ is written to a separate flag register $F$, so that the state (approximately) becomes $\frac{1}{\sqrt{n}}\bigl(\sum_{i\in \Sright'}\ket{i}\ket{1}_F + \sum_{i\in \Sleft'}\ket{i}\ket{0}_F + \sum_{i\in \Smiddle'}\ket{i}\ket{\psi_i}_F\bigr)$ for some states $\ket{\psi_i}\in\C^2$, where $\Smiddle' \coloneqq [n] - (\Sleft'\cup\Sright') = \{i \in [n]: l_2<p_i<l_1\}$. The flag bit $f$ stored in the $F$ register indicates whether VTAA (resp.\ VTAE), when applied on $\A$, should ($f=1$) or should not ($f=0$) amplify (resp.\ estimate) that part of the state. We then apply VTAA and VTAE on $\A$ to construct $\Amplify$ and $\Estimate$, respectively. $\Amplify$ produces a uniform superposition of all those $i$s with $F$ register in $\ket{1}$, i.e., it amplifies such $i$s relative to the others. $\Estimate$ counts the number of such $i$s. More precisely, $\Estimate$ (approximately) counts the number of indices in $\Sright'$, as their $F$ register is in $\ket{1}$, plus some (unknown) fraction of indices in $\Smiddle'$ as dictated by the fraction of $\ket{1}$ in the (unknown) states $\ket{\psi_i}$.

In the second step, we use $\Estimate$ as a subroutine in $\Locate$ (\algo{locate}) to find a interval $[l_2,l_1]$ such that $p_2 < l_2 < l_1 < p_1$  and that $|l_1 - l_2| \geq \Delta_2/4$. Then, running $\Amplify$ with these $l_2,l_1$ in $\BestArm$ (\algo{bestarm}) gives the state $\ket{1}$ containing the best-arm index because only $p_1$ is to the right of $l_2$. $\Locate$ is a type of binary search that counts the number of indices in $\Sright'$ using $\Estimate$. There is a technical difficulty here because $\Estimate$ actually counts the number of indices in $\Sright'$ plus some fraction of indices in $\Smiddle'$. Trying to fix this by simply setting $l_2=l_1$, so that $\Smiddle' = \emptyset$, does not work as it would increase the cost of $\Estimate$. We overcome this difficulty via the $\Shrink$ subroutine (\algo{shrink}) of $\Locate$, which employs a technique from recent work on quantum ground state preparation~\cite{lintong}. See \fig{flowchart} for an illustration of the overall structure of the algorithm.
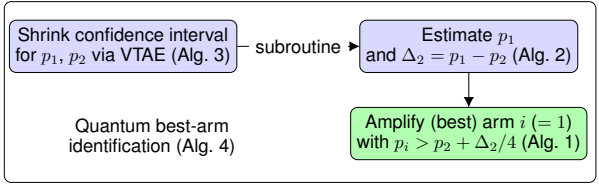
\begin{figure}[htbp]
\centering
\scalebox{0.65}{
{\tikzset{%
  >={Latex[width=2mm,length=2mm]},
            base/.style = {rectangle, rounded corners, draw=black,
                           minimum width=4cm, minimum height=1cm,
                           text centered, font=\sffamily},
       result/.style = {base, fill=green!30},
       new/.style = {base, fill=blue!15},
         old/.style = {base, minimum width=2.5cm, fill=yellow!30},
}
\begin{tikzpicture}[node distance=2cm,
    every node/.style={fill=white, font=\sffamily, scale=1.07}, align=center]
  \draw [rounded corners] (-9.5,-1) rectangle node[pos=0.25]{Quantum  best-arm\\  identification (Alg.~\ref{algo:bestarm})} (2.6,2.7);
  \node (A-known-p)     [result]          {Amplify (best) arm $i$ ($=1$)\\
   with $p_i > p_2+\Delta_2/4$ (Alg.~\ref{algo:vtalgo})};
  \node (locate)     [new, above = 0.7 cm of A-known-p]   {Estimate $p_1$ \\and $\Delta_2 = p_1-p_2$ (Alg.~\ref{algo:locate})};
  \node (shrink)     [new, left = 2.5 cm of locate]   {Shrink confidence interval \\ for $p_1$, $p_2$ via VTAE (Alg.~\ref{algo:shrink})};

  \draw[->]              (locate) -- (A-known-p);
  \draw[->]              (shrink) -- node {subroutine} (locate);
\end{tikzpicture}}}
\caption{Overview of our best-arm identification algorithm.}
\label{fig:flowchart}
\end{figure}

\subsection{$\Amplify$ and $\Estimate$}\label{sec:step_one}
We first construct a variable-time quantum algorithm (\algo{vtalgo}) that we call $\A$ throughout. $\A$ uses the following registers: input register $I$; bandit register $B$; clock register $C=(C_1,\ldots, C_{m+1})$, where each $C_i$ is a qubit; ancillary amplitude estimation register $P=(P_1,\ldots, P_m)$, where each $P_i$ has $O(m)$ qubits; and flag register $F$. We set $m \coloneqq \ceil{\log(1/(l_1-l_2))}+2$ as assigned in \algo{vtalgo}.

$\A$ is indeed a variable-time quantum algorithm according to \defn{variable_time_algorithm}. This is because we can write $\A = \A_{m+1}\A_{m}\cdots\A_{1}\A_{0}$ as a product of $m+2$ sub-algorithms, where $\A_0$ is the initialization step (\lin{vtalgo_initial}), $\A_j$ consists of the operations in iteration $j$ of the for loop (Lines~\ref{lin:vtalgo_forloop1}--\ref{lin:vtalgo_forloop4}) for $j\in [m]$, and $\A_{m+1}$ is the termination step (Lines~\ref{lin:vtalgo_terminate1}--\ref{lin:vtalgo_terminate2}). The state spaces $\H_C$ and $\H_A$ in \defn{variable_time_algorithm} correspond to the state spaces of the $C$ register and the remaining registers of $\A$, respectively. $\A_{m+1}$ ensures that Condition \ref{it:vtqa_final} of \defn{variable_time_algorithm} is satisfied.

\begin{algorithm}[htbp]
\KwInput{Oracle $\mathcal{O}$ as in \eq{quantum-bandit-defn}; $0 < l_2 < l_1 < 1$; approximation parameter $0<\alpha<1$.}

$\Delta \leftarrow l_1-l_2$

$m \leftarrow \ceil{\log \frac{1}{\Delta}}+2$

$a \leftarrow \frac{\alpha}{2mn^{3/2}}$

Initialize state to $\frac{1}{\sqrt{n}}\sum_{i=1}^n{\ket{i}}_I\ket{\coin p_i}_B\ket{0}_C\ket{0}_P\ket{1}_F$\label{lin:vtalgo_initial}

\For{$j = 1, \ldots, m$\label{lin:vtalgo_forloop}}{
    $\epsilon_j \leftarrow 2^{-j}$\label{lin:vtalgo_forloop1}

    \If{register $I$ is in state $\ket{i}$ and registers $C_1,\ldots, C_{j-1}$ are in state $\ket{0}$}{
    Apply $\GAE(\epsilon_j,a; l_1)$ with $\mathcal{O}_{p_i}$ on registers $B$, $C_j$, and $P_j$\label{lin:vtalgo_gae}
    }
    Apply controlled-$\NOT$ gate with control on register $C_j$ and target on register $F$\label{lin:vtalgo_forloop4}
    }
\If{registers $C_1,\ldots, C_{m}$ are in state $\ket{0}$\label{lin:vtalgo_terminate1}}{
    Flip the bit stored in register $C_{m+1}$\label{lin:vtalgo_terminate2}}
\caption{$\A(\mathcal{O},l_2,l_1,\alpha)$}
\label{algo:vtalgo}
\end{algorithm}

With $\Delta := l_1-l_2$ being the length of $[l_2,l_1]$, we define the following three sets that partition $[n]$:
\begin{align}
\Sleft &\coloneqq \{i \in [n]: \,p_i < l_1 - \Delta/2\}\label{eq:Sleft},
\\
\Smiddle &\coloneqq \{i \in [n]: \, l_1-\Delta/2\leq p_i
  <l_1-\Delta/8\}\label{eq:Smiddle},
\\
\Sright &\coloneqq \{i \in [n]: \,p_i \geq  l_1 - \Delta/8\}\label{eq:Sright}.
\end{align}
These sets play the roles of aforementioned $\Sleft'$, $\Smiddle'$, and $\Sright'$. 
They can be regarded as functions of (the input to) $\A$. For later convenience, we also define $S_{\mathrm{lm}}\coloneqq\Sleft \cup \Smiddle$ and $S_{\mathrm{mr}}\coloneqq\Smiddle \cup \Sright$.

\begin{Lemma}[Correctness of $\A$]\label{lem:vtalgo_correctness}
Let $\psucc$ denote the success probability $\A$. Then $\abs{\psucc-\psucc'}\leq \frac{2\alpha}{n}$ where $\psucc' = \frac{1}{n}\bigl(\abs{\Sright}+\sum_{i \in \Smiddle}\abs{\beta_{i,1}}^2\bigr)$ for some $\abs{\beta_{i,1}}^2\in [0,1]$.
\end{Lemma}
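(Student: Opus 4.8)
The plan is to reduce the statement to a one-arm estimate. The key observation is that register $I$ is never written to by $\A$: in \algo{vtalgo} it appears only as a control --- in the conditional inside the for-loop body and inside the controlled oracle $\mathcal{O}_{p_i}$ used by $\GAE$. Hence at every stage the state of $\A$ has the form $\frac{1}{\sqrt n}\sum_{i=1}^n\ket i_I\otimes\ket{\phi_i}$, where $\ket{\phi_i}$ is exactly the state obtained by running the ``scalar'' version of $\A$ --- with the single-arm oracle $\mathcal{O}_{p_i}$ --- on $\ket{\coin{p_i}}_B\ket 0_C\ket 0_P\ket 1_F$. Since the good (``amplify'') subspace on which VTAA and VTAE later act is the one with $F$ in state $\ket 1$, this gives $\psucc=\frac1n\sum_{i=1}^n s(p_i)$ with $s(p):=\norm{\ket{1}\!\bra{1}_F\ket{\phi(p)}}^2$, and it remains to estimate $s(p)$ for a single bias $p$.

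The next step is to read $s(p)$ off the structure of $\A$: the flag $F$ ends in $\ket 1$ exactly in the branch where \emph{none} of the clock qubits $C_1,\dots,C_m$ is ever set. Indeed $F$ starts at $\ket 1$; each $C_j$ flips it exactly once if set (\lin{vtalgo_forloop4}); as soon as some $C_j$ is set, no later $\GAE$ call fires (\lin{vtalgo_gae}), so $F$ is never touched again; and the termination step (Lines~\ref{lin:vtalgo_terminate1}--\ref{lin:vtalgo_terminate2}) only writes $C_{m+1}$. Therefore $s(p)=\prod_{j=1}^m c_j(p)$, where $c_j(p)\in[0,1]$ is the probability that $\GAE(\epsilon_j,a;l_1)$ ``continues'' (leaves $C_j$ in $\ket 0$) conditioned on having continued at levels $1,\dots,j-1$. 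I would then invoke the specification of $\GAE$ (the gapped amplitude-estimation primitive): up to failure probability $a$, at level $j$ it continues whenever $p\ge l_1-\epsilon_j$ and stops whenever $p\le l_1-2\epsilon_j$. Combined with $\epsilon_j=2^{-j}$ and $m=\ceil{\log(1/\Delta)}+2$ --- chosen precisely so that $\Delta/8<\epsilon_m\le\Delta/4$ --- the three regions $\Sleft,\Smiddle,\Sright$ of \eq{Sleft}--\eq{Sright} are exactly the places where $s(p)$ is forced: if $p\in\Sright$, then $p\ge l_1-\Delta/8> l_1-\epsilon_j$ for every $j\le m$, so every level continues except with probability $\le a$ and $s(p)\ge(1-a)^m\ge1-ma$; if $p\in\Sleft$, then $p<l_1-\Delta/2\le l_1-2\epsilon_m$, so level $m$ stops except with probability $\le a$ and $s(p)\le c_m(p)\le a$; if $p\in\Smiddle$, then $s(p)$ is merely some number in $[0,1]$.

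To finish, set $\abs{\beta_{i,1}}^2:=s(p_i)$ for each $i\in\Smiddle$, so that the $\Smiddle$ contributions to $\psucc$ and to $\psucc'=\frac1n\bigl(\abs{\Sright}+\sum_{i\in\Smiddle}\abs{\beta_{i,1}}^2\bigr)$ cancel exactly, and then use $1-s(p_i)\le ma$ for $i\in\Sright$ and $s(p_i)\le ma$ for $i\in\Sleft$:
\[
\abs{\psucc-\psucc'}\;\le\;\frac1n\sum_{i\in\Sright\cup\Sleft}ma\;\le\;ma\;=\;\frac{\alpha}{2n^{3/2}}\;\le\;\frac{2\alpha}{n},
\]
using $a=\frac{\alpha}{2mn^{3/2}}$ from \algo{vtalgo}. (The bound is comfortably loose; the extra factor of $n$ in the denominator of $a$ is presumably there so that this error stays negligible compared to $\sqrt{\psucc}$ --- which can be as small as $1/\sqrt n$ --- once $\A$ is passed to VTAA and VTAE.)

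\textbf{Main obstacle.} The delicate part is the middle step: extracting from the $\GAE$ specification the exact ``gap'' --- how far $p$ must be from $l_1$, measured in units of $\epsilon_j$, for the test to reliably continue versus reliably stop --- and verifying that the geometric schedule $\epsilon_j=2^{-j}$ together with $m=\ceil{\log(1/\Delta)}+2$ makes the stop/continue behaviour essentially deterministic for precisely the sets $\Sright$ and $\Sleft$ of \eq{Sleft}--\eq{Sright}; this is what pins down the particular boundaries $l_1-\Delta/8$ and $l_1-\Delta/2$. Two smaller points also need care: whether $\GAE$'s failure parameter $a$ should be read as an outcome probability or as a norm bound on its output state (the latter costs $O(\sqrt a)$ rather than $O(a)$ per level, but the chosen $a$ still suffices), and the requirement that $\GAE$ return the work registers $B$ and $P_j$ to a configuration on which the level-$(j{+}1)$ call still behaves as specified.
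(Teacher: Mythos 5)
Your proposal is correct, and at its core it is the same argument as the paper's: both exploit that $I$ is only ever a control, so the analysis reduces to running the scalar algorithm on each $\ket{i}\ket{\coin p_i}$ separately; both then feed the $\GAE$ guarantee (\cor{gae}) per level into the parameter choices $\epsilon_j=2^{-j}$, $m=\ceil{\log(1/\Delta)}+2$, $a=\alpha/(2mn^{3/2})$, checking exactly the same facts $\Delta/8<\epsilon_m\le\Delta/4$ to force the behaviour on $\Sright$ and $\Sleft$. Where you differ is the bookkeeping: the paper proves a stronger, state-level statement (its ``full version'' of \lem{vtalgo_correctness}), tracking which level $j$ an arm in $\Sleft$ stops at and accumulating a norm error of $2a$ per iteration to show the whole output state is $(\alpha/n)$-close to an ideal labeled state, and only at the end converts state closeness into $\abs{\psucc-\psucc'}\le 2\alpha/n$; you instead work directly at the probability level, using that the all-continue amplitude is an exact product of per-level $\beta_0$'s (legitimate because \cor{gae} restores $B$ to $\ket{\coin p}$ and each level uses a fresh $P_j$), which lets you dispatch $\Sleft$ using only level $m$ and gives the cleaner bound $ma\le\alpha/(2n^{3/2})$. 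Your route is leaner and fully suffices for the lemma as stated, but note the paper's state-level version is not gratuitous: the closeness of the output state to $\Pi_F\ket{\psi(\A)}$ (not just of the probabilities) is what gets reused in the proof of \lem{vtaa_vtae_on_vtalgo} to argue $\Amplify$ returns an index in $S_{\mathrm{mr}}$, so your argument would need the paper-style strengthening there anyway. One small correction to your closing caveat: \cor{gae} bounds the \emph{amplitudes} $\beta_0,\beta_1$ by $a$, so the per-level probability cost is at most $a^2\le a$ and your main estimate stands; but your parenthetical claim that a hypothetical $O(\sqrt a)$-per-level cost would ``still suffice'' is not right as stated, since $m\sqrt{a}=\sqrt{m\alpha/2}\,n^{-3/4}$ need not be $O(\alpha/n)$ --- fortunately that reading never arises here.
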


At a high level, at iteration $j$, \lin{vtalgo_gae} approximately identifies those $i\in \Sleft$ with $p_i \in [l_1-2\epsilon_j,l_1-\epsilon_j)$ and stops computation on these $i$s by setting their associated $C$ registers to $\ket{1}$. \lin{vtalgo_forloop4} then flags these $i$s by setting their associated $F$ registers to $\ket{0}$, indicating failure. We defer the detailed proof to the supplementary material which is mainly concerned with bounding the error in the aforementioned approximation, as well as the lemma as follows.

\begin{Lemma}[Complexity of $\A$]\label{lem:vtalgo_complexity}
With $\Delta = l_1 - l_2$ being the length of the interval,
we have:
\begin{enumerate}[nosep]
\item The $j^\text{th}$ stopping time $t_j$ of $\A_j\A_{j-1}\cdots \A_{0}$ is of order $\sum_{k=1}^j\frac{1}{\epsilon_k}\log\frac{1}{a} \leq 2^{j+1}\log\frac{1}{a}$. In particular, $t_{m+1}=O(\frac{1}{\Delta}\log\frac{1}{a})$.

\item The average stopping time squared, $t_{\mathrm{avg}}^2$, is of order
\begin{equation}\label{eq:t_avg}
\frac{1}{n}\biggl(\frac{\abs{\Sright}}{\Delta^2} + \sum_{i\in S_{\mathrm{lm}}} \frac{1}{(l_1-p_i)^2}\bigg) \log^2\Bigl(\frac{1}{a}\Bigr).
\end{equation}
\end{enumerate}
\end{Lemma}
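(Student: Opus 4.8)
The plan is to obtain both parts by reading off the cost structure of the step decomposition $\A = \A_{m+1}\A_m\cdots\A_1\A_0$ together with the cost of a single call to $\GAE$. The only oracle calls in $\A$ are the one application of $\mathcal{O}$ used to prepare $\frac{1}{\sqrt{n}}\sum_i\ket{i}_I\ket{\coin{p_i}}_B$ in the initialization $\A_0$ (\lin{vtalgo_initial}), and the controlled call to $\GAE(\epsilon_j, a; l_1)$ in iteration $j$ (\lin{vtalgo_gae}) for $j \in [m]$; the controlled-$\NOT$ in \lin{vtalgo_forloop4} and the termination step $\A_{m+1}$ make no queries. A single (possibly controlled) call $\GAE(\epsilon, a; l)$ costs $\Theta\bigl(\frac1\epsilon\log\frac1a\bigr)$ queries, since it amounts to amplitude estimation to additive precision $\epsilon$ amplified to failure probability $a$ by the median trick. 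With $\epsilon_j = 2^{-j}$, the query complexity of $\A_j\cdots\A_0$ is thus (ignoring the single negligible query in $\A_0$) $t_j = \Theta\bigl(\sum_{k=1}^j \frac1{\epsilon_k}\log\frac1a\bigr) = \Theta\bigl(\sum_{k=1}^j 2^k\log\frac1a\bigr)$; the geometric sum $\sum_{k=1}^j 2^k = 2^{j+1}-2$ gives the stated order and in particular $t_j = O\bigl(2^{j+1}\log\frac1a\bigr)$. For the last assertion of part~1, $t_{m+1} = t_m$ because $\A_{m+1}$ issues no queries, and $m = \ceil{\log\frac1\Delta}+2 \le \log\frac1\Delta + 3$ forces $2^{m+1} \le 16/\Delta$, so $t_{m+1} = O\bigl(\frac1\Delta\log\frac1a\bigr)$.

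For part~2, the plan is to rewrite $t_{\mathrm{avg}}^2 = \sum_j w_j t_j^2$ (the sum over all stopping steps, including step $m+1$) as a sum over arm indices. Since $\A$ starts from the uniform superposition $\ket{u}$, each index $i \in [n]$ carries weight $\frac1n$; letting $\tau(i)$ be the step at which branch $i$ stops --- the first $j$ with clock qubit $C_j$ flipped to $\ket{1}$ --- we have $w_j = \frac1n\abs{\{i : \tau(i) = j\}}$ and hence $t_{\mathrm{avg}}^2 = \frac1n\sum_{i=1}^n t_{\tau(i)}^2$. It then remains to locate $\tau(i)$ for $i$ in each of $\Sleft, \Smiddle, \Sright$. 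As recorded in the discussion after \lem{vtalgo_correctness}, iteration $j$ stops exactly those branches with $p_i \in [l_1 - 2\epsilon_j, l_1 - \epsilon_j)$, i.e.\ with $l_1 - p_i \in (\epsilon_j, 2\epsilon_j]$. Hence: if $i \in \Sleft$ then $l_1 - p_i > \Delta/2 \ge 2\epsilon_m$, so branch $i$ stops at the iteration $j_i$ with $2^{j_i} \in \bigl(\tfrac1{l_1-p_i}, \tfrac2{l_1-p_i}\bigr]$, and part~1 gives $t_{\tau(i)} = \Theta\bigl(2^{j_i}\log\frac1a\bigr) = \Theta\bigl(\frac1{l_1-p_i}\log\frac1a\bigr)$; if $i \in \Smiddle$ then $l_1 - p_i \in (\Delta/8, \Delta/2]$, and whether branch $i$ stops at a late iteration (one with $\epsilon_j = \Theta(\Delta)$) or survives to step $m+1$, in both cases $t_{\tau(i)} = \Theta\bigl(\frac1\Delta\log\frac1a\bigr) = \Theta\bigl(\frac1{l_1-p_i}\log\frac1a\bigr)$; and if $i \in \Sright$ then $l_1 - p_i \le \Delta/8 < \epsilon_m$, so branch $i$ is never stopped during the loop and reaches step $m+1$, giving $t_{\tau(i)} = t_{m+1} = \Theta\bigl(\frac1\Delta\log\frac1a\bigr)$. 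Substituting into $t_{\mathrm{avg}}^2 = \frac1n\sum_i t_{\tau(i)}^2$ and grouping the three sets gives
\begin{equation*}
t_{\mathrm{avg}}^2 = \Theta\!\left(\frac1n\Bigl(\frac{\abs{\Sright}}{\Delta^2} + \sum_{i \in \Sleft \cup \Smiddle}\frac1{(l_1-p_i)^2}\Bigr)\log^2\tfrac1a\right),
\end{equation*}
which is the claim, since $S_{\mathrm{lm}} = \Sleft \cup \Smiddle$.

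The step I expect to be the main obstacle is making the index-sum rewriting in part~2 rigorous, because $\GAE$ is only an \emph{approximate} amplitude-estimation routine. A branch $i$ whose bias sits close to a threshold $l_1 - \epsilon_j$ may be stopped at iteration $j$, at $j \pm 1$, or have its amplitude split across several clock values, so $\tau(i)$ is not literally single-valued and $w_j$ is not exactly $\frac1n\abs{\{i : \tau(i) = j\}}$. What rescues this is that (a) any such spread is confined to adjacent iterations, whose stopping times differ only by a constant factor, so branch $i$ still contributes $\Theta\bigl(\frac1n t_{\tau(i)}^2\bigr)$; and (b) the $\Delta/2$ and $\Delta/8$ margins in the definitions of $\Sleft, \Smiddle, \Sright$, together with the choice $a = \alpha/(2mn^{3/2})$, bound the mass that leaks to ``wrong'' branches by $O(\alpha)$ --- which is exactly the error already controlled in the proof of \lem{vtalgo_correctness}. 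Once those facts are imported, what remains is the elementary geometric-series and interval arithmetic sketched above.
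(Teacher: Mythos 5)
Your overall route is the same as the paper's: part~1 by summing the $\GAE$ costs geometrically, and part~2 by rewriting $t_{\mathrm{avg}}^2=\frac1n\sum_i\tau_i^2$ (a per-arm average stopping time) and doing a case analysis over $\Sleft,\Smiddle,\Sright$, using the correctness analysis to locate where an $\Sleft$ branch stops. Part~1 is correct as written. For $\Smiddle$ and $\Sright$ you argue about where the branch actually halts, whereas the paper just uses the trivial bound $\tau_i^2\le t_m^2$ (the stopping probabilities sum to $1$) and, for $\Smiddle$, the inequality $l_1-p_i\le\Delta/2$; your version is fine but the stopping-location argument is unnecessary there.

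The gap is exactly where you predicted it, and your proposed rescue does not close it. Your claim (a), that amplitude spread is ``confined to adjacent iterations,'' is not true: each $\GAE$ call leaves up to $O(a)$ amplitude unstopped, so an $\Sleft$ branch has nonzero probability of halting at \emph{every} later step, including step $m+1$ whose cost $t_{m+1}^2=\Theta(\Delta^{-2}\log^2\frac1a)$ can vastly exceed the target $(l_1-p_i)^{-2}\log^2\frac1a$. Your claim (b), bounding the total leaked mass by the $O(\alpha/n^{3/2})$-type error of \lem{vtalgo_correctness} and (implicitly) charging it at the worst-case time, is insufficient: the leaked contribution is then of order $\frac{\alpha^2}{n^3\Delta^2}\log^2\frac1a$ per arm, which is not $O\bigl(\frac{1}{(l_1-p_i)^2}\log^2\frac1a\bigr)$ when $\Delta$ is very small and $l_1-p_i=\Theta(1)$ (the lemma as stated does not assume $\abs{\Sright}\ge1$, so you cannot absorb this into the $\abs{\Sright}/\Delta^2$ term either). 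What the paper actually proves, and what you need, is a \emph{per-iteration} suppression: after the branch's nominal stopping iteration $j+1$, every subsequent $\GAE$ call again stops all but an $O(a)$ fraction of whatever amplitude survives, so $w_{i,j+r}=O((\alpha/n)^{2(r-1)})$ for $r\ge2$. This geometric decay beats the factor-$4$ growth of $t_{j+r}^2$, giving $\sum_{r\ge2} w_{i,j+r}t_{j+r}^2=O(t_{j+1}^2)$ and hence $\tau_i^2=O\bigl(\frac{1}{(l_1-p_i)^2}\log^2\frac1a\bigr)$. Adding that decay estimate (a one-line consequence of \cor{gae} applied at each later iteration) repairs your argument; without it, the $\Sleft$ case is not established.
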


Now we fix algorithm $\mathcal{A}$ and its input parameters. We always assume that $\abs{\Sright}\geq1$, which we need for some of the following results to hold. This is without loss of generality as we can always add an artificial arm $0$ with bias $p_0=1$ to the bandit oracle $\mathcal{O}$, as we do in \lin{shrink_append} of \algo{shrink}.

We apply VTAA and VTAE (\thm{vtaa_vtae})\footnote{The state spaces $\H_C$, $\H_F$, and $\H_W$ correspond to the state spaces of the $C$, $F$, and remaining registers of $\A$, respectively.} on our variable-time quantum algorithm $\A$ to prepare the state $\ket{\psi_{\textrm{succ}}}$ and to estimate the probability $\psucc$, respectively. This gives two new algorithms $\Amplify$ and $\Estimate$ with the following performance guarantees.

\begin{Lemma}[Correctness and complexity of $\Amplify(\A, \delta)$, $\Estimate(\A,\epsilon,\delta)$]~\label{lem:vtaa_vtae_on_vtalgo}
Let $\A = \A(\mathcal{O},l_2,l_1,0.01\delta)$.
Then $\Amplify(\A, \delta)$ uses $O(Q)$ queries to output an index $i\in S_{\mathrm{mr}}$ with probability $\geq 1-\delta$, and $\Estimate(\A, \epsilon, \delta)$ uses $O(Q/\epsilon)$ queries to output an estimate $r$ of $\psucc'$ (defined in \lem{vtalgo_correctness}) such that
\begin{equation}~\label{eq:vtae_estimate_quality}
(1-\epsilon)\Bigl(\psucc'-\frac{0.1}{n}\Bigr) < r < (1+\epsilon)\Bigl(\psucc' + \frac{0.1}{n}\Bigr)
\end{equation}
with probability $\geq 1-\delta$,
where $Q$ is
\begin{equation}\label{eq:coarse_vtaa_query_general}
\biggl(\frac{1}{\Delta^2} + \frac{1}{\abs{\Sright}}\sum_{S_{\mathrm{lm}}} \frac{1}{(l_1-p_i)^2}\biggr)\,
\poly\Bigl(\log\Bigl(\frac{n}{\delta \Delta}\Bigr)\Bigr),
\end{equation}
where $\Delta = l_1 - l_2$.
\end{Lemma}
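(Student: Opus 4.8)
Here is a plan for proving \lem{vtaa_vtae_on_vtalgo}.

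The plan is to obtain $\Amplify$ and $\Estimate$ by applying \thm{vtaa_vtae} to $\A = \A(\mathcal{O}, l_2, l_1, 0.01\delta)$ — a variable-time algorithm via the decomposition $\A = \A_{m+1}\cdots\A_0$ recorded above — and then substituting the bounds of \lem{vtalgo_correctness} and \lem{vtalgo_complexity}, which carry essentially all of the content. What remains is (i) a lower bound on the success probability $\psucc$ of $\A$, (ii) the arithmetic that turns the VTAA/VTAE cost of \thm{vtaa_vtae} into \eq{coarse_vtaa_query_general}, and (iii) converting a multiplicative estimate of $\psucc$ into the hybrid bound \eq{vtae_estimate_quality} on $\psucc'$.

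For (i), with $\alpha = 0.01\delta$, \lem{vtalgo_correctness} gives $\psucc' = \tfrac1n\bigl(\abs{\Sright} + \sum_{i\in\Smiddle}\abs{\beta_{i,1}}^2\bigr)$ and $\abs{\psucc - \psucc'}\le 2\alpha/n = 0.02\delta/n$; since $\abs{\Sright}\ge1$ and $\delta<1$, this yields $\psucc \ge (\abs{\Sright} - 0.02)/n \ge 0.98\,\abs{\Sright}/n > 0$, hence $1/\psucc = O(n/\abs{\Sright})$. This estimate is exactly what makes (ii) work. By \lem{vtalgo_complexity}, the total query complexity of $\A$ (playing the role of $t_m$ in \thm{vtaa_vtae}) is $t_{m+1} = O(\Delta^{-1}\log\tfrac1a)$ and $t_{\mathrm{avg}}^2 = O\bigl(\tfrac1n(\tfrac{\abs{\Sright}}{\Delta^2} + \sum_{S_{\mathrm{lm}}}\tfrac1{(l_1-p_i)^2})\log^2\tfrac1a\bigr)$, so in the VTAA cost $t_{m+1}\log t_{m+1} + \tfrac{t_{\mathrm{avg}}}{\sqrt{\psucc}}\log t_{m+1}$ the factor $1/\psucc = O(n/\abs{\Sright})$ cancels the $1/n$ in $t_{\mathrm{avg}}^2$, leaving $O\bigl((\tfrac1\Delta + \sqrt{\tfrac1{\Delta^2} + \tfrac1{\abs{\Sright}}\sum_{S_{\mathrm{lm}}}\tfrac1{(l_1-p_i)^2}}\,)\cdot\poly(\log\tfrac{n}{\delta\Delta})\bigr)$; here $a^{-1} = 2mn^{3/2}/\alpha$ and $m = O(\log\tfrac1\Delta)$ make $\log\tfrac1a$ and $\log t_{m+1}$ both $O(\log\tfrac{n}{\delta\Delta})$. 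Since $\Delta = l_1-l_2 < 1$, the bracketed quantity is at least $\Delta^{-2}\ge1$, so discarding the square root only weakens the bound and we reach the coarse cost $Q$ of \eq{coarse_vtaa_query_general}, i.e.\ $O(Q)$ queries for $\Amplify$; the VTAE branch of \thm{vtaa_vtae} multiplies this by $O(1/\epsilon)$ and by extra $\poly(\log\tfrac{n}{\delta\Delta})$ factors, giving $O(Q/\epsilon)$ for $\Estimate$.

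For correctness, step (iii) handles $\Estimate$: \thm{vtaa_vtae} returns $r$ with $(1-\epsilon)\psucc < r < (1+\epsilon)\psucc$ with probability $\ge1-\delta$, and since $2\alpha/n = 0.02\delta/n < 0.1/n$ we have $\psucc' - 0.1/n < \psucc \le \psucc' + 0.1/n$, so using $1-\epsilon,\,1+\epsilon\ge0$ gives \eq{vtae_estimate_quality}. For $\Amplify$, VTAA amplifies the $F=1$ (``good'') part of $\A$'s output, which by the state-level error bound underlying \lem{vtalgo_correctness} agrees, up to total weight $O(\alpha/n)$, with the uniform state on $\Sright$ plus the $\beta_{i,1}$-components on $\Smiddle$ — hence lies in $\mathrm{span}\{\ket{i} : i\in S_{\mathrm{mr}}\}$. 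Standard boosting of VTAA's constant success probability to $1-\delta/2$ costs a factor $O(\log\tfrac1\delta)$, and after amplification the $O(\alpha/n)$ error contributes at most $\tfrac{O(\alpha/n)}{\psucc} = O(\alpha/\abs{\Sright}) = O(\delta)$ to the probability of outputting an index outside $S_{\mathrm{mr}}$; choosing the constant $0.01$ in $\alpha = 0.01\delta$ small enough makes the total failure probability $\le\delta$, with the $O(\log\tfrac1\delta)$ overhead absorbed into the poly-logarithmic term.

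I expect the main obstacle to be exactly the combination of (i) and the correctness of $\Amplify$: one must check that the $O(\alpha/n)$ approximation error in $\A$ does not, after amplification and renormalization, concentrate on a ``bad'' index in $\Sleft$ — which is why the error parameter must scale with $\delta$ and why a state-level, not merely scalar, consequence of \lem{vtalgo_correctness} is needed — and one must notice that $\abs{\Sright}\ge1$ does double duty, keeping $\psucc$ bounded away from $0$ while simultaneously turning the $\tfrac1n$-weighted sum in $t_{\mathrm{avg}}^2$ into the $\tfrac1{\abs{\Sright}}$-weighted sum in $Q$. Everything else is routine substitution plus the elementary inequality $\sqrt{x}\le x$ for $x\ge1$.
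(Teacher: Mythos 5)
Your proposal is correct and follows essentially the same route as the paper's proof: apply \thm{vtaa_vtae_append} to $\A$ with $\alpha=0.01\delta$, lower-bound $\psucc$ via $\abs{\Sright}\geq 1$ using \lem{vtalgo_correctness}, substitute the stopping-time bounds of \lem{vtalgo_complexity} (dropping the square root since the expression is $\geq 1$), boost VTAA's constant success probability with $O(\log\frac{1}{\delta})$ repetitions and measure the index register, and convert VTAE's multiplicative estimate of $\psucc$ into \eq{vtae_estimate_quality} via $\abs{\psucc-\psucc'}<0.1/n$. Your renormalized-error bound of $O(\alpha/\abs{\Sright})$ for the $\Amplify$ correctness is slightly looser than the paper's accounting but still $O(\delta)$, so the argument goes through.
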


This lemma follows by applying \lem{vtalgo_correctness} and \lem{vtalgo_complexity} to \thm{vtaa_vtae}. The proof detail is given in the appendices.

 \subsection{Quantum algorithm for best-arm identification}\label{sec:step_three}
In this subsection, we use $\Amplify$ and $\Estimate$ to construct three algorithms (Algorithms~\ref{algo:locate}--\ref{algo:bestarm}) that work together to identify the best arm following the outline that we described at the beginning of this section.

\begin{algorithm}[htbp]
\KwInput{Oracle $\mathcal{O}$ as in \eq{quantum-bandit-defn}; confidence parameter $0<\delta<1$.}
$I_1, I_2 \leftarrow [0,1]$

$\delta \leftarrow \delta / 8$

\While{$\min I_1 - \max I_2 < 2\abs{I_1}$\label{lin:locate_while}}{
$I_1 \leftarrow \Shrink(\mathcal{O},1, I_1, \delta)$

$I_2 \leftarrow \Shrink(\mathcal{O},2, I_2, \delta)$

$\delta \leftarrow \delta/2$\label{lin:locate_delta_halve}
}
\Return{$I_1, I_2$}
\caption{$\Locate(\mathcal{O},\delta)$}
\label{algo:locate}
\end{algorithm}

\begin{algorithm}[ht]
\KwInput{Oracle $\mathcal{O}$ as in \eq{quantum-bandit-defn}; $k \in \{1,2\}$;
  interval $I=[a,b]$; confidence parameter $0<\delta<1$.}

$\epsilon\leftarrow (b-a)/5$

$\delta\leftarrow \delta/2$

Append arm $i=0$ with bias $p_0=1$ to $\mathcal{O}$; call the resulting oracle $\mathcal{O}'$\label{lin:shrink_append}

Construct variable-time quantum algorithms $\A_1,\A_2$:

\quad $\A_1 \leftarrow \A(\mathcal{O}', l_2 = a+\epsilon, l_1 = a+3\epsilon, 0.01\delta)$\label{lin:shrink_vtalgo1}

\quad $\A_2 \leftarrow \A(\mathcal{O}', l_2 = a+2\epsilon, l_1 = a+4\epsilon, 0.01\delta)$\label{lin:shrink_vtalgo2}

$r_1 \leftarrow \Estimate(\A_1, \epsilon = 0.1, \delta)$\label{lin:shrink_estimate1}

$r_2 \leftarrow \Estimate(\A_2, \epsilon = 0.1, \delta)$\label{lin:shrink_estimate2}

$B_1\leftarrow \mathds{1}({r_1 > \frac{k+0.5}{n+1}})$; $B_2\leftarrow \mathds{1}({r_2 > \frac{k+0.5}{n+1}})$

\Switch{$(B_1,B_2)$}
{
\textbf{case} \ $(0,0): I \leftarrow [a, a+3\epsilon]$\label{lin:shrink_switch_case1}

\textbf{case} \ $(0,1): I \leftarrow [a+\epsilon, a+4\epsilon]$

\textbf{case} \ $(1,0): I\leftarrow [a+\epsilon, a+4\epsilon]$

\textbf{case} \ $(1,1): I \leftarrow [a+2\epsilon, a+5\epsilon = b]$\label{lin:shrink_switch_case4}
}
\textbf{return} \ $I$
\caption{$\Shrink(\mathcal{O},k,I,\delta)$}
\label{algo:shrink}
\end{algorithm}

\begin{algorithm}[ht]
\KwInput{Oracle $\mathcal{O}$ as in \eq{quantum-bandit-defn}; confidence parameter $0<\delta<1$.}

$\delta \leftarrow \delta/2$\label{lin:bestarm_delta}

$I_1,I_2 \leftarrow \Locate(\mathcal{O},\delta)$\label{lin:bestarm_locate}

$l_1 \leftarrow  \min I_1$ (left endpoint of $I_1$)\label{lin:bestarm_l1}

$l_2 \leftarrow  \max I_2$ (right endpoint of $I_2$)\label{lin:bestarm_l2}

Construct variable-time quantum algorithm $\A$:

\quad $\A \leftarrow \A(\mathcal{O}, l_2, l_1, 0.01\delta)$
\label{lin:bestarm_vtalgo}

$i \leftarrow \Amplify(\A, \delta)$

\Return{i}
\caption{$\BestArm(\mathcal{O},\delta)$}
\label{algo:bestarm}
\end{algorithm}

We state the correctness and complexities of $\Amplify$ and $\Estimate$ as follows:
\begin{Lemma}[Correctness and complexity of \algo{locate}]\label{lem:locate}
Fix a confidence parameter $0<\delta<1$. Then the event $E = \{p_1\in I_1 \text{ and } p_2\in I_2 \text{ in all iterations of the while loop}\}$ holds with probability $\geq1-\delta$. When $E$ holds, \algo{locate} also satisfies the following for both $k\in\{1,2\}$:
\begin{enumerate}[nosep]
    \item its while loop (\lin{locate_while}) breaks at or before the end of iteration  $\ceil{\log_{5/3}(\frac{1}{\Delta_2})}+3$ and then returns $I_k$ with $p_k\in I_k$ and $\min I_1 - \max I_2 \geq 2\abs{I_1}$; during the while loop, we always have $\abs{I_1} = \abs{I_2} \geq \Delta_2/8$; and
    \item it uses
    $O\bigl(\sqrt{H} \, \poly\bigl(\log\bigl(\frac{n}{\delta \,\Delta_2}\bigr)\bigr)\bigr)$ queries.
\end{enumerate}
\end{Lemma}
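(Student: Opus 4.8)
The plan is to reduce this lemma to one \emph{deterministic} statement about a single call of $\Shrink$ and handle the confidence parameter by a union bound. Say a run of $\Estimate(\A,\epsilon,\delta)$ is \emph{good} if its output satisfies \eq{vtae_estimate_quality}; by \lem{vtaa_vtae_on_vtalgo} this happens with probability $\geq 1-\delta$. The deterministic claim is: if $[a,b]\subseteq[0,1]$, $p_k\in[a,b]$, and both $\Estimate$ calls inside $\Shrink(\mathcal O,k,[a,b],\delta)$ are good, then the returned $[a',b']\subseteq[a,b]$ satisfies $p_k\in[a',b']$ and $b'-a'=\tfrac35(b-a)$. Granting this: across \algo{locate} the confidence passed to the $\Estimate$ calls in round $j$ is $\delta/2^{j+3}$ ($\Locate$ sets $\delta\gets\delta/8$ and halves it each round, $\Shrink$ halves once more), so a union bound over the $\leq4$ such calls per round and over all rounds bounds the probability that any is not good by $4\sum_{j\geq1}\delta/2^{j+3}=\delta/2<\delta$. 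On the complementary (all-good) event the invariant ``$p_1\in I_1$, $p_2\in I_2$'' is maintained inductively (via the deterministic claim) from the initialization $I_1=I_2=[0,1]$, so $E$ holds and $\Pr[E]\geq1-\delta$ (finiteness of the loop is shown below).

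For the deterministic claim, put $\epsilon=(b-a)/5$; since \algo{shrink} appends an arm of bias $1$, which exceeds the right threshold $l_1-\Delta/8<b\leq1$ of both $\A_1,\A_2$, the hypothesis $\abs{\Sright}\geq1$ of \lem{vtalgo_correctness}/\lem{vtalgo_complexity} holds. For $\A_1=\A(\mathcal O',a+\epsilon,a+3\epsilon,\cdot)$ one has $\Delta=2\epsilon$, midpoint $a+2\epsilon$, right threshold $a+\tfrac{11}4\epsilon$, so by \lem{vtalgo_correctness} the scaled success probability $(n{+}1)\psucc'$ lies in $[\,1+N_{\geq a+11\epsilon/4},\,1+N_{\geq a+2\epsilon}\,]$, where $N_{\geq t}\coloneqq\abs{\{i\in[n]:p_i\geq t\}}$. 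Feeding this into \eq{vtae_estimate_quality} with $\epsilon=0.1$ and comparing $r_1$ to $\tfrac{k+0.5}{n+1}$, an elementary estimate using $k\in\{1,2\}$ gives $B_1=1$ when $p_k\geq a+\tfrac{11}4\epsilon$ and $B_1=0$ when $p_k<a+2\epsilon$; likewise for $\A_2$ (midpoint $a+3\epsilon$, right threshold $a+\tfrac{15}4\epsilon$), $B_2=1$ when $p_k\geq a+\tfrac{15}4\epsilon$ and $B_2=0$ when $p_k<a+3\epsilon$. Hence $(B_1,B_2)=(0,1)$ is impossible, and checking the three surviving branches of the \textbf{switch} against these inequalities (e.g.\ $(0,0)\Rightarrow a\leq p_k<a+\tfrac{11}4\epsilon\subseteq[a,a+3\epsilon]$; $(1,0)\Rightarrow a+2\epsilon\leq p_k<a+\tfrac{15}4\epsilon\subseteq[a+\epsilon,a+4\epsilon]$; $(1,1)\Rightarrow a+3\epsilon\leq p_k\leq b\subseteq[a+2\epsilon,b]$) shows $p_k$ lies in the chosen length-$3\epsilon=\tfrac35(b-a)$ subinterval, which itself lies in $[a,b]\subseteq[0,1]$.

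Now condition on $E$. Each round shrinks both $I_1,I_2$ by $\tfrac35$, so after $j$ rounds $\abs{I_1}=\abs{I_2}=(3/5)^j$. When $E$ holds, $p_1\in I_1$ and $p_2\in I_2$ give $\min I_1-\max I_2\geq(p_1-\abs{I_1})-(p_2+\abs{I_2})=\Delta_2-2\abs{I_1}$, which is $\geq2\abs{I_1}$ as soon as $\abs{I_1}\leq\Delta_2/4$; hence the \textbf{while} loop exits after at most $\ceil{\log_{5/3}(4/\Delta_2)}\leq\ceil{\log_{5/3}(1/\Delta_2)}+3$ rounds (as $\log_{5/3}4<3$), returning $I_k\ni p_k$ with $\min I_1-\max I_2\geq2\abs{I_1}$. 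Conversely, any round that is entered has $\abs{I_1}>\Delta_2/4$ at its start, so every interval handed to $\Shrink$ has length $>\Delta_2/4$ and every interval it returns has length $>\tfrac35\cdot\tfrac{\Delta_2}4=\tfrac{3\Delta_2}{20}>\tfrac{\Delta_2}8$, giving $\abs{I_1}=\abs{I_2}\geq\Delta_2/8$ throughout; this proves part~1.

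For part~2, there are $O(\log(1/\Delta_2))$ rounds, each making $O(1)$ calls of $\Estimate(\cdot,0.1,\cdot)$; by \lem{vtaa_vtae_on_vtalgo} (unwinding the bound of \thm{vtaa_vtae}) one such call costs $\poly(\log\tfrac n{\delta\Delta})\cdot\bigl(\tfrac1{\Delta^2}+\tfrac1{\abs{\Sright}}\sum_{S_{\mathrm{lm}}}\tfrac1{(l_1-p_i)^2}\bigr)^{1/2}$ queries, with $\Delta=2\epsilon$ for that round's interval. On an entered round $\epsilon>\Delta_2/20$, so $\tfrac1{\Delta^2}=O(\tfrac1{\Delta_2^2})=O(H)$ and every logarithm is $O(\log\tfrac n{\delta\Delta_2})$, and it remains to show $\sum_{S_{\mathrm{lm}}}(l_1-p_i)^{-2}=O(H)$ for each of $\A_1,\A_2$ (note $\tfrac1{\abs{\Sright}}\leq1$). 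For $i\in S_{\mathrm{lm}}$ we have $p_i<l_1-\Delta/8$, so $l_1-p_i>\Delta/8=\epsilon/4$; for $i\neq1$, writing $l_1-p_i=\Delta_i-(p_1-l_1)$ with $\abs{p_1-l_1}\leq5\epsilon$ (both lie in $[a,a+5\epsilon]=[a,b]$) and splitting on the size of $\Delta_i$ relative to $\epsilon$ yields $l_1-p_i=\Omega(\Delta_i)$ up to an absolute constant, so $(l_1-p_i)^{-2}=O(\Delta_i^{-2})$ and $\sum_{i\in S_{\mathrm{lm}},\,i\neq1}(l_1-p_i)^{-2}=O(H)$; the lone term $i=1$, if present, is $O(1/\epsilon^2)=O(H)$. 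Thus each $\Estimate$ call costs $O(\sqrt H\,\poly(\log\tfrac n{\delta\Delta_2}))$, and summing over the $O(\log(1/\Delta_2))$ rounds absorbs the extra factor into the polylog, giving the stated bound. The step I expect to be the main obstacle is exactly this last one: converting the ``raw'' distances $l_1-p_i$ from \eq{t_avg}/\eq{coarse_vtaa_query_general} into the instance gaps $\Delta_i$ relies on $\Locate$ guaranteeing that every interval it produces both contains $p_k$ \emph{and} has length $\Theta(\Delta_2)$ from below (so $l_1$ stays within $O(\epsilon)$ of $p_1$ while $\epsilon$ never drops far below $\Delta_2$), and the index $i=1$ (with $\Delta_1=0$) must be argued separately; a secondary point is that the additive $\pm0.1/n$ and multiplicative $\pm10\%$ slack in \eq{vtae_estimate_quality} is small enough for the threshold comparisons in \algo{shrink} to resolve $p_k$ correctly, which works precisely because $k\in\{1,2\}$.
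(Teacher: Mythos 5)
Your proposal follows essentially the same route as the paper, except that the paper factors the argument through \lem{shrink}: it invokes that lemma as a black box for each call of $\Shrink$ (correctness of the returned interval and its $O\bigl(\sqrt{H}\,\poly\log\bigr)$ cost), and then does exactly the loop-counting, the $\geq\Delta_2/8$ length bound, and the geometric-series union bound over the halving confidence parameters that you do. What you have done is inline the proof of \lem{shrink} (the threshold comparisons for $B_1,B_2$ via \eq{vtae_estimate_quality}, and the conversion of $\sum_{i\in S_{\mathrm{lm}}}(l_1-p_i)^{-2}$ into $O(H)$), and those inlined parts largely match the paper's; your tighter thresholds even rule out the case $(B_1,B_2)=(0,1)$, which the paper instead keeps as a reachable branch, and you correctly carry the square root in the VTAE cost and treat the $i=1$ term of the sum separately, both points the paper is more casual about.

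There is, however, one incorrect justification in your complexity step. You bound $\abs{p_1-l_1}\leq 5\epsilon$ ``because both lie in $[a,a+5\epsilon]=[a,b]$''. This is valid only for the call $\Shrink(\mathcal{O},1,I_1,\cdot)$; for the call with $k=2$ the interval is $I_2$, which contains $p_2$ but in general \emph{not} $p_1$ (indeed $p_1$ can exceed $\max I_2$ by a constant multiple of $\abs{I_1}$ while the loop is still running), so the claimed containment fails. The conclusion $\abs{p_1-l_1}=O(\epsilon)$ is still true, but it needs the extra step the paper uses in the proof of \lem{shrink}: write $\abs{p_1-l_1}\leq\Delta_2+\abs{p_2-l_1}$, bound $\abs{p_2-l_1}\leq\abs{I_2}=5\epsilon$ from $p_2,l_1\in I_2$, and bound $\Delta_2\leq 40\epsilon$ from the guarantee $\abs{I_2}\geq\Delta_2/8$ (equivalently $\epsilon>\Delta_2/20$ on any entered round, which you already established). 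With that repair the splitting argument $l_1-p_i=\Omega(\Delta_i)$ goes through with a larger absolute constant, and the rest of your proof is sound.
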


\begin{Lemma}[Correctness and complexity of \algo{shrink}]
\label{lem:shrink}
Fix $k\in\{1,2\}$, an interval $I = [a,b]$, and a confidence parameter $0<\delta<1$. Suppose that $p_k\in I$ and $\abs{I} \geq \Delta_2/8$. Then \algo{shrink}
\begin{enumerate}[nosep]
\item outputs an interval $J$ with $|J| = \frac{3}{5}\abs{I}$ such that $p_k\in J$ with probability $\geq 1-\delta$, and
\item uses $O\bigl(\sqrt{H} \, \poly\bigl(\log\bigl(\frac{n}{\delta \,\Delta_2}\bigr)\bigr)\bigr)$ queries.
\end{enumerate}
\end{Lemma}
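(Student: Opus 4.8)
The length claim in item~1 is immediate: each of the four intervals \algo{shrink} can return has length $3\epsilon = \tfrac{3}{5}\abs{I}$, since $\epsilon = (b-a)/5$. For the rest, I would first condition on the event that both calls to $\Estimate$ return estimates obeying \eq{vtae_estimate_quality}; since \algo{shrink} halves $\delta$ before those calls, a union bound gives this event probability $\ge 1-\delta$, and it then suffices to argue deterministically that the returned interval contains $p_k$. Let $N(t)$ denote the number of arms of $\mathcal{O}'$ (which has the appended arm $0$ with $p_0 = 1$) whose bias is $\ge t$; since $p_0 = 1 \ge p_1 > p_2 \ge \cdots$, the $(k{+}1)$-th largest bias of $\mathcal{O}'$ is exactly $p_k$, so $N(t) \ge k+1 \iff p_k \ge t$ for $t \le 1$. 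For $\A_1$ (with $l_1 = a+3\epsilon$, $\Delta = 2\epsilon$), the definitions \eq{Sleft}--\eq{Sright} give $\abs{\Sright} = N(a+2.75\epsilon)$ and $\abs{\Sright}+\abs{\Smiddle} = N(a+2\epsilon)$, so feeding $\tfrac{\abs{\Sright}}{n+1} \le \psucc' \le \tfrac{\abs{\Sright}+\abs{\Smiddle}}{n+1}$ (\lem{vtalgo_correctness}) into \eq{vtae_estimate_quality} at $\epsilon = 0.1$ and comparing with the threshold $\tfrac{k+0.5}{n+1}$, a short calculation gives $B_1 = 1 \Rightarrow p_k \ge a+2\epsilon$ and $B_1 = 0 \Rightarrow p_k < a+2.75\epsilon$; likewise for $\A_2$ ($l_1 = a+4\epsilon$), $B_2 = 1 \Rightarrow p_k \ge a+3\epsilon$ and $B_2 = 0 \Rightarrow p_k < a+3.75\epsilon$. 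Using $p_k \in [a,b] = [a,a+5\epsilon]$, I would then check the four cases of the \textsf{switch}: $(0,0)$ puts $p_k$ in $[a,a+2.75\epsilon)\subseteq[a,a+3\epsilon]$, $(1,0)$ in $[a+2\epsilon,a+3.75\epsilon)\subseteq[a+\epsilon,a+4\epsilon]$, $(1,1)$ in $[a+3\epsilon,a+5\epsilon]\subseteq[a+2\epsilon,a+5\epsilon]$, while $(0,1)$ would force $p_k < a+2.75\epsilon$ and $p_k \ge a+3\epsilon$ simultaneously and hence cannot occur under the conditioning. In every case the returned interval contains $p_k$.

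\emph{Query count.} By \lem{vtaa_vtae_on_vtalgo} each $\Estimate$ call costs $O(Q/\epsilon) = O(Q)$ queries (here $\epsilon = 0.1$), with $Q$ as in \eq{coarse_vtaa_query_general} and $\Delta = 2\epsilon$; so it suffices to bound $\Delta^{-2} + \abs{\Sright}^{-1}\sum_{i\in S_{\mathrm{lm}}}(l_1-p_i)^{-2}$ by $O\bigl(H\,\poly(\log(n/(\delta\Delta_2)))\bigr)$, since then $Q = \tilde{O}(\sqrt H)$. The hypothesis $\abs{I}\ge\Delta_2/8$ gives $\epsilon \ge \Delta_2/40$, so $\Delta^{-2} = \tfrac14\epsilon^{-2} = O(\Delta_2^{-2}) = O(H)$; and since $\abs{\Sright}\ge1$ it remains to control $\sum_{i\in S_{\mathrm{lm}}}(l_1-p_i)^{-2}$. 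Here I would use two facts: (a) $\abs{p_1 - l_1} = O(\epsilon)$ — for $k=1$ this is immediate from $p_1\in I$, and for $k=2$ it follows from $p_1 - l_1 = \Delta_2 + (p_2 - l_1) \le \Delta_2 + O(\epsilon) = O(\epsilon)$, again using $\Delta_2 = O(\epsilon)$; and (b) arm $0 \notin S_{\mathrm{lm}}$, while every $i\in S_{\mathrm{lm}}$ has $l_1 - p_i > \Delta/8 = \epsilon/4$. Writing $\Delta_i = (l_1 - p_i)+(p_1 - l_1)$: for arms $i\ge2$ in $S_{\mathrm{lm}}$ with $l_1-p_i$ larger than a suitable constant times $\epsilon$, fact (a) gives $l_1 - p_i \ge \tfrac12\Delta_i$, so these contribute at most $4\sum_{i\ge2}\Delta_i^{-2} = 4H$; for the remaining arms $i\ge2$ in $S_{\mathrm{lm}}$, fact (a) forces $\Delta_i = O(\epsilon)$, hence $\Delta_i^{-2} = \Omega(\epsilon^{-2})$, so at most $O(\epsilon^2 H)$ of them occur and each contributes at most $(\epsilon/4)^{-2} = O(\epsilon^{-2})$, for a total $O(H)$; and arm $1$, if in $S_{\mathrm{lm}}$, is a single term $(l_1-p_1)^{-2} < 16\epsilon^{-2} = O(H)$. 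Summing, $\sum_{i\in S_{\mathrm{lm}}}(l_1-p_i)^{-2} = O(H)$, so $Q = \tilde{O}(\sqrt H)$ and \algo{shrink} makes $O\bigl(\sqrt H\,\poly(\log(n/(\delta\Delta_2)))\bigr)$ queries in total (the polylog's argument is $n/(\delta\Delta) = O(n/(\delta\Delta_2))$ since $\Delta \ge \Delta_2/20$).

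\emph{Main obstacle.} The delicate point is the $k=2$ query bound: a priori $p_1$ can lie far to the right of the interval $I\ni p_2$, so $l_1 - p_i$ can be far smaller than $\Delta_i = p_1 - p_i$, which would inflate $\sum(l_1-p_i)^{-2}$ well past $H$. The estimate survives only because $\abs{I}\ge\Delta_2/8$ forces $\Delta_2 = O(\epsilon)$, which in turn keeps $l_1$ within $O(\epsilon)$ of $p_1$ even when $k=2$; getting the constants in the case split above to line up — so that ``$l_1 - p_i$ large'' genuinely implies $l_1 - p_i \ge c\,\Delta_i$ — is where the care is needed.
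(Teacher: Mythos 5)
Your proposal is correct and follows essentially the same route as the paper's proof: condition on both $\Estimate$ calls satisfying \eq{vtae_estimate_quality} (union bound after halving $\delta$), convert $B_1,B_2$ into one-sided locations of $p_k$ via the rank of $p_k$ in $\mathcal{O}'$ and the bounds $\frac{\abs{\Sright}}{n+1}\le\psucc'\le\frac{\abs{\Sright}+\abs{\Smiddle}}{n+1}$, check the four switch cases, and for the query count use $\Delta\ge\Delta_2/20$, $\abs{p_1-l_1}=O(\Delta)$ (via $\Delta_2=O(\Delta)$ when $k=2$), $l_1-p_i>\Delta/8$ on $S_{\mathrm{lm}}$, and $\abs{\Sright}\ge1$ from the appended arm. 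The only differences are cosmetic: your threshold implications are slightly sharper (making $(B_1,B_2)=(0,1)$ impossible on the good event, whereas the paper's analysis allows it), the paper replaces your two-case counting argument by the single uniform ratio bound $\frac{p_1-p_i}{l_1-p_i}<1+\frac{25\Delta}{\Delta/8}=201$ for $i\in S_{\mathrm{lm}}$, and you explicitly handle the $i=1$ term (where $p_1-p_i=0$) that the paper's termwise comparison leaves implicit.
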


The proofs of \lem{locate} and \lem{shrink} appear in the supplementary material.

The following theorem is equivalent to \thm{quantum-BAI-confidence}.

\begin{Theorem}[Correctness and complexity of \algo{bestarm}]\label{thm:bestarm}
Fix a confidence parameter $0<\delta<1$. Then, with probability $\geq1-\delta$, \algo{bestarm}
\begin{enumerate}[nosep]
\item outputs the best arm, and
\item uses $O\bigl(\sqrt{H}\,\poly\bigl(\log\bigl(\frac{n}{\delta \,\Delta_2}\bigr)\bigr)\bigr)$ queries.
\end{enumerate}
\end{Theorem}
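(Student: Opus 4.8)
The plan is to derive \thm{bestarm} by stitching together \lem{locate}, \lem{shrink}, and \lem{vtaa_vtae_on_vtalgo}, tracking both the success probabilities (via a union bound) and the query complexities (via an additive accounting). Recall \algo{bestarm} does three things after halving $\delta$: it calls $\Locate$, extracts endpoints $l_1 = \min I_1$ and $l_2 = \max I_2$, constructs the variable-time algorithm $\A = \A(\mathcal{O}, l_2, l_1, 0.01\delta)$, and returns $\Amplify(\A, \delta)$.

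For correctness, first I would invoke \lem{locate}: with the halved confidence parameter, with probability $\geq 1 - \delta/2$ the event $E$ holds, and then $\Locate$ returns intervals with $p_1 \in I_1$, $p_2 \in I_2$, $\min I_1 - \max I_2 \geq 2|I_1|$, and $|I_1| = |I_2| \geq \Delta_2/8$ throughout. On this event, $l_2 = \max I_2 \geq p_2$ and $l_1 = \min I_1 \leq p_1$, and moreover $l_1 - l_2 \geq 2|I_1| > 0$, so $[l_2, l_1]$ is a valid nonempty interval for $\A$. The key point is that with these thresholds, $\Sright = \{i : p_i \geq l_1 - \Delta/8\}$ (with $\Delta = l_1 - l_2$) contains arm $1$ since $p_1 \geq l_1 > l_1 - \Delta/8$, and $S_{\mathrm{mr}} = \Smiddle \cup \Sright = \{i : p_i \geq l_1 - \Delta/2\}$. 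I need to check that arm $1$ is the \emph{only} index in $S_{\mathrm{mr}}$: any $i \geq 2$ has $p_i \leq p_2 \leq l_2 = l_1 - \Delta$, so $p_i \leq l_1 - \Delta < l_1 - \Delta/2$, hence $i \notin S_{\mathrm{mr}}$. Therefore $S_{\mathrm{mr}} = \{1\}$, and $|\Sright| \geq 1$ is satisfied so \lem{vtaa_vtae_on_vtalgo} applies. Then $\Amplify(\A, \delta)$ outputs an index in $S_{\mathrm{mr}} = \{1\}$ with probability $\geq 1 - \delta$; a union bound over the failure of $E$ (probability $\leq \delta/2$) and the failure of $\Amplify$ (probability $\leq \delta$ — but note the $\delta$ inside has already been halved, so this is $\leq \delta/2$) gives overall success probability $\geq 1 - \delta$.

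For the query complexity, I would add up the three contributions. $\Locate$ uses $O(\sqrt{H}\,\poly(\log(n/(\delta\Delta_2))))$ queries by part~2 of \lem{locate}. The final $\Amplify(\A, \delta)$ call costs $O(Q)$ queries by \lem{vtaa_vtae_on_vtalgo}, where $Q$ is the expression in \eq{coarse_vtaa_query_general} with $\Delta = l_1 - l_2 \geq \Delta_2/4$ (since $l_1 - l_2 \geq 2|I_1| \geq 2 \cdot \Delta_2/8 = \Delta_2/4$). So the $1/\Delta^2$ term is $O(1/\Delta_2^2) = O(H)$... actually it is $O(16/\Delta_2^2) \le O(H)$ since $H \ge 1/\Delta_2^2$; I want $\sqrt{H}$, so I should be careful: the bound in \eq{coarse_vtaa_query_general} already has the form of a "squared complexity" and \thm{vtaa_vtae} takes a square root internally, so in fact $Q$ as written in \lem{vtaa_vtae_on_vtalgo} reads $\bigl(\tfrac{1}{\Delta^2} + \tfrac{1}{|\Sright|}\sum_{S_{\mathrm{lm}}} \tfrac{1}{(l_1-p_i)^2}\bigr)\poly(\log)$ — wait, that is not a square root. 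Let me reconsider: I expect the statement intends $Q = \sqrt{\tfrac{1}{\Delta^2} + \tfrac{1}{|\Sright|}\sum \tfrac{1}{(l_1-p_i)^2}}\,\poly(\log)$, matching $t_{\mathrm{avg}}/\sqrt{p_{\mathrm{succ}}}$ scaling from \thm{vtaa_vtae}; under that reading, the first term contributes $O(\sqrt{1/\Delta_2^2}) = O(1/\Delta_2) = O(\sqrt{H})$, and the second term, with $|\Sright| \geq 1$ and each $|l_1 - p_i| \geq$ some function of $\Delta_i$ (since $l_1 \leq p_1$ implies $l_1 - p_i \le p_1 - p_i = \Delta_i$ — I need a lower bound on $|l_1 - p_i|$ in terms of $\Delta_i$, which should come from $l_1 \geq$ something like $p_1 - |I_1| \ge p_1 - \Delta_2/\text{const}$, hence $l_1 - p_i \ge \Delta_i - \Delta_2/\text{const} \gtrsim \Delta_i$ when $\Delta_i$ is not too close to $\Delta_2$, with the remaining arms handled separately), contributes $O(\sqrt{\sum_i 1/\Delta_i^2}) = O(\sqrt{H})$. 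Adding the three pieces and absorbing constants into the $\poly(\log)$ factor yields the claimed $O(\sqrt{H}\,\poly(\log(n/(\delta\Delta_2))))$.

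The main obstacle I anticipate is the careful bookkeeping in the complexity bound for the final $\Amplify$ call: I must lower-bound $|l_1 - p_i|$ for $i \in S_{\mathrm{lm}}$ in terms of $\Delta_i$ so that the sum $\sum_{S_{\mathrm{lm}}} (l_1 - p_i)^{-2}$ is genuinely $O(H)$ rather than something larger. This requires using the guarantee $p_1 \in I_1$ together with $|I_1| \lesssim \Delta_2$ to pin $l_1$ close to $p_1$ from below, and then splitting the sum according to whether $\Delta_i$ is comparable to $\Delta_2$ or much larger; arms with $\Delta_i$ very close to $\Delta_2$ need the $1/\Delta^2 = O(1/\Delta_2^2)$ term to absorb them. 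Everything else — the union bound, the validity of the interval $[l_2,l_1]$, the identification $S_{\mathrm{mr}} = \{1\}$ — is routine once $E$ is assumed.
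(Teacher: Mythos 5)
Your proposal is correct and follows essentially the same route as the paper's proof: \lem{locate} with the halved $\delta$ pins $p_2< l_2<l_1<p_1$ with $l_1-l_2\geq 2|I_1|\geq\Delta_2/4$, which forces $\Smiddle\cup\Sright=\{1\}$ so that $\Amplify$ returns arm $1$ (union bound over the two $\delta/2$ failures), and the cost is the sum of the $\Locate$ and $\Amplify$ costs, where for the sub-optimal arms one compares $l_1-p_i$ with $\Delta_i$ exactly as the paper does inside the proof of \lem{shrink} — in fact $p_1-l_1\leq|I_1|\leq\Delta/2$ and $l_1-p_i>\Delta/8$ give the uniform bound $l_1-p_i\geq\Delta_i/5$, so your case split on whether $\Delta_i$ is close to $\Delta_2$ is unnecessary. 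Your reading of \eq{coarse_vtaa_query_general} with a square root is the intended one: the displayed expression omits it, but the substitution of $t_{\mathrm{avg}}/\sqrt{\psucc}$ from \thm{vtaa_vtae_append} and the paper's own $\sqrt{H}$ conclusions require the square-rooted form, so your flagging of this is a point in your favor rather than a gap.
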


\begin{proof}
Note that $\delta$ is halved at the beginning, on \lin{bestarm_delta}. For the first claim, we know from the first claim of \lem{locate} that, with probability $\geq1-\delta/2$, the two intervals $I_k$ assigned in \lin{bestarm_locate} have $\min I_1 - \max I_2 \geq 2\abs{I_1} \geq \Delta_2/4$
and $p_k\in I_k$. Assuming this holds, we have $p_2<l_2<l_2+\Delta_2/4 \leq l_1<p_1$ for the endpoints $l_k$ assigned in Lines~\ref{lin:bestarm_l1} and \ref{lin:bestarm_l2}. This means that the variable-time quantum algorithm $\A$ defined in \lin{bestarm_vtalgo} has $\Sright\cup\Smiddle=\{1\}$, so $\Amplify(\A,\delta/2)$ returns index $1$ with probability $\geq1-\delta/2$. Therefore, the overall probability of \algo{bestarm} returning the best arm is at least $1-\delta$.

The second claim follows immediately from adding the complexity of $\Locate(\mathcal{O},\delta/2)$ (\lem{locate}) and $\Amplify(\A,\delta/2)$ (\lem{vtaa_vtae_on_vtalgo}, using $l_1-l_2\geq\Delta_2/4$).
\end{proof}

By establishing \thm{bestarm}, we have established \thm{quantum-BAI-confidence}, our main claim. %the main claim of \sec{BAI}.
As discussed previously, the main complexity measure of interest in the classical case is $H$, and we see that we get a quadratic speedup in terms of this parameter.

We can see that the poly-logarithmic factor has degree about $6$ from \eq{detailed_vtaa_query_general}, \eq{detailed_vtae_query_general}, and \eq{bestarm_complexity}. It would be interesting to reduce this degree. A more fundamental challenge is to remove the variable $n$ that appears in our log factors. In the classical case, $n$ was already removed from log factors in early work~\cite{pac_bandits_evendar_mansour} by a procedure called ``median elimination''.
However, quantizing the median elimination framework is nontrivial, as the query complexity for outputting the $n/2$ smallest items among $n$ elements is $\Theta(n)$~\cite[Theorem 1]{ambainis2010new}, exceeding our budget of $O(\sqrt{n})$.

As corollaries of our main results in the fixed-confidence setting, we provide results on best-arm identification in the PAC (Probably Approximately Correct) and fixed-budget settings. In the $(\epsilon,\delta)$-PAC setting, the goal is to identify an arm $i$ with $ p_i \geq p_1 - \epsilon$ with probability $\geq1-\delta$. Our best-arm identification algorithm can be modified to work in this setting as well. More precisely, we can modify $\Locate$ (\algo{locate}) by adding a breaking condition to the while loop when $|I_1|$ (or equivalently $|I_2|$) is smaller than $\epsilon$. This gives the following result:

\begin{Corollary}
There is a quantum algorithm that finds an $\epsilon$-optimal arm with query complexity
$O\bigl(\sqrt{\min\{\frac{n}{\epsilon^2},H\}}\cdot \poly\bigl(\log\bigl(\frac{n}{\delta \,\Delta_2}\bigr)\bigr)\bigr)$.
\end{Corollary}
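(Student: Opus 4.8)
The plan is to leave \algo{bestarm}, \algo{locate}, \algo{shrink} almost unchanged and add a single extra stopping condition. Concretely, I would modify the while loop of \algo{locate} (\lin{locate_while}) so that it also breaks once $\abs{I_1}<c\epsilon$ (equivalently $\abs{I_2}<c\epsilon$) for a fixed small constant, say $c=1/2$, and branch in \algo{bestarm} on which condition caused termination (the original code sets $l_2=\max I_2$, which need no longer satisfy $l_2<l_1$ after an early break, so \algo{bestarm} must treat the two exits differently). Since each iteration still shrinks $\abs{I_1}$ by a factor $3/5$ (\lem{shrink}), the modified loop terminates after $O(\log(1/\epsilon))$ iterations regardless of $\Delta_2$, and the event $E=\{p_1\in I_1,\ p_2\in I_2\text{ in every iteration}\}$ still holds with probability $\ge1-\delta$ by exactly the union bound of \lem{locate} (the $\Shrink$ calls are unchanged). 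Condition on $E$ and split on $\abs{I_1}$ at termination.

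\emph{Case A ($\abs{I_1}\ge c\epsilon$ at termination).} Then the loop exited through \lem{locate}'s gap condition, so the run is identical to the unmodified one and \algo{bestarm} outputs the exact best arm using $\tilde O(\sqrt H)$ queries. Moreover $\Delta_2\ge\min I_1-\max I_2\ge2\abs{I_1}\ge2c\epsilon$, so $\Delta_i\ge\Delta_2=\Omega(\epsilon)$ for all $i\ge2$ and hence $H=\sum_{i\ge2}\Delta_i^{-2}=O(n/\epsilon^2)$; thus $\tilde O(\sqrt H)=\tilde O(\sqrt{\min\{n/\epsilon^2,H\}})$.

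\emph{Case B ($\abs{I_1}<c\epsilon$ at termination).} Then $\abs{I_1}=\Theta(\epsilon)$, and inspecting the previous iteration (where the gap was still ``small'') shows $\Delta_2=O(\epsilon)$, so $\log(1/\epsilon)=O(\log(1/\Delta_2))$ and the extra iteration count folds into $\poly(\log(n/(\delta\Delta_2)))$. In \algo{bestarm} I would set $l_1=\min I_1$ and, unless $l_1\le(1-c)\epsilon$ (in which case $p_1<\epsilon$ and every arm is $\epsilon$-optimal, so output anything), run $\Amplify$ on $\A(\mathcal O,l_2,l_1,0.01\delta)$ with $l_2 := l_1-(1-c)\epsilon$, so $\Delta := l_1-l_2=(1-c)\epsilon=\Theta(\epsilon)$. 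Correctness: from $l_1\le p_1<l_1+c\epsilon$ we get arm $1\in\Sright$ (since $p_1\ge l_1>l_1-\Delta/8$), so $\abs{\Sright}\ge1$; and every $i\in S_{\mathrm{mr}}$ has $p_i\ge l_1-\Delta/2\ge(p_1-c\epsilon)-(1-c)\epsilon=p_1-\epsilon$, so whatever index $\Amplify$ returns is $\epsilon$-optimal. By \lem{vtaa_vtae_on_vtalgo} this $\Amplify$ succeeds with probability $\ge1-\delta$ and costs $\tilde O\bigl(\sqrt{\Delta^{-2}+\abs{\Sright}^{-1}\sum_{i\in S_{\mathrm{lm}}}(l_1-p_i)^{-2}}\bigr)$. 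I would bound this (and, by the same reasoning, each $\Shrink$ call in Case B, where $\Delta=\Theta(\abs I)=\Omega(\epsilon)$ and $l_1$ lies within $O(\abs I)$ of $p_1$ --- using $\Delta_2=O(\epsilon)\le O(\abs I)$ for the $k=2$ calls) by $\tilde O(\sqrt{\min\{n/\epsilon^2,H\}})$: the quantity under the root is $O(n/\epsilon^2)$ trivially (each of the $\le n$ summands is $O(\epsilon^{-2})$ since $l_1-p_i>\Delta/8=\Omega(\epsilon)$ on $S_{\mathrm{lm}}$, with $1\notin S_{\mathrm{lm}}$, $\abs{\Sright}\ge1$, and $\Delta^{-2}=O(\epsilon^{-2})$), and it is also $O(H)$ because $(l_1-p_i)^{-2}=O(\Delta_i^{-2})$ for each $i\in S_{\mathrm{lm}}$ --- if $\Delta_i=\Omega(\epsilon)$ then $l_1-p_i=\Delta_i-(p_1-l_1)=\Omega(\Delta_i)$ as $p_1-l_1=O(\epsilon)$, and if $\Delta_i=O(\epsilon)$ then $(l_1-p_i)^{-2}=O(\epsilon^{-2})=O(\Delta_i^{-2})$ --- together with $\Delta^{-2}=O(\Delta_2^{-2})\le O(H)$. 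Summing the $O(\log(1/\epsilon))$ iterations of $\Locate$ and the single final $\Amplify$, and union-bounding failure using the $\delta$-halving already built into \algo{locate}/\algo{bestarm}, yields an $\epsilon$-optimal arm with probability $\ge1-\delta$ using $\tilde O(\sqrt{\min\{n/\epsilon^2,H\}})$ queries.

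The hardest part is Case B: one must calibrate the break constant $c$ together with the final interval width $\Delta$ so that simultaneously (i) $S_{\mathrm{mr}}$ contains arm $1$ (so $\Amplify$ returns something) and (ii) every arm of $S_{\mathrm{mr}}$ is provably within $\epsilon$ of optimal, while keeping $\Delta=\Theta(\epsilon)$ large enough that $\Delta^{-2}$ stays within budget. The other point to watch is the uniform accounting: packaging both the ``found the exact best arm'' regime (Case A) and the ``$\epsilon$-relaxed'' regime (Case B) as the single bound $\tilde O(\sqrt{\min\{n/\epsilon^2,H\}})$, which is exactly what the comparison $(l_1-p_i)^{-2}=O(\min\{\Delta_i^{-2},\epsilon^{-2}\})$ delivers.
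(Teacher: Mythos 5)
Your proposal is correct and follows essentially the same route as the paper: add an early break to $\Locate$ when the interval width drops to $\Theta(\epsilon)$, feed the final $\Amplify$ an interval with $l_1-l_2=\Theta(\epsilon)$, and read off correctness and complexity from \lem{vtalgo_correctness}, \lem{vtaa_vtae_on_vtalgo}, and \lem{shrink}, using $(l_1-p_i)^{-2}=O(\min\{\Delta_i^{-2},\epsilon^{-2}\})$ to get the $\min\{n/\epsilon^2,H\}$ bound. You additionally work out details the paper leaves implicit (the two-exit branching in $\BestArm$, the degenerate $p_1<\epsilon$ case, and the observation that Case B forces $\Delta_2=O(\epsilon)$), all of which check out.
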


Note that our modification means that the $\Amplify$ step in \algo{bestarm} takes an input interval $I$ with $|I| = l_1 - l_2\in[\epsilon/2, \epsilon]$. The correctness and complexity follow directly from \lem{vtalgo_correctness} and \lem{vtaa_vtae_on_vtalgo}. For comparison,~\citet{pac_bandits_evendar_mansour} gave a classical PAC algorithm with complexity $O\bigl(\frac{n}{\epsilon^2}\log\bigl(\frac{n}{\delta}\bigr)\bigr)$, which was later improved to $O\bigl(\sum_{i=1}^n\min\{\epsilon^{-2},\Delta_i^{-2}\}\cdot\log\bigl(\frac{n}{\delta\Delta_2}\bigr)\bigr)$ by~\citet{gabillon2012best}.

In the supplementary material, we also show how to identify the best arm with high probability for a fixed number of total queries (the fixed-budget setting) given knowledge of $H$.

%%%%%%%%%%%%%%%%%%%%%%%%%%%%%%%%%%%%%%%%%%%%%%%%%%%%%%%%%%%%%%%%%%%%%%%%%%%%%%

\section{Quantum lower bound}\label{sec:quantum-lower}
In this section, we describe a lower bound for the quantum best-arm identification problem. Our lower bound shows that the algorithm of \thm{quantum-BAI-confidence} is optimal up to poly-logarithmic factors.

\begin{restatable}{Theorem}{quantumlower}\label{thm:quantum-BAI-confidence-lower}
Let $p\in (0,1/2)$. For any biases $p_i \in [p,1-p]$, any quantum algorithm that identifies the best arm requires $\Omega(\sqrt{H})$ queries to the multi-armed bandit oracle $\mathcal{O}$.
\end{restatable}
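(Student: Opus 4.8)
The plan is to prove the lower bound $\Omega(\sqrt{H})$ via a reduction to (or direct use of) the quantum query lower bound for a suitable composed search-type problem, using the adversary method or a hybrid argument. The natural strategy is to first reduce to a hard family of instances for which $H$ takes a prescribed value, and then lower-bound the query complexity on that family.

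\medskip

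\noindent\textbf{Step 1: Reduce to a geometric family of hard instances.} Following the classical lower-bound constructions (e.g.\ \citealp{mannor2004sample}), I would fix the gaps $\Delta_i$ to take a geometric profile: partition the arms $\{2,\ldots,n\}$ into blocks, where block $\ell$ contains $\sim n_\ell$ arms all having gap $\sim \Delta^{(\ell)}$, with the $\Delta^{(\ell)}$ ranging dyadically. One then has $H \asymp \sum_\ell n_\ell / (\Delta^{(\ell)})^2$, and it suffices to show a matching lower bound for such instances. By a further averaging/direct-sum argument, it is enough to lower-bound the cost of a single block: an instance with $n'$ arms, one of which (unknown) has bias $p_1 = 1/2 + \Delta$ while the others have bias $1/2$, requires $\Omega(\sqrt{n'}/\Delta)$ queries to identify the special arm. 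Summing the per-block bounds $\sqrt{n_\ell}/\Delta^{(\ell)} = \sqrt{n_\ell/(\Delta^{(\ell)})^2}$ and using $\sum_\ell \sqrt{a_\ell} \geq \sqrt{\sum_\ell a_\ell}$ (up to a $\log n$ loss from the number of blocks) recovers $\Omega(\sqrt{H})$ up to poly-log factors, which is all that is claimed.

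\medskip

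\noindent\textbf{Step 2: Lower bound for a single block (the core).} The core is: given oracle access as in \eq{quantum-bandit-defn} to $n'$ arms where arm $i^\star$ (uniformly random) has bias $1/2+\Delta$ and the rest $1/2$, any algorithm finding $i^\star$ with probability $\geq 2/3$ needs $\Omega(\sqrt{n'}\,\Delta^{-1})$ queries. I would prove this with a hybrid argument comparing the algorithm's behavior on the ``all-$1/2$'' oracle $\mathcal{O}_0$ (where no special arm exists, so the algorithm cannot identify one) with each planted oracle $\mathcal{O}_{i^\star}$. The key quantity is the trace-distance increment per query. Writing the state before query $t$ as $\ket{\phi_{t-1}}$ with reduced amplitude $\alpha_{i^\star,t-1}$ on index $i^\star$, a single application of $\mathcal{O}_{i^\star}$ versus $\mathcal{O}_0$ perturbs the state by $O(\Delta \cdot |\alpha_{i^\star,t-1}|)$ (since the two coin states $\ket{\coin{1/2+\Delta}}$ and $\ket{\coin{1/2}}$ differ by $\Theta(\Delta)$, and only the $\ket{i^\star}$ component sees this difference). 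Hence after $T$ queries $\|\ket{\phi_T^{(i^\star)}} - \ket{\phi_T^{(0)}}\| = O\!\big(\Delta \sum_{t} |\alpha_{i^\star,t-1}|\big)$. Averaging over $i^\star$ and using Cauchy–Schwarz together with $\sum_{i} |\alpha_{i,t-1}|^2 \leq 1$ gives $\mathbb{E}_{i^\star}\|\ket{\phi_T^{(i^\star)}} - \ket{\phi_T^{(0)}}\| = O(\Delta T / \sqrt{n'})$. For the algorithm to distinguish (hence identify $i^\star$), this average distance must be $\Omega(1)$, forcing $T = \Omega(\sqrt{n'}/\Delta)$. This is essentially the standard ``quantum adversary / hybrid'' lower bound for search with a weak oracle; I would cite \citealp{bennett1997strengths} for the hybrid bookkeeping and handle the $\Theta(\Delta)$ oracle-difference estimate directly from the definition of $\ket{\coin{p}}$.

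\medskip

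\noindent\textbf{Main obstacle.} The delicate point is Step 1's direct-sum / block-combination argument: one must ensure that an algorithm solving the full $n$-arm instance can be ``projected'' onto each block so that it solves that block's subproblem with the same (or comparable) query budget, \emph{and} that the blocks are genuinely independent so that the per-block lower bounds add rather than merely max. The standard way around this is a careful information-theoretic or hybrid packing argument over the product family of instances (one planted special arm per block, each block's identity independent), directly bounding $\mathbb{E}\|\ket{\phi_T^{(\vec{i}^\star)}} - \ket{\phi_T^{(0)}}\|$ where the oracle difference now sums block contributions; the cross-terms vanish in expectation because the planted positions are independent across blocks, which again yields $\Delta^{(\ell)}$-weighted contributions that must each be controlled, giving $T^2 \gtrsim \sum_\ell n_\ell/(\Delta^{(\ell)})^2 \asymp H$. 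Verifying that the junk registers $\ket{u_i},\ket{v_i}$ do not help (they can be fixed adversarially, and the algorithm is insensitive to them as noted in the excerpt) is routine but should be stated. I expect the $\log n$ slack in combining $\sum_\ell \sqrt{a_\ell}$ versus $\sqrt{\sum_\ell a_\ell}$ to be absorbed into the ``poly-logarithmic factors'' already permitted in the statement.
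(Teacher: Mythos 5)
Your Step 2 (the single-block hybrid bound $\Omega(\sqrt{n'}/\Delta)$) is essentially sound, but the combination in Step 1 is where the argument genuinely breaks, and it breaks for a reason specific to best-arm identification. In your product family with one planted arm per block, the identity of the best arm is determined solely by the block whose planted bias is largest; two instances that differ only in the planted positions of the other blocks have the \emph{same} best arm, so the algorithm is under no obligation to distinguish them, and it may simply ignore every block except the top one at cost about $\sqrt{n_{\ell^\star}}/\Delta^{(\ell^\star)}$, which can be far below $\sqrt{H}$. A hybrid argument needs two ingredients: an upper bound on how fast runs can diverge, and a lower bound on final-state distinguishability coming from the fact that the required outputs differ. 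The second ingredient is simply absent across blocks, so the claim that the per-block costs add (your $T^2\gtrsim\sum_\ell n_\ell/(\Delta^{(\ell)})^2$) does not follow; independence of the planted positions cannot rescue it, because the failure is in the output requirement, not in the bookkeeping of cross terms. Two further issues: the theorem is instance-wise (``for any biases $p_i$''), so the hard family must consist of perturbations of the \emph{given} bias profile rather than a synthetic $\{1/2,\,1/2+\Delta\}$ profile; and the stated bound is a clean $\Omega(\sqrt{H})$, so the $\log n$ loss you plan to absorb is not actually available.

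The fix---and the paper's route---is to dispense with blocks entirely. Take the $n-1$ alternative instances obtained from the given one by raising arm $x$'s bias to $p_1+\eta$ with $\eta=p\Delta_2/2$, so that arm $x$ becomes the best arm and, since all biases lie in $[p,1-p]$, the two $2\times 2$ oracle blocks at index $x$ differ in norm by $O(\Delta_x/\sqrt{p(1-p)})$ (this is the content of \lem{inequality}). Running the algorithm on the original oracle with states $\ket{\phi_t}$ and letting $\Pi_x$ project onto index $x$, correctness on instance $x$ versus the original forces $\sum_{t\le T}\norm{\Pi_x\ket{\phi_t}}=\Omega(1/\Delta_x)$ for \emph{every} $x$ simultaneously; a weighted Cauchy--Schwarz with weights proportional to $1/\Delta_x$, against the constraint $\sum_x\norm{\Pi_x\ket{\phi_t}}^2\le 1$, then yields $T=\Omega\bigl(\sqrt{\sum_{x>1}\Delta_x^{-2}}\bigr)=\Omega(\sqrt{H})$ in one step, with no block decomposition and no logarithmic loss. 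The paper formalizes exactly this as an adversary argument with progress measure $s_k=\sum_{x>1}\Delta_x'^{-2}\braket{\psi_x^{(k)}|\psi_1^{(k)}}$ and per-query progress $O\bigl((1+1/\sqrt{p(1-p)})\sqrt{\sum_x\Delta_x'^{-2}}\bigr)$. If you keep your hybrid formulation, replace the block/direct-sum step by this single weighted family; your per-query perturbation estimates then go through unchanged.
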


To prove this lower bound, we use the quantum adversary method to show quantum hardness of distinguishing $n$ oracles $\mathcal{O}_x$, $x\in [n]$, corresponding to the following $n$ bandits. In the $1^\text{st}$ bandit, we assign bias $p_i$ to arm $i$ for all $i$. In the $x^\text{th}$ bandit for $x\in\{2,\ldots,n\}$, we assign bias $p_1+\eta$ to arm $x$ and $p_i$ to arm $i$ for all $i\neq x$, where $\eta$ is an appropriately chosen parameter. This hard set of bandits is inspired by the proof of a corresponding classical lower bound~\cite[Theorem 5]{mannor2004sample}.

More precisely, for a positive integer $T$, consider an arbitrary $T$-query quantum algorithm that distinguishes the oracles $\mathcal{O}_x$. The main idea of the adversary method is to keep track of certain quantities $s_k\in \mathbb{R}$ where $k\in \{0,1,\dots, T\}$. For each $k$, $s_k$ quantifies how close the states of the quantum algorithm are when it operates using $k$ queries to the different $\mathcal{O}_x$. At the start, when $k=0$, $s_0$ must be large because when no queries have been made, the states must be close. At the end, when $k=T$, $s_T$ must be small because the states are distinguishable by assumption.

The key point is that we can also bound how much $s_k$ can change in one query, that is we can bound the quantities $\abs{s_{k+1}-s_k}$ for each $k$. Of course, this bound immediately gives a lower bound on $T$, the number of queries it takes to go from $s_0$ (large) to $s_T$ (small). To bound $\abs{s_{k+1}-s_k}$, the key point is to bound the distance between oracles, i.e. matrices, $\mathcal{O}_x$ and  $\mathcal{O}_y$ for different $x,y\in [n]$.

We defer the full proof and full description of the quantum adversary method to the supplementary material.

%%%%%%%%%%%%%%%%%%%%%%%%%%%%%%%%%%%%%%%%%%%%%%%%%%%%%%%%%%%%%%%%%%%%%%%%%%%%%%

\section{Conclusions}
In this paper, we propose a quantum algorithm for identifying the best arm of a multi-armed bandit, which gives a quadratic speedup compared to the best possible classical result. We also prove a matching quantum lower bound (up to poly-logarithmic factors).

This work leaves several natural open questions:
\begin{itemize}
\item Can we give fast quantum algorithms for the exploitation of multi-armed bandits? In particular, can we give online algorithms with favorable regret? The quantum hedging algorithm~\cite{hamoudi2020hedging} and the quantum boosting algorithm~\cite{arunachalam2020boosting} might be relevant to this challenge.

\item Can we give fast quantum algorithms for other types of multi-armed bandits, such as contextual bandits or adversarial bandits (e.g., \citealt{beygelzimer2011contextual, agarwal2014taming,auer2002nonstochastic})?

\item Can we give fast quantum algorithms for finding a near-optimal policy of a Markov decision process (MDP)? MDPs are a natural generalization of MABs, where the goal is to maximize the expected reward over sequences of decisions. \citet{pac_bandits_evendar_mansour} gave a reduction from this problem to best-arm identification by viewing the Q-function of each state as a multi-armed bandit.
\end{itemize}

%%%%%%%%%%%%%%%%%%%%%%%%%%%%%%%%%%%%%%%%%%%%%%%%%%%%%%%%%%%%%%%%%%%%%%%%%%%%%%

\section{Ethics Statement}
This work is purely theoretical. Researchers working on theoretical aspects of bandits and quantum computing may immediately benefit from our results. In the long term, once fault-tolerant quantum computers have been built, our results may find practical applications in multi-armed bandit scenarios arising in the real world. As far as we are aware, our work does not have negative ethical impact.

%%%%%%%%%%%%%%%%%%%%%%%%%%%%%%%%%%%%%%%%%%%%%%%%%%%%%%%%%%%%%%%%%%%%%%%%%%%%%%

\section{Acknowledgements}
DW thanks Robin Kothari, Jin-Peng Liu, Yuan Su, and Aarthi Sundaram for helpful discussions. This work received support from the Army Research Office (W911NF-17-1-0433 and W911NF-20-1-0015); the National Science Foundation (CCF-1755800, CCF-1813814, and PHY-1818914); and the U.S.\ Department of Energy, Office of Science, Office of Advanced Scientific Computing Research, Quantum Algorithms Teams and Accelerated Research in Quantum Computing programs. DW and TL were also supported by QISE-NET Triplet Awards (NSF grant DMR-1747426) and TL by an IBM PhD Fellowship.

%%%%%%%%%%%%%%%%%%%%%%%%%%%%%%%%%%%%%%%%%%%%%%%%%%%%%%%%%%%%%%%%%%%%%%%%%%%%%%

\newcommand{\arxiv}[1]{
  \href{https://arxiv.org/abs/#1}{\ttfamily{arXiv:#1}}\?}\newcommand{\arXiv}[1]{
  \href{https://arxiv.org/abs/#1}{\ttfamily{arXiv:#1}}\?}\def\?#1{\if.#1{}\else#1\fi}

%%%%%%%%%%%%%%%%%%%%%%%%%%%%%%%%%%%%%%%%%%%%%%%%%%%%%%%%%%%%%%%%%%%%%%%%%%%%%%
\newpage
\onecolumn
\appendix

\section{Preliminaries on Quantum Algorithms}
\subsection{Grover's search and amplitude amplification and estimation} Our quantum speedup conceptually originates from \emph{Grover's search algorithm}~\cite{grover1996fast}. Consider a function $f_{w}\colon\range{n}\rightarrow\{-1,1\}$ such that $f_{w}(i)=1$ if and only if $i\neq w$, so that $w$ can be viewed as a (unique) marked item. To search for $w$, classically we need $\Omega(n)$ queries to $f_{w}$.
Quantumly, we can use one call of $f_w$ to create an oracle $U_{w}$ such that $U_{w}\ket{i}=\ket{i}$ for all $i\neq w$ and $U_{w}\ket{w}=-\ket{w}$.Now consider the uniform superposition $\ket{u} \coloneqq \frac{1}{\sqrt{n}}\sum_{i\in\range{n}}\ket{i}$ as well as the state $\ket{r} \coloneqq \frac{1}{\sqrt{n-1}}\sum_{i\in\range{n}/\{w\}}\ket{i}$. The angle between $U_{w}\ket{u}$ and $\ket{u}$ is $\theta\coloneqq\arccos(1/n)=\Theta(1/\sqrt{n})$. Note that the unitary $U_{w}$ reflects about $\ket{r}$, and the unitary $U_{u}=2\ket{u}\bra{u}-I$ reflects about $\ket{u}$. If we start with $\ket{u}$, the angle between $U_{w}\ket{u}$ and $U_{u}U_{w}\ket{u}$ is \emph{amplified} to $2\theta$, and in general the angle between $U_{w}\ket{u}$ and $(U_{u}U_{w})^{k}\ket{u}$ is $2k\theta$. It thus suffices to take $k=\Theta(\sqrt{n})$ to find $w$.

This method of alternatively applying two reflections to boost the amplitude for success can be generalized to a technique called \emph{amplitude amplification}. For the case with some unknown number $k\in[n]$ of marked items, there is also a quadratic quantum speedup for estimating $\theta\coloneqq\arccos(k/n)$ via a technique called \emph{amplitude estimation}~\cite{amplitude_estimation}.

In the context of searching, consider a quantum procedure $\A$ that returns a state $\ket{\psi}$ with $t$ oracle queries, such that the overlap between the target state $\ket{w}$ and output state $\ket{\psi}$ is $p_{\mathrm{succ}} \coloneqq |{\braket{w|\psi}}|^2$. By amplitude amplification and estimation~\cite{amplitude_estimation}, $O({t}/{\sqrt{p_{\mathrm{succ}}}})$ oracle queries suffice to amplify the overlap to constant order and to estimate $p_{\mathrm{succ}}$ respectively.  We describe amplitude estimation more formally:
\begin{Theorem}[Amplitude estimation]~\label{thm:amplitude_estimation}
Suppose $\mathcal{O}_p$ is a unitary with $\mathcal{O}_p\ket{0}_{B}=\ket{\coin{p}}_{B}$. Then there is a unitary procedure $\AEst(\epsilon, \delta)$, making $O(\frac{1}{\epsilon}\log\frac{1}{\delta})$ queries to $\mathcal{O}_p$ and $\mathcal{O}_p^{\dagger}$, that on input $\ket{\coin p}_B\ket{0}_P$ prepares a state of the form
\begin{equation}
    \ket{\coin{p}}_{B}\Bigl(\sum_{p'} \alpha_{p'}\ket{p'}_{P} + \alpha\ket{p_{\perp}}_{P}\Bigr),
\end{equation}
where $\abs{\alpha} \coloneqq \sqrt{1 - \sum_{p'}\abs{\alpha_{p'}}^2} \leq \delta$, $\braket{p'|p_{\perp}} = 0$ for all $p'$, and $\abs{p'-p}\leq \epsilon$ for all $p'$.
\end{Theorem}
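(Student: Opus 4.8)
The plan is to build $\AEst(\epsilon,\delta)$ from textbook amplitude estimation~\cite{amplitude_estimation}, with its success probability boosted and register $B$ restored non-destructively. First I would form the amplitude-estimation operator $Q \coloneqq -\mathcal{O}_p(I - 2\ket{0}\bra{0})\mathcal{O}_p^{\dagger}(I - 2\ket{1}\bra{1})$ on register $B$, which costs one query each to $\mathcal{O}_p$ and $\mathcal{O}_p^{\dagger}$ per application. Writing $p = \sin^2\theta$ with $\theta\in[0,\pi/2]$, the operator $Q$ is a rotation with the two relevant eigenvalues $e^{\pm 2i\theta}$ and eigenvectors $\ket{\psi_+}_B,\ket{\psi_-}_B$ forming an orthonormal basis of $\C^2$; since $Q$ depends on $\mathcal{O}_p$ only through $\ket{\coin p}_B$ and $\ket 1_B$ (both real), $Q$ is a real matrix, so $\ket{\psi_-}_B = \overline{\ket{\psi_+}_B}$ and $\ket{\coin p}_B = \alpha_+\ket{\psi_+}_B + \overline{\alpha_+}\ket{\psi_-}_B$ with $\abs{\alpha_+} = 1/\sqrt 2$.

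Next I would run phase estimation on $Q$ with $M = \Theta(1/\epsilon)$ applications of $Q$, starting from $\ket{\coin p}_B\ket 0$, and read off from a control-register value $y$ the estimate $\widehat p(y)$, defined as $\sin^2(\pi y/M)$ rounded to a multiple of $\epsilon$. This uses $O(1/\epsilon)$ queries, and by the standard phase-estimation tail bound together with $\abs{\sin^2 a - \sin^2 b}\le\abs{a-b}$, measuring the control register returns a value of $\widehat p$ within $\epsilon$ of $p$ with probability $\ge 8/\pi^2$. A structural point used below: $\widehat p$ is invariant under $y\mapsto M-y$, so the two branches $e^{\pm 2i\theta}$ of $Q$ — which concentrate near $y\approx M\theta/\pi$ and $y\approx M - M\theta/\pi$ — produce the same distribution over values of $\widehat p$.

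To turn this into the \emph{unitary} $\AEst$ I would apply compute--copy--uncompute. Run $k = \Theta(\log(1/\delta))$ copies of the phase estimation above on fresh control registers $C_1,\dots,C_k$ — this is consistent although all of them act on the single register $B$, because after the first copy $B$ is supported on the $Q$-eigenspace $\mathrm{span}\{\ket{\psi_+}_B,\ket{\psi_-}_B\}$, so each later phase estimation only appends an independent phase-estimation state to its own $C_j$ — then compute into the output register $P$ the median $r$ of the values $\widehat p$ read from $C_1,\dots,C_k$, and run all $k$ phase estimations in reverse. Because $Q$ is real the two branches are complex conjugates, and with the $y\mapsto M-y$ symmetry the branch-conditional probability of each value $r$ coincides across branches; hence on the subspace where every $C_j$ is reset to $\ket 0$ the reversal coherently recombines $\alpha_+\ket{\psi_+}_B + \overline{\alpha_+}\ket{\psi_-}_B = \ket{\coin p}_B$ and leaves on $P$ a superposition of values of $\widehat p$, each within $\epsilon$ of $p$, up to a defect of norm $\le\delta$ (more than $k/2$ of the individual estimates would have to be bad, an event of amplitude-squared at most $2^k(1-8/\pi^2)^{k/2} = (2\sqrt{1-8/\pi^2})^{k}\le\delta^2$ for suitable $k = \Theta(\log(1/\delta))$, using $2\sqrt{1-8/\pi^2}<1$).

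The main obstacle is that copying the estimate out of the control registers partially \emph{decoheres} the phase estimations, so the reversal does not fully reset the $C_j$: the ``clean'' subspace on which the recombination above holds carries only an $\Omega(1)$ fraction of the amplitude. I would address this by following the uncomputation with a round of (fixed-point) amplitude amplification on the efficiently reflected event ``all $C_j = \ket 0$'', pushing its weight from a constant up to $1-\delta^2$ at the cost of $O(\log(1/\delta))$ further repetitions, with the remaining off-subspace amplitude ($\le\delta$) absorbed into the error term $\alpha\ket{p_\perp}_P$. Making all of this deliver \emph{exactly} the factored form $\ket{\coin p}_B\bigl(\sum_{p'}\alpha_{p'}\ket{p'}_P + \alpha\ket{p_\perp}_P\bigr)$ with $\abs{p'-p}\le\epsilon$ and $\abs{\alpha}\le\delta$, rather than a state merely $\delta$-close to it, and checking that the query count stays $O\bigl(\tfrac1\epsilon\log\tfrac1\delta\bigr)$ — $O(1/\epsilon)$ per phase estimation, $O(\log(1/\delta))$ of them — is where essentially all the care lies; the rest (the eigenstructure of $Q$, the phase-estimation error bound, the Chernoff-type count) is routine.
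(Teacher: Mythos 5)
Your overall route is the paper's route (BHMT amplitude estimation, median of $k=\Theta(\log\frac1\delta)$ repetitions, made coherent), but the specific way you make it coherent creates two problems that you flag without resolving, and they are exactly the content of the theorem. First, because you run all the phase estimations on the \emph{input} register $B$ and then try to repair it, your final state is only $\delta$-close to the required form: the ``dirty'' component (the part where the uncomputation fails to reset the $C_j$) has $B$ entangled with the ancillas through the two conjugate eigenbranches $\ket{\psi_\pm}$, so it cannot be absorbed into the error term $\alpha\ket{p_\perp}_P$, which by the statement must still be tensored with $\ket{\coin p}_B$. Fixed-point amplitude amplification shrinks this component but never removes it exactly, so the exactly factored form $\ket{\coin p}_B\bigl(\sum_{p'}\alpha_{p'}\ket{p'}_P+\alpha\ket{p_\perp}_P\bigr)$ is not delivered. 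Second, the amplification itself breaks the query bound: each of the $O(\log\frac1\delta)$ fixed-point rounds must re-run the full $k$-fold phase-estimation-plus-median circuit and its inverse, each costing $O(\frac1\epsilon\log\frac1\delta)$ queries, so you end at $O(\frac1\epsilon\log^2\frac1\delta)$ rather than the claimed $O(\frac1\epsilon\log\frac1\delta)$. (A smaller wrinkle: the $k$ control registers are not independent --- they are branchwise product, correlated through the coherent $\pm$ branch --- so the Chernoff/median argument must be run within each branch; this works, thanks to the $y\mapsto M-y$ symmetry you correctly identify, but as written the independence claim is loose.)

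Both difficulties disappear once you notice that $\AEst$ need not act nontrivially on the input $B$ register at all: it is given the unitary $\mathcal{O}_p$ (and $\mathcal{O}_p^{\dagger}$), so it can prepare its own fresh coin states inside $P$, run the $k$ phase estimations there, coherently compute the median into a designated output sub-register of $P$, and simply \emph{keep} all intermediate registers as junk inside $P$ --- no uncomputation, no amplification. Then $B$ is untouched, so the product form $\ket{\coin p}_B\otimes(\cdot)_P$ holds exactly; the junk is harmless because the $\ket{p'}_P$ may carry garbage correlated with the estimate (this is how the paper uses the theorem, e.g.\ the $\ket{\gamma_0},\ket{\gamma_1}$ states in the gapped-amplitude-estimation corollary); the bad component has norm at most $\delta$ by the branchwise median bound; and the cost is $k\cdot O(\frac1\epsilon)=O(\frac1\epsilon\log\frac1\delta)$. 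This is precisely what the paper's one-line justification --- repeat, take the median, and defer the measurements --- amounts to, which is why the paper never needs your decoherence repair step.
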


Strictly speaking, the parts of \thm{amplitude_estimation} involving $\delta$ come from measuring the output state of the original amplitude estimation procedure~\cite{amplitude_estimation} $O(\log\frac1\delta)$ times and taking the median. This can be made coherent by the principle of deferred measurement.

\subsection{Variable-time amplitude amplification and estimation}\label{sec:VTAA_VTAE}

In this section we review variable-time amplitude amplification (VTAA) and estimation (VTAE), which are essential components of our algorithm.
VTAA and VTAE are procedures applied on top of so-called ``variable-time'' quantum algorithms, which can be formally defined as follows:
\begin{Definition}[Variable-time quantum algorithm, cf.~{{\citealt[Section~3.3]{ambainis2012VTAA}}} and~{{\citealt[Section~5.1]{childs2015quantum}}}]\label{defn:variable_time_algorithm}
Let $\A$ be a quantum algorithm in a space $\H$ that starts in the state $\ket{0}_\H$, the all-zeros state in $\H$. We say $\A$ is a \emph{variable-time quantum algorithm} if the following conditions hold:
\begin{enumerate}
\item $\A$ is the product of $m$ sub-algorithms, $\A = \A_m \A_{m-1} \cdots \A_1$.
\item $\H$ is a tensor product $\H=\H_C\otimes\H_A$, where
$\H_C$ is a tensor product of $m$ single-qubit registers denoted $\H_{C_1}, \H_{C_2}, \ldots, \H_{C_m}$.
\item Each $\A_j$ is a controlled unitary that acts on the registers $\H_{C_j} \otimes \H_A$ controlled on the first $j-1$ qubits of $\H_C$ being set to $\ket{0}$.
\item The final state of the algorithm, $\A\ket{0}_{\H}$, is perpendicular to $\ket{0}_C\coloneqq \ket{0}_{C_1}\ket{0}_{C_2}\cdots\ket{0}_{C_m}$. \label{it:vtqa_final}
\end{enumerate}
\end{Definition}

In each iteration of the variable-time algorithm we shall construct, we use a subroutine that we call \emph{gapped amplitude estimation} ($\GAE$). Standard amplitude estimation~\cite{amplitude_estimation} performs phase estimation on a particular unitary, and $\GAE$ is essentially the same as ``gapped phase estimation''~\cite[Lemma 22]{childs2015quantum} of that unitary. We recall the standard technique of amplitude estimation~\cite{amplitude_estimation}, which we have stated in \thm{amplitude_estimation}. It implies the following:

\begin{Corollary}[Gapped amplitude estimation]\label{cor:gae}
Suppose $\mathcal{O}_p$ is a unitary with $\mathcal{O}_p\ket{0} = \ket{\coin{p}}$. Then there is a unitary procedure $\GAE(\epsilon, \delta; l)$, making $O(\frac{1}{\epsilon}\log\frac{1}{\delta})$ queries to $\mathcal{O}_p$ and $\mathcal{O}_p^{\dagger}$, that on input $\ket{\coin p}_B\ket{0}_{C}\ket{0}_P$, prepares a state of the form
\begin{equation}\label{eq:GAE_state}
\ket{\coin{p}}_B(\beta_0\ket{0}_C\ket{\gamma_0}_P+ \beta_1\ket{1}_C\ket{\gamma_1}_P),
\end{equation}
where $\beta_0,\beta_1\in [0,1]$ satisfy $\beta_0^2 + \beta_1^2=1$ with $\beta_1 \leq \delta$ if $p \geq l - \epsilon$ and $\beta_0 \leq \delta$ if $p < l - 2\epsilon$.
\end{Corollary}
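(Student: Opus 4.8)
\emph{Proof sketch.} The plan is to realize $\GAE$ as ordinary amplitude estimation (\thm{amplitude_estimation}) followed by a query-free comparison step, choosing the estimation precision so that the ``ambiguous window'' $[l-2\epsilon,\,l-\epsilon)$ that the statement leaves unconstrained is wide enough to swallow the estimation error. First I would apply $\AEst(\epsilon/2,\,\delta)$ to $\ket{\coin p}_B\ket{0}_P$, producing
\[
\ket{\coin p}_B\Bigl(\sum_{p'}\alpha_{p'}\ket{p'}_P+\alpha\ket{p_\perp}_P\Bigr),
\]
where $|\alpha|\le\delta$, the $\ket{p'}$ range over computational-basis outcomes with $|p'-p|\le\epsilon/2$, and $\ket{p_\perp}$ is a superposition of the remaining basis states of $P$, orthogonal to all the $\ket{p'}$. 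This is the only step that queries $\mathcal{O}_p,\mathcal{O}_p^\dagger$, and it costs $O(\tfrac1\epsilon\log\tfrac1\delta)$ such queries.

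Next I would apply a unitary $V$ on the registers $P$ and $C$ (with $C$ initialized to $\ket 0$): $V$ is the reversible comparison circuit that, acting on each computational-basis state of $P$ holding a value $p'$, writes the bit $\mathds{1}\bigl(p'\ge l-\tfrac{3\epsilon}{2}\bigr)$ into $C$. This uses no further oracle calls, and $V$ composed with $\AEst$ is unitary, so setting $\GAE(\epsilon,\delta;l)\coloneqq V\circ\AEst(\epsilon/2,\delta)$ gives a genuine unitary procedure of the claimed cost. Regrouping the resulting state by the value of the $C$ qubit—say $\ket{\coin p}_B(\ket{0}_C\ket{\phi_0}_P+\ket{1}_C\ket{\phi_1}_P)$ with $\ket{\phi_0},\ket{\phi_1}$ sub-normalized—and writing $\beta_b\coloneqq\norm{\ket{\phi_b}}$, $\ket{\gamma_b}\coloneqq\ket{\phi_b}/\beta_b$ puts it exactly in the form \eq{GAE_state}, with $\beta_0^2+\beta_1^2=1$ and $\beta_0,\beta_1\in[0,1]$.

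It remains to verify the two one-sided bounds by a short case analysis around the threshold $l-\tfrac{3\epsilon}{2}$. If $p\ge l-\epsilon$, then every estimate in the support satisfies $p'\ge p-\epsilon/2\ge l-\tfrac{3\epsilon}{2}$, so $V$ sets $C$ to $\ket 1$ on the whole $\sum_{p'}\alpha_{p'}\ket{p'}$ part; hence the $C=\ket 0$ component of the output comes entirely from $V\bigl(\alpha\ket{p_\perp}_P\ket{0}_C\bigr)$, giving $\beta_0\le|\alpha|\le\delta$. Symmetrically, if $p<l-2\epsilon$ then $p'\le p+\epsilon/2<l-\tfrac{3\epsilon}{2}$ for every estimate, so $C$ stays $\ket 0$ on the good part and $\beta_1\le|\alpha|\le\delta$. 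The only point that really needs care is the second step: $V$ must be defined as a bona fide $P$-basis-diagonal comparison, so that the junk $\ket{p_\perp}$ can leak only into the ``wrong'' branch and only up to its norm $\delta$; everything else reduces to the displayed inequalities. I would also note, as the paper remarks after \thm{amplitude_estimation}, that the $\delta$-dependence originates from making the median of $O(\log\tfrac1\delta)$ repetitions coherent via deferred measurement, which preserves the unitarity we used above.
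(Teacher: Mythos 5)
Your construction is essentially the paper's: run $\AEst$ at a constant fraction of $\epsilon$, then coherently compare the estimate register against the midpoint $l-\tfrac{3\epsilon}{2}$ and write the outcome into $C$, so that the only leakage into the ``wrong'' branch is the $\ket{p_\perp}$ component of norm at most $\delta$; your precision $\epsilon/2$ is boundary-tight but the strict/non-strict inequalities line up, whereas the paper uses $\epsilon/4$ for extra slack. The one discrepancy is the orientation of the bit: writing $\mathds{1}\bigl(p'\ge l-\tfrac{3\epsilon}{2}\bigr)$ yields $\beta_0\le\delta$ when $p\ge l-\epsilon$ and $\beta_1\le\delta$ when $p<l-2\epsilon$, i.e.\ the statement of \cor{gae} with $\beta_0$ and $\beta_1$ interchanged; the corollary as stated (and its use in \algo{vtalgo}, where $C=1$ is what triggers the controlled-$\NOT$ that flips the flag $F$ to failure) requires $C=1$ for small $p$, so your comparator should be $\mathds{1}\bigl(p'< l-\tfrac{3\epsilon}{2}\bigr)$, or equivalently append a $\NOT$ on $C$. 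This is a one-gate relabeling rather than a substantive gap---indeed the paper's own one-line proof states the comparison with the same inverted orientation---and with that fix your argument coincides with the paper's.
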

\begin{proof}
We first run $\AEst(\epsilon/4, \delta)$ on registers $B,P$. Then, in register $C$, we output $1$ if the value stored in register $P$ is closer to $l-\epsilon$, and output $0$ if it is closer to $l-2\epsilon$. This gives the desired unitary procedure. For convenience, we put any phase factors on the $\beta_i$ into the $\ket{\gamma_i}$.
\end{proof}

\begin{Theorem}[Variable-time amplitude amplification and estimation~\citealp{ambainis2012VTAA,childs2015quantum, chakraborty2018power}]
\label{thm:vtaa_vtae_append}
Let $\A = \A_m \cdots \A_1$ be a variable-time quantum algorithm on the space $\H=\H_C \otimes\H_F \otimes \H_W$. Let $\ket{0}_\H$ be the all-zeros state in $\H$ and let $t_j$ be the query complexity of the algorithm $\A_j \cdots \A_1$. We define
\begin{equation}
w_j \coloneqq \norm{\Pi_{C_j}\A_j\cdots\A_1\ket{0}_\H}^2 \qquad \mathrm{and} \qquad
t_\mathrm{avg} \coloneqq \sqrt{\textstyle\sum_{j=1}^m w_j t_j^2}
\end{equation}
to be the probability of halting at step $j$ and the root-mean-square average query complexity of the algorithm, respectively,
where $\Pi_{C_j}$ denotes the projector onto $\ket{1}$ in $\H_{C_j}$. We also define
\begin{equation}
\psucc \coloneqq \norm{\Pi_F\A_m\cdots\A_1\ket{0}_\H}^2
\quad \mathrm{and} \quad
\ket{\psi_\mathrm{succ}} \coloneqq \frac{\Pi_F\A_m\cdots\A_1\ket{0}_\H}
{\norm{\Pi_F\A_m\cdots\A_1\ket{0}_\H}}
\end{equation}
to be the success probability of the algorithm and the corresponding output state, respectively, where $\Pi_F$ projects onto $\ket{1}$ in $\H_F$. Then there exists a quantum algorithm that uses $O(Q)$ queries to output the state $\ket{\psi_\mathrm{succ}}$ with probability $\geq1/2$ and a bit indicating whether it succeeds, where
\begin{equation}~\label{eq:vtaa_complexity}
Q \coloneqq t_m \log(t_m) + \frac{t_{\mathrm{avg}}}{\sqrt{\psucc}}\log(t_m).
\end{equation}
There also exists a quantum algorithm that uses $O(\frac{Q}{\epsilon}\log^2 (t_m)\log\log(\frac{t_m}{\delta}))$ queries to estimate $\psucc$ with multiplicative error $\epsilon$ with probability $\geq1-\delta$.
\end{Theorem}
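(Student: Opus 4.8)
The plan is to obtain both parts of the theorem directly from the existing analyses of variable-time amplitude amplification (VTAA) \cite{ambainis2012VTAA,childs2015quantum} and variable-time amplitude estimation (VTAE) \cite{chakraborty2018power}, after checking that our notion of a variable-time algorithm from \defn{variable_time_algorithm} --- clock register $\H_C = \H_{C_1}\otimes\cdots\otimes\H_{C_m}$, flag register $\H_F$, work register $\H_W$, each $\A_j$ controlled on the first $j-1$ clock qubits being $\ket{0}$, and final state perpendicular to $\ket{0}_C$ --- is exactly the input those results take. The only translation needed is to identify the ``good'' subspace with the span of states whose $F$ register holds $\ket{1}$ (so that $\psucc = \norm{\Pi_F \A_m\cdots\A_1\ket{0}_\H}^2$ and $\ket{\psi_{\mathrm{succ}}}$ is its normalized projection), and to read off the halting probability $w_j$ at segment $j$ from the clock projector $\Pi_{C_j}$.

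For the amplification part, I would recall the main VTAA statement of \cite{childs2015quantum} (refining \cite{ambainis2012VTAA}): nested amplitude amplification on $\A$ --- amplify the branch that halts at segment $1$, then at segment $2$, and so on, each level re-executing only the still-running portion of the algorithm, with the segments organized into logarithmically many scales of stopping time --- produces a flagged copy of $\ket{\psi_{\mathrm{succ}}}$ using $O\bigl(t_m\log t_m + \frac{t_{\mathrm{avg}}}{\sqrt{\psucc}}\log t_m\bigr)$ queries, where $t_{\mathrm{avg}}^2 = \sum_{j=1}^m w_j t_j^2$. Here the $\frac{t_{\mathrm{avg}}}{\sqrt{\psucc}}$ term is the genuine amplification cost --- the usual $t/\sqrt{p}$ of plain amplitude amplification but with the running time averaged (in root-mean-square) over halting times rather than taken to be the worst case $t_m$ --- the $t_m\log t_m$ term is a baseline (one full run of $\A$ plus the bucketing overhead, which dominates only when $\psucc$ is already large), and the extra $\log t_m$ factor absorbs the number of amplification levels. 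The ``bit indicating whether it succeeds'' comes for free: VTAA outputs a state that equals $\ket{\psi_{\mathrm{succ}}}$ conditioned on an auxiliary flag being $\ket{1}$, so measuring that flag yields the indicator.

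For the estimation part, I would invoke the VTAE result of \cite{chakraborty2018power}, which runs amplitude \emph{estimation} rather than amplification on the same nested structure. Estimating $\psucc$ to multiplicative error $\epsilon$ multiplies the amplification budget $Q$ by roughly $1/\epsilon$; running an estimation routine at each of the $O(\log t_m)$ stopping-time scales contributes an extra $\log^2 t_m$; and boosting to confidence $1-\delta$ by taking a median of $O(\log(1/\delta))$ repetitions contributes $\log\log(t_m/\delta)$. Multiplying these together gives the claimed $O\bigl(\frac{Q}{\epsilon}\log^2(t_m)\log\log(\frac{t_m}{\delta})\bigr)$ bound.

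The main obstacle I anticipate is bookkeeping rather than conceptual: one must carefully reconcile \defn{variable_time_algorithm} with the slightly different formalizations of variable-stopping-time algorithms used in \cite{ambainis2012VTAA,childs2015quantum,chakraborty2018power} --- especially the role of our dedicated flag register $\H_F$ versus a designated flag qubit in those papers, and the final-state perpendicularity condition --- and then track exactly which $\log(t_m)$ factors survive each reduction so that the advertised $Q$ and the estimation overhead match the cited bounds without hidden polylogarithmic slack. No new technique beyond those references should be needed.
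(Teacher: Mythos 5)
Your proposal matches the paper's treatment: the paper gives no proof of this theorem, importing it directly from the cited works on variable-time amplitude amplification and estimation \citep{ambainis2012VTAA,childs2015quantum,chakraborty2018power}, which is exactly what you propose to do, with the same definitional bookkeeping (identifying the good subspace with the $\ket{1}$ component of the $F$ register and the halting probabilities with the clock projectors $\Pi_{C_j}$). Your added sketch of where each $\log(t_m)$ and the $\frac{1}{\epsilon}\log^2(t_m)\log\log(\frac{t_m}{\delta})$ overhead arise is consistent with those references, so no gap is present.
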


%%%%%%%%%%%%%%%%%%%%%%%%%%%%%%%%%%%%%%%%%%%%%%%%%%%%%%%%%%%%%%%%%%%%%%%%%%%%%%

\subsection{Quantum lower bounds by the adversary method}
Suppose we have $n$ multi-armed bandit oracles $\mathcal{O}_x$,  $x\in[n]$, corresponding to $n$ multi-armed bandits where the best arm is located at a different index in each. Suppose that we also have a best-arm identification algorithm $\A$ that uses no more than $T$ queries to identify the best arm with probability $\geq 1-\delta$.

The basic quantum adversary method~\cite{ambainis2002adversary,spalek2005adversary} considers a quantity of the form
\begin{equation}
    s_k \coloneqq \sum_{x\neq y} w_{x,y} \braket{\psi_x^{(k)}|\psi_y^{(k)}},
\end{equation}
where $k\in \{0,1,\dots,T\}$, $x,y\in [n]$, $w_{x,y}\geq 0$, and $\ket{\psi_x^{(k)}}$ is the state of $\mathcal{A}$ after the $k^\text{th}$ query to the oracle $\mathcal{O}_x$.

At step $k=0$, $\A$ has made no queries to the oracle, so $\ket{\psi_x^{(0)}}$ must be the same for all $x$. Therefore $s_0 = \sum_{x\neq y}w_{x,y}$ as $\braket{\psi_x^{(0)}|\psi_y^{(0)}}=1$.

At step $k=T$, $\A$ must output the index of the best arm with probability $\geq1-\delta$. Since the location of the best arm is different for each $\mathcal{O}_x$, the states $\ket{\psi_x^{(T)}}$ must be distinguishable by a quantum measurement with probability $\geq1-\delta$. This means that $|\braket{\psi_x^{(T)}|\psi_y^{(T)}}|\leq 2\sqrt{\delta(1-\delta)}$. Therefore $\abs{s_T} \leq 2\sqrt{\delta(1-\delta)}\cdot \sum_{x\neq y}w_{x,y}$.

Combining the above observations, we have
\begin{equation}
    \abs{s_0-s_T} \geq \abs{s_0}-\abs{s_T} \geq (1-2\sqrt{\delta(1-\delta)})\cdot \sum_{x\neq y}w_{x,y}.
\end{equation}
Hence, if we can upper bound $\abs{s_{k+1}-s_k}$ by $B$ for some constant $B$, we can deduce that
\begin{equation}
    T \geq \frac{1-2\sqrt{\delta(1-\delta)}}{B} \cdot \sum_{x\neq y}w_{x,y},
\end{equation}
giving a lower bound on the query complexity.

Note that we apply the quantum adversary method to multi-armed bandit oracles of the form given in \eq{quantum-bandit-defn}, whereas most results from the literature on quantum lower bounds assume a different form of oracle. We remark that~\citet{belovs2015variations} treats a more general class of oracles, so it should be possible to prove \thm{quantum-BAI-confidence-lower} using its results. However, we give a self-contained proof using the formulation described above as this approach is straightforward in our case.

%%%%%%%%%%%%%%%%%%%%%%%%%%%%%%%%%%%%%%%%%%%%%%%%%%%%%%%%%%%%%%%%%%%%%%%%%%%%%%

\section{Proof Details of the Quantum Upper Bound}\label{append:upper}

%================================================================

\subsection{Proof of \texorpdfstring{\lem{vtalgo_correctness}}{Lemma \ref{lem:vtalgo_correctness}}}
We first state a more detailed version of \lem{vtalgo_correctness}. We say that states $\ket{\psi}$ and $\ket{\phi}$ are $\epsilon$-close if $\norm{\ket{\psi}-\ket{\phi}} \le \epsilon$.
\begin{Lemma}[Full version of Lemma 1, correctness of $\A$]
The output state $\ket{\phi(\A)}$ of $\A$ is $(\alpha/n)$-close to
\begin{align*}
    \ket{\psi(\A)} &\coloneqq \frac{1}{\sqrt{n}}\sum_{\Sright}\ket{i}_I\ket{\coin
     p_i}_B\ket{\psi_i}_{C,P}\ket{1}_F\\
  &\quad + \frac{1}{\sqrt{n}}\sum_{\Sleft}\ket{i}_I\ket{\coin
    p_i}_B\ket{\psi_i}_{C,P}\ket{0}_F\\
  &\quad + \frac{1}{\sqrt{n}}\sum_{\Smiddle}\ket{i}_I\ket{\coin p_i}_B(\beta_{i,1}\ket{\psi_{i,1}}_{C,P}\ket{1}_F + \beta_{i,0}\ket{\psi_{i,0}}_{C,P}\ket{0}_F)
\end{align*}
for some $\beta_{i,1},\beta_{i,0}\in \mathbb{C}$ and states $\ket{\psi_i}, \ket{\psi_{i,j}}$. In particular, we have $\abs{\psucc-\psucc'}\leq \frac{2\alpha}{n}$ where $\psucc\coloneqq \norm{\Pi_F\ket{\phi(\A)}}^2$ and $\psucc' \coloneqq \norm{\Pi_F\ket{\psi(\A)}}^2 = \frac{1}{n}\bigl(\abs{\Sright}+\sum_{i \in \Smiddle}\abs{\beta_{i,1}}^2\bigr).$
\end{Lemma}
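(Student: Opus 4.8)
The plan is to exploit that \algo{vtalgo} acts block-diagonally with respect to the index register $I$: the initialization and every controlled-$\GAE$/controlled-$\NOT$ either leaves $I$ untouched or is controlled on $\ket i_I$. Hence the global state stays of the form $\frac1{\sqrt n}\sum_{i=1}^n\ket i_I\ket{\phi_i}$, and since $\|\frac1{\sqrt n}\sum_i\ket i(\ket{\phi_i}-\ket{\psi_i})\|^2=\frac1n\sum_i\|\ket{\phi_i}-\ket{\psi_i}\|^2$, it suffices to track each arm-$i$ component $\ket{\phi_i}$ separately and recombine at the end.

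The heart of the argument is an \emph{exact} normal form for the arm-$i$ component, proved by induction on the iteration number $j$ (base case $j=0$: the initial state). Let $\beta_0^{(i,j)},\beta_1^{(i,j)}\ge0$ be the coefficients that \cor{gae} produces when $\GAE(\epsilon_j,a;l_1)$ (with oracle $\mathcal O_{p_i}$) acts on its intended input $\ket{\coin{p_i}}_B\ket0_{C_j}\ket0_{P_j}$. I claim that after iteration $j$ the arm-$i$ component equals
\[
\ket{\coin{p_i}}_B\bigl(a_j\ket{u_{i,j}}_{C,P}\ket1_F+b_j\ket{c_{i,j}}_{C,P}\ket0_F\bigr),
\]
where $a_j=\prod_{k=1}^j\beta_0^{(i,k)}$, $a_j^2+b_j^2=1$, the normalized state $\ket{u_{i,j}}$ has $C_1=\dots=C_j=0$, and $\ket{c_{i,j}}$ has $C_k=1$ for some $k\le j$. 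The step is immediate: on the $\ket1_F$ branch all controls of $\GAE$ fire (since $C_1=\dots=C_{j-1}=0$), so it peels off a weight-$\beta_1^{(i,j)}$ piece carrying $\ket1_{C_j}$, which the ensuing $C_j$-controlled-$\NOT$ moves to $\ket0_F$; on the $\ket0_F$ branch the controls of both the $\GAE$ and the $\NOT$ are inactive, so it is frozen; and $B$ survives as $\ket{\coin{p_i}}$ because $\GAE$ returns that register unchanged. The termination step flips $C_{m+1}$ exactly on the $\ket{u_{i,m}}$ branch (the unique one with $C_1=\dots=C_m=0$), which makes the state perpendicular to $\ket0_C$ as required by Condition~\ref{it:vtqa_final} and preserves the normal form with $C_{m+1}$ absorbed into $\ket{u_{i,m}}$.

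The remainder is elementary estimation. Since $m=\ceil{\log(1/\Delta)}+2$ we have $\Delta/8<\epsilon_m=2^{-m}\le\Delta/4$, hence $2\epsilon_m\le\Delta/2$. If $i\in\Sright$ then $p_i\ge l_1-\Delta/8>l_1-\epsilon_j$ for every $j\le m$, so $\beta_1^{(i,j)}\le a$ by \cor{gae}, and summing the weights caught step by step, $b_m^2=\sum_{j=1}^m(a_{j-1}\beta_1^{(i,j)})^2\le ma^2$. If $i\in\Sleft$ then $p_i<l_1-\Delta/2\le l_1-2\epsilon_m$, so $\beta_0^{(i,m)}\le a$ and $a_m\le\beta_0^{(i,m)}\le a$. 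Choosing, in the statement, $\ket{\psi_i}_{C,P}:=\ket{u_{i,m}}$ for $i\in\Sright$, $\ket{\psi_i}_{C,P}:=\ket{c_{i,m}}$ for $i\in\Sleft$, and $(\beta_{i,1},\beta_{i,0},\ket{\psi_{i,1}},\ket{\psi_{i,0}}):=(a_m,b_m,\ket{u_{i,m}},\ket{c_{i,m}})$ for $i\in\Smiddle$, the arm-$i$ component of $\ket{\phi(\A)}$ matches the corresponding term of $\ket{\psi(\A)}$ exactly when $i\in\Smiddle$, and lies within distance $\sqrt2\,b_m\le\sqrt{2m}\,a$ (resp.\ $\sqrt2\,a_m\le\sqrt2\,a$) of it when $i\in\Sright$ (resp.\ $i\in\Sleft$), using $1-\sqrt{1-x^2}\le x^2$ for the renormalization error. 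Summing squared distances over $i$ gives $\|\ket{\phi(\A)}-\ket{\psi(\A)}\|^2\le\frac1n\bigl(|\Sright|\cdot2ma^2+|\Sleft|\cdot2a^2\bigr)\le2ma^2$, so $\|\ket{\phi(\A)}-\ket{\psi(\A)}\|\le\sqrt{2m}\,a=\frac{\alpha}{\sqrt{2m}\,n^{3/2}}\le\frac\alpha n$ since $a=\frac{\alpha}{2mn^{3/2}}$. Finally $\psucc'=\|\Pi_F\ket{\psi(\A)}\|^2=\frac1n\bigl(|\Sright|+\sum_{i\in\Smiddle}|\beta_{i,1}|^2\bigr)$ can be read straight off $\ket{\psi(\A)}$ (the $\ket i_I$ factors kill all cross terms), and with $r=\|\Pi_F\ket{\phi(\A)}\|$, $s=\|\Pi_F\ket{\psi(\A)}\|$ we get $|\psucc-\psucc'|=|r-s|\,|r+s|\le\|\ket{\phi(\A)}-\ket{\psi(\A)}\|\cdot2\le\frac{2\alpha}n$.

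The only genuine obstacle is the bookkeeping behind the exact normal form: one must verify carefully that once any clock qubit $C_k$ is set to $\ket1$, every later controlled-$\GAE$ and controlled-$\NOT$ for that arm is inert, so the caught branch freezes with $\ket0_F$ while the surviving uncaught amplitude telescopes to $\prod_k\beta_0^{(i,k)}$. Once that structure is pinned down, the error bounds are routine.
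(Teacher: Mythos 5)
Your proof is correct, and it verifies the same facts the paper needs, but the bookkeeping is genuinely different from the paper's. The paper also argues arm-by-arm and case-splits over $\Sleft,\Smiddle,\Sright$ using \cor{gae}, but it tracks an \emph{approximate} state iteration by iteration: for $i\in\Sleft$ it pinpoints the unique $j$ with $l_1-2\epsilon_j\le p_i<l_1-\epsilon_j$, shows the branch is (approximately) caught at iteration $j+1$ and frozen thereafter, and accumulates an additive error $2a$ per iteration, finally recombining over arms with the triangle inequality ($\tfrac{1}{\sqrt n}\cdot n\cdot\tfrac{\alpha}{n^{3/2}}=\tfrac{\alpha}{n}$). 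You instead keep an \emph{exact} normal form $a_j\ket{u_{i,j}}\ket{1}_F+b_j\ket{c_{i,j}}\ket{0}_F$ with $a_j=\prod_k\beta_0^{(i,k)}$ throughout (legitimate, since \cor{gae} as stated returns exactly that two-branch form and the frozen-branch argument is just control bookkeeping), bound the stray amplitude only once at the end, and for $i\in\Sleft$ you avoid locating the catching iteration entirely by invoking only the final $\GAE$ at precision $\epsilon_m\le\Delta/4$ (so $\beta_0^{(i,m)}\le a$ forces $a_m\le a$); you then recombine with the Pythagorean identity rather than the triangle inequality, which even gives a slightly better bound than $\alpha/n$. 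What each buys: your route is shorter and cleaner for the correctness statement alone; the paper's finer analysis of \emph{when} a left arm gets caught is not wasted, though, because that per-iteration stopping information is exactly what its proof of \lem{vtalgo_complexity} reuses to bound the stopping probabilities $w_{i,j}$ and hence $t_{\mathrm{avg}}$, whereas your end-of-run argument would have to be refined again for that purpose.
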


As our proof is similar to that presented in Section 5.3 of~\citet{childs2015quantum}, we only sketch it in a way that highlights the differences. For comparison, it may be helpful to note that our states $\ket{i}_I\ket{\coin p_i}$ are analogous to the matrix eigenstates $\ket{\lambda}$ in~\citet{childs2015quantum}. The controlled-$\NOT$ operation in \lin{vtalgo_forloop4} of our \algo{vtalgo} takes the place of the simulation subroutine called ``$W$'' in Lemma 23 of~\citet{childs2015quantum}, which is much more elaborate.

We proceed with the proof sketch. Let $\A_{\textrm{main}} \coloneqq \A_{m+1}\cdots\A_{1}$ denote the part of $\A$ after initialization. We show that, for each fixed $i$, $\mathcal{A}_{\textrm{main}}\ket{i}_{I}\ket{\coin p_i}_B\ket{0}_{C,P,F}$ is $(\frac{\alpha}{n^{3/2}})$-close to
\begin{enumerate}[wide=0pt, widest={Case $i\in \Smiddle$:}, leftmargin=*]
\item[Case $i\in \Smiddle$:]
$\ket{i}_I\ket{\coin p_i}_B(\beta_{i,1}\ket{\psi_i}_{C,P}\ket{1}_F+\beta_{i,0}\ket{\psi_{i,0}}_{C,P}\ket{0}_F)$ for some $\beta_{i,1},\beta_{i,0}\in \mathbb{C}$ and states $\ket{\psi_i},\ket{\psi_{i,j}}$;
\item[Case $i\in \Sright$:] $\ket{i}_I\ket{\coin p_i}_B\ket{\psi_i}_{C,P}\ket{1}_F$;
\item[Case $i\in \Sleft$:] $\ket{i}_I\ket{\coin p_i}_B\ket{\psi_i}_{C,P}\ket{0}_F$.
\end{enumerate}

Then $\ket{\phi(\A)} = \A\ket{0}_{I,B,C,P,F} = \A_{\text{main}}\frac{1}{\sqrt{n}}\sum_{i=1}^n\ket{i}_I\ket{\coin p_i}_B\ket{0}_{C,P,F}$ is $(\frac{1}{\sqrt{n}}\cdot n\cdot \frac{\alpha}{n^{3/2}} = \frac{\alpha}{n})$-close to $\ket{\psi(\A)}$ as claimed.

\paragraph{Case $i\in \Smiddle$.} This is trivially true because $\beta_{i,1}\ket{\psi_{i,1}}_{C,P}\ket{1}_F+\beta_{i,0}\ket{\psi_{i,0}}_{C,P}\ket{0}_F$ can represent any state on registers $C,P,F$.

\paragraph{Case $i\in \Sleft$.}

Let $j\in [m-1]$ be such that $l_1- 2\epsilon_j \leq p_i < l_1- \epsilon_j$. Note that this $j$ uniquely exists by the definition of $\Sleft$, $m$, and $\epsilon_j$. Then the state of the algorithm after the $(j-1)$st iteration of the for-loop in \lin{vtalgo_forloop} is $(2(j-1)a)$-close to
\begin{equation}
    \ket{i}_I\ket{\coin p_i}_B\ket{0}_C\ket{\gamma_0^1}_{P_1}\cdots\ket{\gamma_0^{j-1}}_{P_{j-1}}\ket{0}_{P_j\cdots P_m}\ket{1}_F,
\end{equation}
where, for each $i$, the state $\ket{0}_{C_i}\ket{\gamma_0}_{P_i}$ corresponds to the state $\ket{0}_C\ket{\gamma_0}$ in $\GAE(\epsilon_j,a;l_1)$. Note that we incur an error of at most $2a$ at each iteration which comes from running $\GAE(\epsilon_j,a;l_1)$ (cf.\ the case where $\beta_1\leq a$ in \cor{gae}). This error accumulates additively.

The state after the $j^\text{th}$ iteration is $(2ja)$-close to
\begin{equation}\label{eq:caught}
\begin{aligned}
    &\beta_0\ket{i}_I\ket{\coin p_i}_B\ket{0}_C\ket{\gamma_0^1}_{P_1}\cdots\ket{\gamma_0^j}_{P_j}\ket{0}_{P_{j+1}\cdots P_m}\ket{1}_F
    \\&+\beta_1 \ket{i}_I\ket{\coin p_i}_B\ket{\boldsymbol{j}}_C\ket{\gamma_0^1}_{P_1}\cdots\ket{\gamma_0^j}_{P_j}\ket{0}_{P_{j+1}\cdots P_m}\ket{1}_F,
\end{aligned}
\end{equation}
where $\boldsymbol{j} \coloneqq 0^{j-1}10^{m-j}$ denotes a unary representation of the integer $j$.

At the $(j+1)$st iteration, the part of the state in the second line of Eq.~\eqref{eq:caught} is unchanged because its register $C$ indicates ``stop'', but the part in the first line of Eq.~\eqref{eq:caught} changes to being $(2(j+1)a)$-close to
\begin{equation}
    \beta_0\ket{i}_I\ket{\coin p_i}_B\ket{\boldsymbol{j+1}}_C\ket{\gamma_0^1}_{P_1}\cdots\ket{\gamma_0^j}_{P_j}\ket{\gamma_0^{j+1}}_{P_{j+1}}\ket{0}_{P_{j+2}\cdots P_m}\ket{0}_F.
\end{equation}
Hence, the state after the $(j+1)$st iteration is $(2(j+1)a)$-close to
\begin{equation}\label{eq:final_state}
\begin{aligned}
    &\beta_0\ket{i}_I\ket{\coin p_i}_B\ket{\boldsymbol{j+1}}_C\ket{\gamma_0^1}_{P_1}\cdots\ket{\gamma_0^j}_{P_j}\ket{\gamma_0^{j+1}}_{P_{j+1}}\ket{0}_{P_{j+2}\cdots P_m}\ket{0}_F\\
     &+\beta_1 \ket{i}_I\ket{\coin p_i}_B\ket{\boldsymbol{j}}_C\ket{\gamma_0^1}_{P_1}\cdots\ket{\gamma_0^j}_{P_j}\ket{0}_{P_{j+1}\cdots P_m}\ket{0}_F.
\end{aligned}
\end{equation}
Since the $C$ register of all parts of the state in Eq.~\eqref{eq:final_state} indicates ``stop'', the remaining iterations $j+2,\ldots, m$ of $\A$ do not alter it. Hence the final state of $\A$ is $(2ma)$-close to the state in Eq.~\eqref{eq:final_state}, which is of the form
\begin{equation}~\label{eq:final_state_simplified}
    \ket{i}_I\ket{\coin p_i}_B\ket{\psi_i}_{C,P}\ket{0}_F.
\end{equation}
Note that $2ma = \frac{\alpha}{n^{3/2}}$, so the closeness of approximation is as claimed.

\paragraph{Case $i\in \Sright$.} In this case, there does not exist a $j\in[m-1]$ such that $l_1-2\epsilon_j\leq p_i< l_1-\epsilon_j$. Thus a simplified version of the argument above, in which we do not have to consider different cases according to the iteration number, shows that the resulting state is $(2ma)$-close to a state of the same form as Eq.~\eqref{eq:final_state_simplified} but with the $F$ register remaining in state $1$.

Lastly, we show that $\psucc$ is close to $\psucc'$ as claimed:
\begin{equation}
\begin{aligned}
    \abs{\psucc-\psucc'} &= \abs{\bigl(\sqrt{\psucc}+\sqrt{\psucc'}\bigr)\cdot\bigl(\sqrt{\psucc}-\sqrt{\psucc'}\bigr)}\\[0.4em]
    &= \bigl(\sqrt{\psucc}+\sqrt{\psucc'}\bigr)\cdot \,\bigl|{\norm{\Pi_F\ket{\phi(\A)}}-\norm{\Pi_F\ket{\psi(\A)}}}\bigr|\\[0.4em]
    &\leq 2\,{\norm{\Pi_F(\ket{\phi(\A)}-\ket{\psi(\A)})}}\\[0.4em]
    &\leq 2\, \frac{\alpha}{n}.
\end{aligned}
\end{equation}

\subsection{Proof of \texorpdfstring{\lem{vtalgo_complexity}}{Lemma \ref{lem:vtalgo_complexity}}}
The proof is similar to that presented in Section 5.4 of~\citet{childs2015quantum}. For the first claim, note first that $\A_0$ and $\A_{m+1}$ use a constant number of queries ($1$ and $0$, respectively), so we can ignore them. For $k\in [m]$, $\A_k$ only uses queries to perform $\GAE(\epsilon_k,d;l_1)$, which takes $O(\frac{1}{\epsilon_k}\log\frac{1}{a})$ queries. Therefore $t_j$, the number of queries in $\A_j \A_{j-1}\cdots \A_1$, is of order
\begin{equation}
    \sum_{k=1}^j\frac{1}{\epsilon_k}\log\Bigl(\frac{1}{a}\Bigr) =  \sum_{k=1}^j2^k\log\Bigl(\frac{1}{a}\Bigr) \leq 2^{j+1}\log\Bigl(\frac{1}{a}\Bigr)
\end{equation}
because $\epsilon_k = 2^{-k}$ by definition. In addition, we have $t_m = O(\frac{1}{\Delta}\log\frac{1}{a})$ because $m = \ceil{\log\frac{1}{\Delta}}+2$ by definition. The first claim follows.

For the second claim, we have
\begin{align}\label{eq:t_avg_computation}
    t_{\mathrm{avg}}^2 = \sum_{j=1}^mw_jt_j^2 &=  \sum_{j=1}^m \biggl\|\Pi_{C_j}\A_j\cdots\A_1\frac{1}{\sqrt{n}}\sum_{i=1}^n\ket{i}_I\ket{\coin p_i}_B\ket{0}_C\ket{0}_P\ket{1}_F\biggr\|^2t_j^2 \\
    &= \frac{1}{n} \sum_{i=1}^n \sum_{j=1}^m w_{i,j}t_j^2\\
    &= \frac{1}{n}\sum_{i=1}^n\tau^2_i,
    \label{eq:tavg_decomp}
\end{align}
where $w_{i,j}\coloneqq\norm{\Pi_{C_j}\A_j\cdots\A_1\ket{i}_I\ket{\coin p_i}_B\ket{0}_C\ket{0}_P\ket{1}_F}^2\in[0,1]$ and $\tau_i \coloneqq \sum_{j=1}^m w_{i,j}t_j^2$.

Note that $w_{i,j}$ can be thought of as the probability that $\mathcal{A}$ stops at the end of iteration $j$ if initialized with arm $i$; $\tau_i^2$ can be thought of as the squared average stopping time of $\mathcal{A}$ if initialized with arm $i$.

For each fixed $i$, we consider $\tau_i^2$ according to the following three cases.

\paragraph{Case $i\in\Sright$.}
We have $\sum_{j=1}^m w_{i,j}=1$, so $\tau_i^2 \leq t_m^2 = O(2^{2m}\log^2(\frac{1}{a}))=O(\frac{1}{\Delta^2}\log^2(\frac{1}{a}))$ because $m=\ceil{\log\frac{1}{\Delta}}+2$ by definition.

\paragraph{Case $i\in \Smiddle$.} We still have $\tau_i^2 = O(\frac{1}{\Delta^2}\log^2(\frac{1}{a}))$ as in the case $i\in\Sright$, by exactly the same argument. But by the definition of $\Smiddle$, we have $l_1-p_i\leq\Delta/2$, so we can also write $\tau_i^2 = O\bigl(\frac{1}{(l_1-p_i)^2}\log^2(\frac{1}{a})\bigr)$.

\paragraph{Case $i\in \Sleft$.}
For $i\in \Sleft$, let $j\in [m-1]$ be such that $l_1- 2\epsilon_j \leq p_i < l_1- \epsilon_j$ as in the proof of \lem{vtalgo_correctness}.

We know that after the $(j+1)$st iteration, the state is $(ma = \alpha/n)$-close to the state in \eq{final_state} on which the algorithm terminates. Therefore, the probability $w_{i,j+1}$ of terminating after the $(j+1)^\text{st}$ iteration is $1-O((\alpha/n)^2)$. It can also be seen that the probability $w_{i,j+r}$ of terminating after the $(j+r)^\text{th}$ iteration is $(1- O((\alpha/n)^2))\cdot O((\alpha/n)^{2(r-1)})$. Hence
\begin{equation}
    \tau_i^2 \leq t_{j+1}^2 + O\Bigl(\sum_{r=2}^{m-j} \Bigl(\frac{\alpha}{n}\Bigr)^{2(r-1)} t_{j+r}^2\Bigr) = O(t_{j+1}^2)= O\biggl(\frac{\log^2(\frac{1}{a})}{\epsilon_{j+1}}\biggr) =  O\biggl(\frac{\log^2(\frac{1}{a})}{(l_1-p_i)^2}\biggr),
\end{equation}
where we used $\epsilon_{j+1}=\epsilon_{j}/2 \geq (l_1-p_i)/4$ for the last inequality.

Substituting the above results into \eq{tavg_decomp} tells us that $t_{\mathrm{avg}}^2$ is of order
\begin{equation}~\label{eq:t_avg_appendix}
\frac{1}{n}\Bigl(\frac{\abs{\Sright}}{\Delta^2} + \sum_{i\in \Sleft\cup \Smiddle} \frac{1}{(l_1-p_i)^2}\Big)\cdot \log^2\Bigl(\frac{1}{a}\Bigr)
\end{equation}
as desired.

\subsection{Proof of \texorpdfstring{\lem{vtaa_vtae_on_vtalgo}}{Lemma \ref{lem:vtaa_vtae_on_vtalgo}}}
We set the approximation parameter in $\A$ to be $\alpha = c\delta$ for some constant $c < 0.05$ to be determined later. Then $\alpha<0.05$.

We apply VTAA (\thm{vtaa_vtae_append}) on $\A$. This gives an algorithm that outputs a state $\ket{\psi_{\mathrm{succ}}}$ that is $(\frac{\alpha}{n} = \frac{c\delta}{n})$-close to the (normalized) state proportional to
\begin{align}
\Pi_F\ket{\psi(\A)} = \frac{1}{\sqrt{n}}\Bigl(\sum_{i \in \Sright}\!\!\ket{i}_I\ket{\coin
 p_i}_B\ket{\psi_i}_{C,P}\ket{1}_F
+\!\!\sum_{i \in \Smiddle}\!\!\alpha_{i,1}\ket{i}_I\ket{\coin p_i}_B\ket{\psi_{i,1}}_{C,P}\ket{1}_F\Bigr)
\end{align}
with success probability at least $1/2$ and a bit indicating success or failure. Now, we repeat the entire procedure $O(\log\frac{1}{\delta})$ times to prepare $\ket{\psi_{\text{succ}}}$ at least once with probability $\geq 1-\delta/2$. Once $\ket{\psi_{\text{succ}}}$ has been successfully prepared, as indicated by the algorithm, we measure its index register $I$. This procedure outputs an arm index in $\Sright \cup \Smiddle$ with probability $\geq(1-\delta/2)\cdot (1-2c\delta/n)$ which is $\geq1-\delta$ for $c\leq 1/4$ sufficiently small. So, as we also need $c<0.05$, we choose $c=0.01$. We call this procedure $\Amplify(\A, \delta)$.

Let us consider the query complexity of $\Amplify(\A, \delta)$. We have
\begin{equation}~\label{eq:vtaa_vtae_tm}
    t_{m+1}  = O\Bigl(\frac{1}{\Delta}\log\Bigl(\frac{1}{a}\Bigr)\Bigr) = O\Bigl(\frac{1}{\Delta}\log\Bigl(n\log\Bigl(\frac{1}{\Delta}\Bigr)\Bigr)\Bigr)
\end{equation}
because $a = \frac{\alpha}{2 (\ceil{\log(1/\Delta)}+2)n^{3/2}}$ by definition. We also have
\begin{equation}
    \psucc \geq  \psucc'-\frac{2\alpha}{n}\geq \frac{\abs{\Sright}}{n}-\frac{0.1}{n}> \frac{\abs{\Sright}}{2n},
\end{equation}
where we used the assumption $\abs{\Sright}>0$ for the last inequality. Lastly, $t_{\mathrm{avg}}^2$ is of order given in \eq{t_avg} (reproduced in \eq{t_avg_appendix} above). Therefore, substituting all these bounds into \eq{vtaa_complexity} of \thm{vtaa_vtae_append}, we see that $\Amplify(\A, \delta)$ has query complexity of order
\begin{equation}\label{eq:detailed_vtaa_query_general}
\biggl(\frac{1}{\Delta^2} + \frac{1}{\abs{\Sright}}\sum_{\Sleft\cup \Smiddle} \frac{1}{(l_1-p_i)^2}\biggr)\cdot \log\Bigl({\frac{n}{\delta}\log{\frac{1}{\Delta}}}\Bigr)\cdot \log\Bigl(\frac{1}{\Delta}
    \log\Bigl(\frac{n}{\delta}\log\Bigl(\frac{1}{\Delta}\Bigr)\Bigr)\Bigr)\cdot \log\Bigl(\frac{1}{\delta}\Bigr).
\end{equation}

We also apply VTAE (\thm{vtaa_vtae_append}) with multiplicative accuracy $\epsilon$ and confidence $\delta$ on $\A$. This gives an algorithm, $\Estimate(\A,\epsilon,\delta)$, that outputs an estimate $r$ of $\psucc$ with multiplicative accuracy $\epsilon$ (i.e., $\abs{r-\psucc}<\epsilon \psucc$) with probability $\geq 1-\delta$. Combining $\abs{r-\psucc}<\epsilon \psucc$ with $\abs{\psucc-\psucc'}\leq \frac{2\alpha}{n} < \frac{0.1}{n}$ gives
\begin{equation}
(1-\epsilon)\Bigl(\psucc'-\frac{0.1}{n}\Bigr) < r < (1+\epsilon)\Bigl(\psucc' + \frac{0.1}{n}\Bigr)
\end{equation}
as claimed.

The query complexity of $\Estimate(\A,\epsilon,\delta)$ is given by \eq{detailed_vtaa_query_general} times
\begin{equation}\label{eq:detailed_vtae_query_general}
   \frac{1}{\epsilon} \log^2(t_{m+1})\log\Bigl(\log\Bigl(\frac{t_{m+1}}{\delta}\Bigr)\Bigr) =   O\Bigl(\frac{1}{\epsilon}\, \poly\Bigl(\log\Bigl(\frac{n}{\delta \Delta}\Bigr)\Bigr)\Bigr)
\end{equation}
according to \thm{vtaa_vtae_append} and \eq{vtaa_vtae_tm}.

\subsection{Proof of \texorpdfstring{\lem{locate}}{Lemma \ref{lem:locate}}}

From the first claim of \lem{shrink}, we see that the probability of $E^c$ is at most
$\frac{\delta}{4}\sum_{i=0}^{\infty}2^{-i} = \delta/2$, where the geometric series arises because of \lin{locate_delta_halve}. Henceforth, we assume $E$.

Consider the first claim. For given intervals $I_2$, $I_1$, let us write
\begin{equation}
    \gap(I_2,I_1) \coloneqq \min I_1 - \max I_2.
\end{equation}
At the end of iteration $i\geq 1$ (i.e., after \lin{locate_delta_halve}), we have $|I_k| = (3/5)^i$ by the first claim of \lem{shrink}. At the end of iteration $\ceil{\log_{5/3}(\frac{1}{\Delta_2})}+3$, we have $\abs{I_k}< \Delta_2/4$, so $\gap(I_2,I_1) > \Delta_2 - 2\Delta_2/4 = \Delta_2/2 > 2\abs{I_1}$ because $p_k\in I_k$. Therefore the while loop must break at this point if it has not done so earlier. For the returned $I_k$, we clearly have $p_k\in I_k$ because $E$ holds, and $ \gap(I_2,I_1) > 2\abs{I_1}$ because the while loop has broken. During the while loop, because $\abs{I_k}$ decreases from iteration to iteration, we always have $\abs{I_k} \geq  (3/5)^{\ceil{\log_{5/3}(\Delta_2^{-1})}+3}\geq \Delta_2/8$. Note that $\abs{I_1}=\abs{I_2}$ because, at each iteration of the while loop, the $\Shrink$ subroutine always shrinks intervals by the same factor of $3/5$ and $\abs{I_1}=\abs{I_2}=1$ initially.

Now, consider the second claim. From the first claim, we know that the while loop breaks at or before the end of iteration $\ceil{\log_{5/3}(\Delta_2^{-1}))}+3$, and we always have
$1/\delta_i = O(2^{\log_{5/3}({\Delta}_2^{-1})}/\delta) = O(\Delta_2^{-2}/\delta)$, where $\delta_i = \delta/2^{2+i}$ is the confidence parameter in $\Shrink$ at iteration $i$. Therefore, using the second claim of \lem{shrink}, the total number of queries used is at most
\begin{equation}\label{eq:bestarm_complexity}
    O(\log(\Delta_2^{-1})) \cdot O\Bigl(\sqrt{H}\cdot \poly\Bigl(\log\Bigl(\frac{n}{\Delta_2}\cdot \frac{\Delta_2^{-2}}{\delta}\Bigr)\Bigr)\Bigr),
\end{equation}
which is $O\bigl(\sqrt{H}\cdot \poly(\log(\frac{n}{\delta \,\Delta_2}))\bigr)$ as desired.

\subsection{Proof of \texorpdfstring{\lem{shrink}}{Lemma\ref{lem:shrink}}}

Throughout, we fix $k\in \{0,1\}$.

For the first claim, it is clear that $\abs{J}=3\abs{I}/5$ because all the intervals appearing in Lines~\ref{lin:shrink_switch_case1}--\ref{lin:shrink_switch_case4} have length $3\epsilon$. Our proof that $p_k\in J$ with high probability is similar to that in Section 4 of~\citet{lintong} so we only present a brief sketch below.

Let us write $x_j = a + j \epsilon$ for $j=0,\ldots,5$, so that $x_0 = a$ and $x_5 = b$. Let $E$ be the event that both $\Estimate$s in Lines~\ref{lin:shrink_estimate1} and \ref{lin:shrink_estimate2} return the correct result. The probability of $E^{c}$ is at most $\delta$ so we restrict to the case of $E$ in the following paragraph.

For $j\in \{1,2\}$, we can use \eq{vtae_estimate_quality} in \lem{vtaa_vtae_on_vtalgo} to see that if $p_k \leq x_j$, then $B_j = 0$ because $r_j\leq (1+0.1)(\frac{k}{n+1}+\frac{0.1}{n+1}) < \frac{k+0.5}{n+1}$, whereas if $p_k \geq x_{j+2}$, then $B_j = 1$ because $r_j\geq (1-0.1)(\frac{k+1}{n+1}-\frac{0.1}{n+1}) > \frac{k+0.5}{n+1}$. Here we use the fact $k\in \{1,2\}$. By considering the contrapositive of the previous two if-then statements, we establish the first claim.

For more details, we refer the reader to Section 4 of~\citet{lintong}, in particular its Table~2 and Algorithm~1. Note that in the case of $(B_1,B_2) = (0,1)$, we could have shrunk the interval to $[a+2\epsilon, a+3\epsilon]$ and still maintained $p_k\in J$, as is done in~\citet{lintong}. However, it is important for us to keep the shrinkage factor ($3/5$) the same in all cases because we use this to prove correctness in \lem{locate}.

We now prove the second claim. Since we run $\Estimate$ with constant multiplicative error $\epsilon = 0.1$, its query complexity is of order \eq{detailed_vtaa_query_general}, which is
\begin{equation}
    \frac{1}{\Delta^2} + \frac{1}{\abs{\Sright}}\sum_{i \in \Sleft\cup \Smiddle} \frac{1}{(l_1-p_i)^2}
\end{equation}
up to polylog factors, where we recall that $\Delta = l_1-l_2$. In addition, we recall
\begin{equation}
    \Sleft \cup \Smiddle = \{i: p_i < l_1 - \Delta/8\}
\end{equation}
from \eq{Smiddle} and \eq{Sright}. Note that $\abs{\Sright}>0$ because we appended an arm with bias $p_0=1$.

By assumption, $\abs{I}\geq \Delta_2/8$. So, in view of Lines~\ref{lin:shrink_vtalgo1} and \ref{lin:shrink_vtalgo2}, we have $\Delta = 2\epsilon = 2\abs{I}/5 \geq \Delta_2/20$. Therefore $1/\Delta^2 = O(1/\Delta_2^2)$.

We also need to compare $p_1-p_i$ with $l_1-p_i$ for $i\in \Sleft \cup \Smiddle$. By definition, we have $p_i < l_1 - \Delta/8$, so $l_1-p_i > \Delta/8$. Note that we also have $\abs{p_k-l_1} \leq \abs{I} = 5\Delta/2$ because $p_k\in I$ by assumption and $l_1\in I$ by definition. If $k=1$, this says $\abs{p_1-l_1} \leq 5\Delta/2$. If $k=2$, this says $\abs{p_2-l_1} \leq 5\Delta/2$, but we can still bound
\begin{equation}
    \abs{p_1-l_1}\leq \Delta_2+\abs{p_2-l_1} \leq 20\Delta+5\Delta/2 < 25\Delta.
\end{equation}
So regardless of whether $k=1$ or $k=2$, we have that $\abs{p_1-l_1}<25\Delta$. Therefore
\begin{equation}
    \frac{p_1-p_i}{l_1-p_i} = 1 + \frac{p_1-l_1}{l_1-p_i} < 1 + \frac{25\Delta}{\Delta/8} = 201,
\end{equation}
and so $1/(l_1-p_i)^2 = O(1/(p_1-p_i)^2)$. Hence we have established the second claim.

%%%%%%%%%%%%%%%%%%%%%%%%%%%%%%%%%%%%%%%%%%%%%%%%%%%%%%

\section{Corollaries for the  Fixed-budget Setting}\label{append:pac_fixed_budget}
As mentioned near the end of the main body, by using a reduction similar to that from Monte Carlo to Las Vegas algorithms, we can construct a fixed-budget algorithm from our fixed-confidence one. For completeness, we state and prove the following result:

\begin{Lemma}[Reduction to fixed confidence]\label{lem:fixed_confidence_reduction}
Let $\mathcal{O}$ be a multi-armed bandit oracle. Suppose that for any $\delta\in(0,1)$, we have an algorithm  $\mathcal{A}_c(\delta)$ that with probability $\geq1-\delta$, terminates before using $T_c(\delta)$ queries to $\mathcal{O}$ and returns the best-arm index $i^*=1$. Suppose that we also know $T_c(\delta)$.
Then, for any positive integer $T$, we can construct an algorithm $\mathcal{A}_b(T)$ that returns $i^*=1$ with probability $\geq \min_{\delta\in(0,1)}\exp\bigl(-\lfloor{T/T_c( \delta)}\rfloor D(\frac{1}{2}\|\delta)\bigr)$ using at most $T$ queries to $\mathcal{O}$, where $D(p\|q)$ is the relative entropy between Bernoulli random variables with bias $p$ and $q$.
\end{Lemma}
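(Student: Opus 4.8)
The plan is to build the fixed-budget algorithm $\mathcal{A}_b(T)$ out of the fixed-confidence algorithm $\mathcal{A}_c$ by running many independent truncated copies of $\mathcal{A}_c$ inside the budget $T$ and taking a majority vote of their outputs --- the familiar ``Las Vegas to Monte Carlo'' idea, with a Chernoff boosting step on top. Concretely, I would fix a confidence level $\delta\in(0,1)$ with $T_c(\delta)\le T$ (the constructed algorithm may choose $\delta$ freely, since $T_c(\delta)$ is assumed known) and set $k\coloneqq\lfloor T/T_c(\delta)\rfloor\ge 1$. Then $\mathcal{A}_b(T)$ runs $k$ independent copies of $\mathcal{A}_c(\delta)$, each with fresh randomness and fresh ancillas, aborting a copy once it has made $T_c(\delta)$ queries if it has not already halted (recording a null answer for an aborted copy), and finally returns whichever arm index was returned by the largest number of copies, breaking ties arbitrarily. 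The total query count is at most $k\,T_c(\delta)\le T$, so the budget is respected; a single copy ($k=1$) is already a valid fixed-budget algorithm, and the repetitions only serve to drive the error down. The optimization over $\delta$ at the end reflects the tension between a stronger per-copy guarantee (small $\delta$, hence large $T_c(\delta)$ and few repetitions) and more repetitions (large $\delta$, weak per-copy guarantee).

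For the analysis, I would call a copy \emph{good} if it both halts within its $T_c(\delta)$-query window and outputs $i^*=1$; by hypothesis this occurs with probability at least $1-\delta$, and because the copies use independent internal randomness, the events ``copy $j$ is good'' are mutually independent Bernoulli trials each with success probability at least $1-\delta$. Let $X$ count the good copies, so $X$ stochastically dominates $\mathrm{Bin}(k,1-\delta)$. If $X>k/2$ then strictly more than half of the copies return $1$, so $1$ is the unique plurality and $\mathcal{A}_b(T)$ outputs it; hence the failure probability is at most $\Pr[X\le k/2]\le\Pr[\mathrm{Bin}(k,1-\delta)\le k/2]$. Applying the relative-entropy form of the Chernoff bound to this lower tail gives, when $1-\delta\ge\tfrac12$,
\[
\Pr\bigl[\mathrm{Bin}(k,1-\delta)\le k/2\bigr]\;\le\;\exp\bigl(-k\,D(\tfrac12\,\|\,1-\delta)\bigr)\;=\;\exp\bigl(-k\,D(\tfrac12\,\|\,\delta)\bigr),
\]
where the last step uses the symmetry $D(\tfrac12\|q)=D(\tfrac12\|1-q)$. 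Substituting $k=\lfloor T/T_c(\delta)\rfloor$ shows that $\mathcal{A}_b(T)$ succeeds with probability at least $1-\exp\bigl(-\lfloor T/T_c(\delta)\rfloor D(\tfrac12\|\delta)\bigr)$ for every such $\delta$, and taking the best $\delta$ gives the claimed bound on the success probability.

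I expect the only genuinely delicate point to be the truncation step: one has to argue that stopping a copy of $\mathcal{A}_c(\delta)$ after exactly $T_c(\delta)$ queries is well defined and does not alter the ``good'' event --- which holds because on that event the copy has already halted before the cutoff, so its recorded answer coincides with the un-truncated one. When $\mathcal{A}_c$ is given as a quantum algorithm this causes no further trouble: the truncation and the majority vote are classical post-processing on the (classically recorded) output of each run, so the copies are genuinely independent and each retains its $\ge 1-\delta$ guarantee. Everything else --- the independence bookkeeping and the Chernoff estimate --- is routine.
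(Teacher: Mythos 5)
Your proposal is correct and takes essentially the same route as the paper: truncate $\mathcal{A}_c(\delta)$ after $T_c(\delta)$ queries (the paper's modified algorithm returning a blank symbol on timeout), run $\lfloor T/T_c(\delta)\rfloor$ independent copies within the budget, take a majority vote, bound the failure probability by the relative-entropy Chernoff bound $\exp\bigl(-\lfloor T/T_c(\delta)\rfloor D(\tfrac12\|\delta)\bigr)$, and optimize over $\delta$. Like the paper's own proof, you interpret the lemma's guarantee as a bound on the \emph{failure} probability (success probability $\geq 1-\min_\delta\exp(\cdots)$), which is clearly the intended reading, and your explicit handling of truncation and independence matches the paper's construction of $\mathcal{A}'_c(\delta)$.
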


\begin{proof}
Since $T_c( \delta)$ is known, consider the modified version of the
fixed-confidence algorithm where the algorithm is forced to halt and
return some blank symbol ``$\bot$'' if the running time exceeds $T_c(\delta)$.
We refer to the modified algorithm as  $A^\prime_c(\delta)$. $A'_c(\delta)$ returns the best-arm index $i^*=1$ with probability $\geq1 - \delta$ and returns some symbol in $\{2, \ldots, n, \bot\}$ with probability $\leq\delta$.

For any $T$, we construct $\mathcal{A}_b(T)$ as follows. Pick some $\delta\in (0,1)$, run $A'_c(\delta)$ $m \coloneqq \lfloor{T/T_c(\delta)}\rfloor$ times, and take a majority vote over the outcomes. The failure probability can be upper bounded by the probability that $i^*$ is observed fewer than $m/2$ times. The Chernoff bound upper bounds the latter probability by $\exp(-m D(\frac{1}{2}\|\delta)) = \exp\bigl(-\lfloor{T/T_c( \delta)}\rfloor D(\frac{1}{2}\|\delta)\bigr)$. But $\delta$ was arbitrary, so we can take the $\delta$ that minimizes this upper bound.
\end{proof}

As a direct corollary of \thm{quantum-BAI-confidence} and \lem{fixed_confidence_reduction}, we see that when $H$ (therefore $T_c$) is known in advance, for sufficiently large $T$, there is a quantum algorithm using at most $T$ queries that returns the best arm with probability  $\geq 1 - \exp(-\Omega(T/\sqrt{H}))$.

%%%%%%%%%%%%%%%%%%%%%%%%%%%%%%%%%%%%%%%%%%%%%%%%%%%%%%%%%%%%%%%%%%%%%%%%%%%%%%

\section{Proof Details of the Quantum Lower Bound}\label{append:lower}

\subsection{Proof of \texorpdfstring{\thm{quantum-BAI-confidence-lower}}{Theorem \ref{thm:quantum-BAI-confidence-lower}}}

For convenience, we reproduce the statement of the result:
\quantumlower*

\begin{proof}
We use the adversary method and consider the following $n$ different multi-armed bandit oracles.

In the $1^\text{st}$ bandit, we assign bias $p_i$ to arm $i$. Let $\eta>0$ be a constant to be determined later. In the $x^\text{th}$ bandit, $x\in \{2,\ldots,n\}$, we assign bias $p_1' \coloneqq p_1+\eta$ to arm $x$ and $p_i$ to arm $i$ for all $i\neq x$. A best-arm identification algorithm must output arm $x$ on assignment $x$ for all $x\in [n]$ with probability $\geq1-\delta$.

Following the adversary method, we consider the sum
\begin{equation}
s_k \coloneqq \sum_{x > 1} \frac{1}{\Delta_x'^2}\braket{\psi_x^{(k)}|\psi_1^{(k)}}
\end{equation}
for $x \in [n]$, where $\Delta_x'\coloneqq p_1'-p_x$. Clearly
\begin{equation}\label{eq:start_delta_ps}
s_0 = \sum_{x>1}\frac{1}{\Delta_x'^2}.
\end{equation}
We also have
\begin{equation}\label{eq:end_delta_ps}
s_T \leq \sum_{x>1}\frac{1}{\Delta_x'^2}\cdot 2\sqrt{\delta(1-\delta)}.
\end{equation}

Next, we bound the difference $\abs{s_{k+1}-s_{k}}$. For $i>1$, we let
\begin{equation}
\renewcommand\arraystretch{1.5}
A_i \coloneqq \begin{pmatrix}
\sqrt{1-p_i} & \sqrt{\vphantom{1-}p_i}\\
\sqrt{\vphantom{1-}p_i} & - \sqrt{1-p_i}
\end{pmatrix},
\end{equation}
while
\begin{equation}
\renewcommand\arraystretch{1.5}
A_1 \coloneqq \begin{pmatrix}
\sqrt{1-p_1'} & \sqrt{\vphantom{1-}p_1'}\\
\sqrt{\vphantom{1-}p_1'} & - \sqrt{1-p_1'}
\end{pmatrix},
\end{equation}
where we recall $p_1' = p_1+\eta$ by definition.

Now, let us write
\begin{equation}
\ket{\psi_x^{(k)}} = \sum_{z,i,b} \alpha_{x,z,i,b}\ket{z,i,b},\quad         \ket{\psi_1^{(k)}} = \sum_{z,i,b} \alpha_{1,z,i,b}\ket{z,i,b}.
\end{equation}
Then
\begin{equation}\label{eq:x_step}
\ket{\psi_x^{(k+1)}} = \mathcal{O}_x\ket{\psi_x^{(k)}}
= \sum_{z,b} \alpha_{x,z,x,b}\ket{z,x}A_1\ket{b} + \sum_{i\neq x}\sum_{z,b}\alpha_{x,z,i,b}\ket{z,i}A_i\ket{b}
\end{equation}
and similarly
\begin{equation}\label{eq:1_step}
\ket{\psi_1^{(k+1)}} = \mathcal{O}_1\ket{\psi_1^{(k)}}= \sum_{z,b} \alpha_{1,z,x,b}\ket{z,x}A_x\ket{b} + \sum_{i\neq x}\sum_{z,b}\alpha_{1,z,i,b}\ket{z,i}A_i\ket{b}.
\end{equation}
Then
\begin{equation}\label{eq:progress_step_initial_ps}
    \abs{s_{k+1} - s_k} \leq  \sum_{x>1} \frac{1}{\Delta_x'^2}\abs{ \bra{\psi_x^{(k)}}\mathcal{O}_x^{\dagger}\mathcal{O}_1\ket{\psi_1^{(k)}}
    -
    \braket{\psi_x^{(k)}|\psi_1^{(k)}}}.
\end{equation}

Using \eq{x_step} and \eq{1_step}, and after cancellations, we find that
\begin{equation}\label{eq:progress_step_ps}
\begin{aligned}
    \bra{\psi_x^{(k)}}\mathcal{O}_x^{\dagger}\mathcal{O}_1\ket{\psi_1^{(k)}} -  \braket{\psi_x^{(k)}|\psi_1^{(k)}}&=
    \sum_{z,b,b'} \alpha_{x,z,x,b}^*\alpha_{1,z,x,b'}\bra{b}(A_1^{\dagger}A_x-\mathbb{I})\ket{b'}.
\end{aligned}
\end{equation}
With
\begin{equation}
\begin{aligned}
\begin{pmatrix}
u_x & v_x\\
-v_x & u_x
\end{pmatrix} &\coloneqq
A_1^{\dagger}A_x - \mathbb{I} \\
&= \renewcommand\arraystretch{1.75}
\begin{pmatrix}
\sqrt{(1-p_1')(1-p_x)}+\sqrt{p_1'p_x}-1 & \sqrt{(1-p_1')p_x}-\sqrt{p_1'(1-p_x)}\\
-\sqrt{(1-p_1')p_x} + \sqrt{p_1'(1-p_x)} & \sqrt{(1-p_1')(1-p_x)}+\sqrt{p_1'p_x}-1
\end{pmatrix},
\end{aligned}
\end{equation}
\renewcommand\arraystretch{1}
we have
\begin{equation}\label{eq:progress_step_pre_cauchy_ps}
\begin{aligned}
    \abs{s_{k+1} - s_k} &\leq \sum_{x>1}\sum_{z,b}\frac{\abs{u_x}}{\Delta_x'^2}\abs{\alpha_{x,z,x,b}}\abs{\alpha_{1,z,x,b}} + \sum_{x>1}\sum_{z,b\neq b'}\frac{\abs{v_x}}{\Delta_x'^2}\abs{\alpha_{x,z,x,b}}\abs{\alpha_{1,z,x,b'}}.
\end{aligned}
\end{equation}

Clearly, $|u_x| = 1- \sqrt{(1-p_1')(1-p_x)} - \sqrt{p_1'p_x}\leq 1-(1-p_1') - p_x = p_1'-p_x = \Delta_x'$. It can also be seen that $|v_x| \leq \Delta_x'/c(p-\eta)$, where $c(x) \coloneqq 2\sqrt{x(1-x)}$ is a monotone increasing function when $x\in [0,1/2]$. For completeness, we prove the latter inequality as an auxiliary \lem{inequality} immediately after this proof.

We can establish the following bounds using the Cauchy-Schwarz inequality:
\begin{equation}
\begin{aligned}
    \sum_{x>1}  \sum_{z,b}\frac{\abs{u_x}}{\Delta_x'^2}\abs{\alpha_{x,z,x,b}}\abs{\alpha_{1,z,x,b}}
    &\leq \sqrt{\sum_{x>1,z,b} \frac{\abs{u_x}^2}{\Delta_x'^4}\abs{\alpha_{x,z,x,b}}^2} \cdot
    \sqrt{\sum_{x>1,z,b} \abs{\alpha_{1,z,x,b}}^2}\\
    &\leq \sqrt{\sum_{x>1} \frac{1}{\Delta_x'^2}}
\end{aligned}
\end{equation}
and
\begin{equation}
\begin{aligned}
    \sum_{x>1} \sum_{z,b\neq b'}\frac{\abs{v_x}}{\Delta_x'^2}\abs{\alpha_{x,z,x,b}}\abs{\alpha_{1,z,x,b'}} &= \sum_{b\neq b'}\sum_{x>1, z} \frac{\abs{v_x}}{\Delta_x'^2}\abs{\alpha_{x,z,x,b}}\abs{\alpha_{1,z,x,b'}}\\
    &\leq \sum_{b\neq b'}\sqrt{\sum_{x>1, z} \frac{\abs{v_x}^2}{\Delta_x'^4}\abs{\alpha_{x,z,x,b}}^2}\cdot
    \sqrt{\sum_{x>1, z} \abs{\alpha_{1,z,x,b'}}^2}\\
    &\leq \frac{2}{c(p-\eta)} \sqrt{\sum_{x>1} \frac{1}{\Delta_x'^2}}.
\end{aligned}
\end{equation}
Therefore, we find that
\begin{equation}\label{eq:progress_delta_ps}
\abs{s_{k+1}-s_k} \leq \biggl(1+ \frac{2}{c(p-\eta)}\biggr)\sqrt{\sum_{x>1} \frac{1}{\Delta_x'^2}}.
\end{equation}
Hence, from Eqs.~\eqref{eq:start_delta_ps}, \eqref{eq:end_delta_ps}, and \eqref{eq:progress_delta_ps}, we find that
\begin{equation}
T \geq \frac{1-2\sqrt{\delta(1-\delta)}}{1+2/c(p-\eta)} \sqrt{\sum_{x>1}\frac{1}{\Delta_x'^2}}.
\end{equation}

We then set $\eta = p(p_1-p_2)/2$. Now, it can be seen that
\begin{equation}
c(p-\eta) = c\Bigl(\Bigl(1-\frac{p_{1}-p_{2}}{2}\Bigr)p\Bigr)\geq c(p/2)
\end{equation}
because $p\leq 1/2$ and $p_{1}-p_{2}\leq 1$. Moreover,  for $x>1$,
\begin{equation}
\Delta_x' = p_1+\eta-p_x = \frac{p}{2}(p_{1}-p_{2})+(p_1-p_x) \leq \Bigl(1+\frac{p}{2}\Bigr)(p_1-p_x) \leq \frac{5}{4}\Delta_x
\end{equation}
because $p_{x}\leq p_{2}$ and $p\leq 1/2$. Therefore, we find that
\begin{equation}
T \geq \frac{4}{5}\cdot \frac{1-2\sqrt{\delta(1-\delta)}}{1+2/c(p/2)} \sqrt{\sum_{x>1}\frac{1}{\Delta_x^2}},
\end{equation}
and hence $T = \Omega\bigl(\sqrt{\sum_{i=2}^n\frac{1}{\Delta_i^2}}\bigr)$.
\end{proof}

\begin{Lemma}\label{lem:inequality}
Suppose that $p_{1},p_{2}\in [p,1-p]$ where $0<p\leq1/2$. Then
\begin{align}
|\sqrt{(1-p_{1})p_{2}}-\sqrt{(1-p_{2})p_{1}}|\leq\frac{|p_{1}-p_{2}|}{2\sqrt{p(1-p)}},
\end{align}
and the term in the denominator is optimal.
\end{Lemma}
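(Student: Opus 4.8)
The plan is to rationalize the difference of square roots and then reduce to a one-line trigonometric estimate. First I would set $a \coloneqq \sqrt{(1-p_1)p_2}$ and $b \coloneqq \sqrt{(1-p_2)p_1}$, both strictly positive since $p_1,p_2\in[p,1-p]$ with $p>0$, and observe the cancellation $a^2-b^2 = (1-p_1)p_2-(1-p_2)p_1 = p_2-p_1$. Multiplying and dividing $a-b$ by $a+b$ then yields the exact identity
\[
\bigl|\sqrt{(1-p_1)p_2}-\sqrt{(1-p_2)p_1}\bigr| \;=\; \frac{|p_1-p_2|}{\sqrt{(1-p_1)p_2}+\sqrt{(1-p_2)p_1}}.
\]
Thus the inequality is equivalent to the lower bound $\sqrt{(1-p_1)p_2}+\sqrt{(1-p_2)p_1} \ge 2\sqrt{p(1-p)}$ valid for all $p_1,p_2\in[p,1-p]$, and the optimality of the denominator will amount to showing this lower bound is attained in the limit.

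To prove the lower bound I would substitute $p_i = \sin^2\theta_i$ with $\theta_i\in[0,\tfrac{\pi}{2}]$, so that $\sqrt{(1-p_1)p_2}+\sqrt{(1-p_2)p_1} = \cos\theta_1\sin\theta_2+\sin\theta_1\cos\theta_2 = \sin(\theta_1+\theta_2)$. Writing $\theta_0 \coloneqq \arcsin\sqrt{p}\in(0,\tfrac{\pi}{4}]$ (using $p\le 1/2$), the constraint $p_i\in[p,1-p]$ becomes $\theta_i\in[\theta_0,\tfrac{\pi}{2}-\theta_0]$, hence $\theta_1+\theta_2\in[2\theta_0,\pi-2\theta_0]\subseteq[0,\pi]$. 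On $[0,\pi]$ the function $\sin$ is concave, so over the subinterval it attains its minimum at an endpoint; both endpoints give the common value $\sin(2\theta_0) = 2\sin\theta_0\cos\theta_0 = 2\sqrt{p(1-p)}$, which is the claimed bound. For optimality, taking $p_1=p$ and $p_2\to p^+$ makes the denominator in the displayed identity tend to $\sqrt{(1-p)p}+\sqrt{(1-p)p} = 2\sqrt{p(1-p)}$, so the ratio of the two sides of the lemma tends to $1/(2\sqrt{p(1-p)})$; hence no constant larger than $2\sqrt{p(1-p)}$ can be placed in the denominator.

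I expect the only substantive step to be the reduction to $\sin(\theta_1+\theta_2)\ge 2\sqrt{p(1-p)}$; once the substitution $p_i=\sin^2\theta_i$ is in place this is immediate, and the only care required is in tracking how the box constraint $p_i\in[p,1-p]$ transforms into a constraint on $\theta_1+\theta_2$ and in evaluating $\sin$ at the endpoints. Should one wish to avoid trigonometry entirely, the same bound follows by noting that $\sqrt{(1-p_1)p_2}+\sqrt{(1-p_2)p_1}$ is concave in $p_1$ and in $p_2$ separately --- its partial derivative in each variable is monotone --- so its minimum over $[p,1-p]^2$ is attained at a corner, where it equals either $2\sqrt{p(1-p)}$ or $1\ge 2\sqrt{p(1-p)}$.
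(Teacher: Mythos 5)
Your proposal is correct, and its skeleton matches the paper's: both rationalize the difference to get the exact identity $|\sqrt{(1-p_1)p_2}-\sqrt{(1-p_2)p_1}| = |p_1-p_2|/(\sqrt{(1-p_1)p_2}+\sqrt{(1-p_2)p_1})$, reduce to the lower bound $\sqrt{(1-p_1)p_2}+\sqrt{(1-p_2)p_1}\ge 2\sqrt{p(1-p)}$, and observe tightness near $p_1=p_2=p$. Where you diverge is in how that lower bound is proved. The paper squares the left-hand side and combines two algebraic consequences of $p_1,p_2\in[p,1-p]$, namely $p_i-p_i^2\ge p(1-p)$ and $p_1+p_2-2p_1p_2\ge 2p(1-p)$ (the latter from $(2p_1-1)(2p_2-1)\le(1-2p)^2$), to land exactly on $\bigl(2\sqrt{p(1-p)}\bigr)^2$. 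You instead substitute $p_i=\sin^2\theta_i$, so the sum becomes $\sin(\theta_1+\theta_2)$ with $\theta_1+\theta_2\in[2\theta_0,\pi-2\theta_0]$, and invoke concavity of $\sin$ on $[0,\pi]$ to place the minimum at an endpoint, where it equals $\sin(2\theta_0)=2\sqrt{p(1-p)}$; your fallback via separate concavity in $p_1$ and $p_2$ and checking the four corners of $[p,1-p]^2$ is also sound (the corners give $2\sqrt{p(1-p)}$ or $1$). Your route makes the extremal structure transparent — it shows at a glance where the minimum sits and why the constant is optimal — while the paper's argument is purely algebraic and avoids any appeal to concavity or trigonometric parametrization; both are complete, and your optimality argument (letting $p_2\to p$ with $p_1=p$ so the ratio approaches $1/(2\sqrt{p(1-p)})$) is if anything slightly more explicit than the paper's remark that $p_1=p_2=p$ saturates the denominator bound.
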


\begin{proof}
Note that
\begin{align}
\sqrt{(1-p_{1})p_{2}}-\sqrt{(1-p_{2})p_{1}}&=\frac{(1-p_{1})p_{2}-(1-p_{2})p_{1}}{\sqrt{(1-p_{1})p_{2}}+\sqrt{(1-p_{2})p_{1}}}\\
&=\frac{-(p_{1}-p_{2})}{\sqrt{(1-p_{1})p_{2}}+\sqrt{(1-p_{2})p_{1}}}.
\end{align}
Therefore, it suffices to prove
\begin{align}\label{eq:inequality-1}
\sqrt{(1-p_{1})p_{2}}+\sqrt{(1-p_{2})p_{1}}\geq 2\sqrt{p(1-p)}.
\end{align}
Since $p_{1},p_{2}\in [p,1-p]$, we have
\begin{align}
(p_{1}-p)(p_{1}-(1-p)) &\leq 0 \label{eq:bnd1} \\
(p_{2}-p)(p_{2}-(1-p)) &\leq 0 \label{eq:bnd2} \\
|2p_{1}-1| &\leq 1-2p \label{eq:bnd3} \\
|2p_{2}-1| &\leq 1-2p. \label{eq:bnd4}
\end{align}
Eqs.~\eqref{eq:bnd1} and \eqref{eq:bnd2} are equivalent to
\begin{align}\label{eq:inequality-2}
p_{1}-p_{1}^{2}\geq p(1-p),\quad p_{2}-p_{2}^{2}\geq p(1-p).
\end{align}
Eqs.~\eqref{eq:bnd3} and \eqref{eq:bnd4} imply
\begin{align}
4p_{1}p_{2}-2p_{1}-2p_{2}+1=(2p_{1}-1)(2p_{2}-1)\leq (2p-1)^{2}=4p^{2}-4p+1,
\end{align}
which gives
\begin{align}\label{eq:inequality-3}
p_{1}+p_{2}-2p_{1}p_{2}\geq 2p-2p^{2}.
\end{align}

Now, we have
\begin{align}
\Bigl(\sqrt{(1-p_{1})p_{2}}+\sqrt{(1-p_{2})p_{1}}\Bigr)^{2}&=(1-p_{1})p_{2}+(1-p_{2})p_{1}+2\sqrt{(1-p_{1})p_{2}(1-p_{2})p_{1}} \\
&=p_{1}+p_{2}-2p_{1}p_{2}+2\sqrt{p_{1}(1-p_{1})}\sqrt{p_{2}(1-p_{2})} \\
&\geq 2p-2p^{2}+2p(1-p)=(2\sqrt{p(1-p)})^{2},
\end{align}
where the inequality comes from \eq{inequality-2} and \eq{inequality-3}. Therefore, we have established \eq{inequality-1}. Note that this is optimal as taking $p_{1}=p_{2}=p$ makes the two sides in \eq{inequality-1} equal.
\end{proof}

\end{document}